\newcommand{\short}[1]{}\newcommand{\full}[1]{#1}
\theoremstyle{definition}
\newcommand{\Vcat}{\mathcal{V}\textrm{-}\mathsf{Cat}}
\newcommand{\Vgraph}{\mathcal{V}\textrm{-}\mathsf{Graph}}
\newcommand{\Vrel}{\mathcal{V}\textrm{-}\mathsf{Rel}}
\newcommand{\Vspred}{\mathcal{V}\textrm{-}\mathsf{SPred}}
\newcommand{\Set}{\mathsf{Set}}
\newcommand{\Pos}{\mathsf{Pos}}
\newcommand\VV{\mathcal{V}}
\newcommand{\ev}{\mathit{ev}}
\newcommand{\id}{\mathrm{id}}
\newcommand{\Id}{\mathrm{Id}}
\newcommand{\op}{\text{op}}
\newcommand{\pfun}{\mathcal{P}}
\newcommand{\dfun}{\mathcal{D}}
\newcommand{\expect}[1]{\mathbb{E}_{#1}}
\newcommand{\nonexp}[2]{{#1} \to {#2}}
\newcommand{\by}[1]{(\text{#1})}
\newcommand{\constfun}[1]{C_{#1}}
\newcommand{\vle}{\sqsubseteq}
\newcommand{\vge}{\sqsupseteq}
\newcommand{\vJoin}{\bigsqcup}
\newcommand{\vMeet}{\bigsqcap}
\newcommand{\vjoin}{\sqcup}
\newcommand{\vmeet}{\sqcap}
\newcommand{\kant}[1]{K_{#1}}
\newcommand{\twoQ}{{2_\vmeet}}
\newcommand{\unitQ}{{[0,1]_\oplus}}
\newcommand{\zeroinfQ}{{[0,\infty]_+}}
\newcommand{\C}{\mathbf{C}}
\newcommand{\supp}{\textrm{supp}}
\newcommand{\sprod}{\textstyle\prod}
\author{Keri D'Angelo}{Cornell University, USA}{kd349@cornell.edu}{https://orcid.org/0000-0002-8812-9612}{}
\author{Sebastian Gurke}{Universit\"at Duisburg-Essen, Germany}{sebastian.gurke@uni-due.de}{https://orcid.org/0009-0008-4343-1384}{}
\author{Johanna Maria Kirss}{University of Copenhagen, Denmark}{rmg215@alumni.ku.dk}{}{was supported by NSF grant DMS-2216871}
\author{Barbara K\"onig}{Universit\"at Duisburg-Essen, Germany}{barbara_koenig@uni-due.de}{https://orcid.org/0000-0002-4193-2889}{}
\author{Matina Najafi}{Amirkabir University of Technology, Iran}{matn@aut.ac.ir}{}{}
\author{Wojciech R{\'o}{\.z}owski}{University College London, UK}{w.rozowski@cs.ucl.ac.uk}{https://orcid.org/0000-0002-8241-7277}{partially supported by ERC grant Autoprobe (no. 101002697)}
\author{Paul Wild}{FAU Erlangen-N\"urnberg, Germany}{paul.wild@fau.de}{https://orcid.org/0000-0001-9796-9675}{}
\authorrunning{K. D'Angelo, S. Gurke, J. Kirss, B. K\"onig, M. Najafi, W. R{\'o}{\.z}owski and P. Wild}
\title{Behavioural Metrics: Compositionality of the Kantorovich
  Lifting and an Application to Up-To Techniques}
\titlerunning{Compositionality of the Kantorovich Lifting}
\keywords{behavioural metrics, coalgebra, Galois connections,
  quantales, Kantorovich lifting, up-to techniques}
\begin{document}

\maketitle

%\linenumbers Line numbering

\begin{abstract}
  Behavioural distances of transition systems modelled via coalgebras
  for endofunctors generalize traditional notions of behavioural
  equivalence to a quantitative setting, in which states are equipped
  with a measure of how (dis)similar they are.  Endowing transition
  systems with such distances essentially relies on the ability to
  lift functors describing the one-step behavior of the transition
  systems to the category of pseudometric spaces.  We consider the
  category theoretic generalization of the Kantorovich lifting from
  transportation theory to the case of lifting functors to
  quantale-valued relations, which subsumes equivalences, preorders
  and (directed) metrics.  We use tools from fibred category theory,
  which allow one to see the Kantorovich lifting as arising from an
  appropriate fibred adjunction.  Our main contributions are
  compositionality results for the Kantorovich lifting, where we show
  that that the lifting of a composed functor coincides with the
  composition of the liftings. In addition, we describe how to lift
  distributive laws in the case where one of the two functors is
  polynomial (with finite coproducts). These results are essential
  ingredients for adapting up-to-techniques to the case of
  quantale-valued behavioural distances.  Up-to techniques are a
  well-known coinductive technique for efficiently showing lower
  bounds for behavioural distances. We illustrate the results of our
  paper in two case studies.
\end{abstract}

\section{Introduction}

% \todo{B: Some more details are given in the handwritten notes
%   \texttt{Compositionality-Metric-Liftings.pdf}.}

%\emph{Aim:} show under which conditions functor liftings to metric spaces are compositional ($\overline{F\circ G} = \bar{F}\circ \bar{G}$).

%\emph{Applications:} lifting of distributive laws, up-to techniques,

In concurrency theory, behavioural equivalences are a fundamental
concept: they explain whether two states exhibit the same behaviour
and there are efficient techniques for checking behavioural equivalence
\cite{s:bisimulation-coinduction}. More recently, this idea has been
generalized to behavioural metrics that measure the behavioural
distance of two states
\cite{bw:behavioural-pseudometric,afs:linear-branching-metrics,dgjp:metrics-labelled-markov}. This
is in particular interesting for quantitative systems where the
behaviour of two states might be similar but not identical.

There are two dimensions along which notions of behavioural distances
for (quantitative) transition systems can be generalized: first,
instead of only considering metrics that return a real-valued
distance, one can use an arbitrary quantale, yielding the notion of
quantale-valued relations or conformances that subsume equivalences,
preorders and (directed) metrics. Second, one can abstract from the
branching type of the transition system under consideration, using a
functor specifying various forms of branching (non-deterministic,
probabilistic, weighted, etc.). This leads us to a coalgebraic
framework \cite{r:universal-coalgebra} that provides several
techniques for studying and analyzing systems and has also been
adapted to behavioural metrics
\cite{bbkk:coalgebraic-behavioral-metrics}. A coalgebra is a function
$c\colon X\to FX$ where $X$ is a set of states and $F$ is a (set) functor.

For defining (and computing) behavioural conformances in coalgebraic
generality, a fundamental notion is the lifting of a functor $F$ to
$\overline{F}$, acting on conformances. In the case when $F$ is a
distribution functor and the conformances are metrics, there is a
well-known way of obtaining $\overline{F}$ through results from
transportation theory using either Kantorovich or Wasserstein
liftings, which are known to coincide through the so-called
Kantorovich-Rubinstein duality~\cite{v:optimal-transport}. Recent
work~\cite{bbkk:coalgebraic-behavioral-metrics} generalized these two
approaches to lifting functors to the category of pseudometric spaces,
as well as to more general quantale-valued
relations~\cite{bgkm:hennessy-milner-galois,bkp:up-to-behavioural-metrics-fibrations-journal}. It
turns out that at this level of generality, the analogue of the
Kantorovich-Rubinstein duality does not hold in general anymore. In
both the Kantorovich and Wasserstein approaches, given a set $X$ and a
conformance $d$ (preorder, equivalence, metric, \dots) on $X$, the
lifted functor $\overline{F}$ canonically determines a conformance on
$FX$, based on (a set of) evaluation maps. Intuitively, these maps
provide a way of testing the one-step behaviour of the system and
generalize the calculation of the expected value taking place in the
definitions of the Kantorovich/Wasserstein liftings. Combining the lifting
with a subsequent reindexing with the coalgebra $c$ results in a
function whose (greatest) fixpoint is the desired behavioural
conformance. In this paper we focus on directed conformances such as
preorders or directed metrics (also called hemimetrics).

The aim of this paper is twofold: we consider the so-called
Kantorovich lifting of functors
\cite{bbkk:coalgebraic-behavioral-metrics} that -- as opposed to the
Wasserstein lifting \cite{bkp:up-to-behavioural-metrics-fibrations-journal} -- offers some extra flexibility, since
it allows the use of a set of evaluation maps and places fewer restrictions on both the functor and these predicate liftings. We study
compositionality results for the Kantorovich lifting, answering the
question under which conditions the lifting is compositional in the
sense that $\overline{FG} = \overline{F}\,\overline{G}$. This
compositionality result is then an essential ingredient in adapting
up-to techniques to the setting of behavioural metrics based on the
Kantorovich lifting, inspired by the results of
\cite{bkp:up-to-behavioural-metrics-fibrations-journal}, which were
developed for the Wasserstein lifting. Up-to techniques are
coinductive techniques that allow small witnesses for lower
bounds of behavioural distances by exploiting an algebraic structure
on the state space.

We first set up a framework based on Galois connections and fibred
adjunctions, extending
\cite{bgkmfsw:logics-coalgebra-adjunction}. This sets the stage for
the definition of the Kantorovich lifting based on this adjunction.
We next answer the question of compositionality positively for the
case where $F$ is polynomial with finite coproducts, and show several negative results
for combinations of powerset and distribution functor.

The positive compositionality result for the case where $F$ is
polynomial with finite coproducts opens the door to developing up-to techniques for
trace-like behavioural conformances that are computed on determinized
transition systems, or -- more generally -- determinized
coalgebras. More concretely, we consider coalgebras of type
$c\colon X\to FTX$ where $F$ is a finite coproduct polynomial functor, providing the
explicit branching type of the coalgebra, and $T$ is a monad,
describing the implicit branching. For instance, in the case of a
non-deterministic automaton we would use $FX = 2\times X^A$ and
$T = \mathcal{P}$ (powerset functor). There is a well-known
determinization construction
\cite{jss:trace-determinization-journal,sbbr:generalizing-determinization-journal}
that transforms such a coalgebra into $c^\#\colon TX\to FTX$ via a
distributive law $\zeta\colon TF\Rightarrow FT$. This yields a
determinized system $c^\#$ acting on the state space $TX$, which has
an algebraic structure given by the multiplication of the monad.
A behavioural conformance is then computed on $TX$ and passed back to $X$ using the unit of the monad.

As $TX$ might be very large (e.g., in the case of
$T=\mathcal{P}$) or even infinite (e.g., in the case of
$T=\mathcal{D}$, distribution functor), it can be hard to compute
conformances for $c^\#$.
However, the algebraic structure on $TX$ can be fruitfully employed using
up-to techniques
\cite{p:complete-lattices-up-to,bppr:general-coinduction-up-to} that
allow to consider post-fixpoints up to the algebraic structure, making
it much easier to display a witness proving a (lower) bound on the
distance of two states. The validity of the up-to technique rests on a
compatibility property that is ensured whenever the distributive law
$\zeta$ can be lifted, i.e., if it is non-expansive wrt.\ the lifted
conformances. We show that this holds, where an essential step in the
proof relies on the compositionality results.

In this sense, we complement the up-to techniques in
\cite{bkp:up-to-behavioural-metrics-fibrations-journal} that provide a
similar result for the Wasserstein lifting. This enables us to use the
up-to technique for Kantorovich liftings, which are more versatile and allow more control over the distance values,
in particular in the presence of products and coproducts.
Indeed, as Wasserstein liftings are based on couplings, which in general need not exist~\cite{bbkk:coalgebraic-behavioral-metrics}, they often produce trivial distance values.
We will see later in the paper that several of our case studies
can only be treated appropriately in the Kantorovich
case. Furthermore, we can show the lifting of the distributive law for
an entire class of functors (polynomial functors with finite
coproducts), while
\cite{bkp:up-to-behavioural-metrics-fibrations-journal} contained
generic results for a different class of canonical liftings. In the
non-canonical case it was necessary to prove a complex condition
ensuring compositionality in
\cite{bkp:up-to-behavioural-metrics-fibrations-journal}.

We apply our technique to several examples, such as trace metrics for
probabilistic automata and trace semantics for systems with
exceptions. We give concrete instances where up-to techniques help and
show how witnesses yielding upper bounds can be reduced in size or
even become finite (as opposed to infinite).

%\begin{itemize}
% \item explain why behavioural metrics are important, coalgebra and
%   quantalic generality
% \item functor liftings as a cornerstone concept for defining
%   behavioural metrics
% \item setting up a framework based on fibred adjunctions (extending
%   the STACS paper), extension theorem
% \item are functor liftings compositional? Moss liftings
% \item why is it difficult to compute behavioural metrics (in
%   particular for trace semantics, after determinization)? Coalgebras
%   of type $X\to FTX$
% \item use up-to techniques, via lifting distributive laws,
%   compositionality is a key ingredient, , can we handle new examples
%   now? $\leadsto$ Kantorovich lifting gives more flexibility!, works
%   for all polynomial $F$
%\item display several examples where witnesses yielding upper bounds
%  become finite (rather than infinite) or much smaller
%\end{itemize}

\section{Preliminaries}
\label{sec:preliminaries}

We begin by recalling some relevant definitions and fix some notation.

% Let $p\colon\CE\to\CB$ be a functor.
% We say that $p$ is a \emph{fibration} when for every morphism $f\colon X\to Y$ in $\CB$ and every object $R$ above $Y$ (that is, $p(R)=Y$) there exists an object $f^*(R)$ above $X$ and a morphism $\tilde{f}_R\colon f^*(R)\to R$ above $f$ that satisfies the following universal property:
% For every morphism $g\colon Z\to X$ in $\CB$ and every morphism $u\colon Q\to R$ above $f\circ g$ there is a unique morphism $v\colon Q\to f^\ast(R)$ such that $u = \tilde{f}_R \circ v$ and $p(v)=g$.
% \[
%   \begin{tikzcd}[column sep=1.5em, row sep=0.7em]
%     Q\ar[dr,dotted,swap,"{\exists ! v}"]\ar[rrd,bend left,"{\forall u}" description] & &  \\
%     & f^*(R)\ar[r,swap,"{\tilde{f}_R}"] & R \\
%     Z\ar[rd,swap,"{g}"]\ar[rrd,bend left,"{f\circ g}" description] & &\\
%     &X\ar[r,swap,"{f}"] & Y\\
%   \end{tikzcd}
% \]
% %
% For $X$ in $\CB$ we denote by $\CE_X$ the \emph{fibre above $X$}, i.e., the subcategory of $\CE$ with objects mapped by $p$ to
%   $X$ and arrows sitting above the identity on $X$.

% \subparagraph*{Quantales}
As outlined in the introduction, we use quantales as the domain for
behavioural conformances.  A \emph{quantale} is a complete lattice
$(\VV,\vle)$ that is equipped with a commutative monoid structure
$(\VV,\otimes,k)$ (where $k$ is the unit of $\otimes$) such that
$\otimes$ is \emph{join-continuous}, that is,
$a\otimes\vJoin b_i = \vJoin a\otimes b_i$ for each $a\in\VV$ and each
family $b_i$ in $\VV$, where $\vJoin$ denotes least upper bounds.
Join-continuity of $\otimes$ implies that the operation $a\otimes-$
has a right adjoint, which we denote by $d_\VV(a,-)$.  This means that
we have $a\otimes b\vle c \iff b\vle d_\VV(a,c)$ for all
$a,b,c\in\VV$.  The operation $d_\VV$ is called the \emph{residuation}
or \emph{internal hom} of the quantale.  %\todo{P: Check if we ever  need integrality, that is $\top = k$}
\todo{Notation of internal hom? B: I would not do this, because of the
clash with the coproduct notation.}
\begin{toappendix}
  \begin{lemma}
    \label{lem:residuation}
    Residuation (internal hom) in a quantale satisfies the following
    properties for $u,v,w,a,b,c,a_i,b_i\in\VV$:
    \begin{enumerate}
    \item \label{it:residuation-1}
      $d_\VV(v,w) = \bigsqcup \{u\in\VV \mid u\otimes v \vle
      w\}$
    \item \label{it:residuation-2} $d_\VV(v,v) \vge k$ (reflexivity).
    \item \label{it:residuation-3} $d_\VV(u,v) \otimes d_\VV(v,w) \vle d_\VV(u,w)$
      (triangle inequality).
    \item \label{it:residuation-4} $d_\VV(k,w) = w$.
    \item \label{it:residuation-5} $d_\VV(\bot,w) = \top$.
    \item \label{it:residuation-6} $d_\VV(v,\top) = \top$.
    \item \label{it:residuation-7} $d_\VV(\top,\bot) = \bot$.
    \item \label{it:residuation-8} $d_\VV(a,c) \vle d_\VV(d_\VV(b,a),d_\VV(b,c))$
    \item \label{it:residuation-9} $d_\VV(a,c) \vle d_\VV(d_\VV(c,b),d_\VV(a,b))$
    \item \label{it:residuation-10} $\vMeet_{i\in I} d_\VV(a_i,b_i) \vle d_\VV(\vMeet_{i\in I}
      a_i, \vMeet_{i\in I} b_i)$
    \end{enumerate}
  \end{lemma}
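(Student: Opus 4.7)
The central tool is the defining adjunction $a\otimes b\vle c\iff b\vle d_\VV(a,c)$, together with the commutativity of $\otimes$ and the monotonicity of $\otimes$ in either argument (which follows from join-continuity, since $\bigsqcup$ preserves order). The plan is to treat each item essentially as an instance of ``unfold the adjunction, apply monotonicity, re-fold''.

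For item \ref{it:residuation-1}, observe that $d_\VV(v,-)$ being right adjoint to $-\otimes v$ implies it sends $w$ to the greatest element whose tensor with $v$ lies below $w$; the formula follows. Items \ref{it:residuation-2}, \ref{it:residuation-4}, \ref{it:residuation-5}, \ref{it:residuation-6} are then one-line checks using this characterization: for \ref{it:residuation-2} note $k\otimes v=v\vle v$; for \ref{it:residuation-4} note $u\otimes k=u\vle w\iff u\vle w$; for \ref{it:residuation-5} use join-continuity with the empty join to get $u\otimes\bot=\bot\vle w$ for all $u$; for \ref{it:residuation-6} any $u$ satisfies $u\otimes v\vle\top$.

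Items \ref{it:residuation-3}, \ref{it:residuation-8}, \ref{it:residuation-9} follow a common pattern: name the residuals appearing on the left-hand side as constants $\alpha,\beta,\gamma,\dots$, use the adjunction to get the inequalities $\alpha\otimes u\vle v$ etc., then chain them using monotonicity, and finally fold back via the adjunction. For example, for \ref{it:residuation-3} set $\alpha=d_\VV(u,v)$, $\beta=d_\VV(v,w)$ and compute $\beta\otimes\alpha\otimes u\vle\beta\otimes v\vle w$, so $\alpha\otimes\beta\vle d_\VV(u,w)$ by commutativity. Items \ref{it:residuation-8} and \ref{it:residuation-9} are analogous three-line chases. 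Item \ref{it:residuation-10} is similar but with meets: letting $m=\vMeet_i d_\VV(a_i,b_i)$, for each index $i$ one has $m\otimes\vMeet_j a_j\vle m\otimes a_i\vle b_i$, and taking the meet on the right finishes.

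The only item requiring slightly more care is \ref{it:residuation-7}. Here I expect to use the unit $k$ explicitly: if $u\otimes\top\vle\bot$, then by $k\vle\top$ and monotonicity $u=u\otimes k\vle u\otimes\top\vle\bot$, so $u=\bot$, and by \ref{it:residuation-1} this forces $d_\VV(\top,\bot)=\bot$. This is the only step where join-continuity (used to derive monotonicity) interacts non-trivially with the unit law, so I regard it as the most delicate point; the remaining items are routine once the adjunction is fixed.
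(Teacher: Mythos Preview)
Your proposal is correct and follows essentially the same approach as the paper's proof: each item is handled by unfolding the adjunction $a\otimes b\vle c\iff b\vle d_\VV(a,c)$, chaining the resulting inequalities via monotonicity, and folding back. In particular, your treatments of items \ref{it:residuation-3}, \ref{it:residuation-7}, and \ref{it:residuation-10} match the paper's almost line for line; for items \ref{it:residuation-8} and \ref{it:residuation-9} the paper explicitly reduces to the triangle inequality (your item \ref{it:residuation-3}) after one adjunction step, which is exactly the ``three-line chase'' you describe.
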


  \begin{proof}~
    \begin{enumerate}
    \item Residuation is the right adjoint of the associative
      operation $\otimes$.
    \item
      $d_\VV(v,v) = \bigsqcup \{u\in\VV \mid u\otimes v \vle
      v\} \vge k$, since $k\otimes v = v \vle
      v$.
    \item From $d_\VV(v,w)\vle d_\VV(v,w)$ and the adjunction
      property, we obtain $v\otimes d_\VV(v,w) \vle w$ and
      analogously $u\otimes d_\VV(u,v) \vle v$. Hence
      $u\otimes d_\VV(u,v) \otimes d_\VV(v,w) \vle v\otimes
      d_\VV(v,w) \vle w$. This implies
      $d_\VV(u,v) \otimes d_\VV(v,w) \vle d_\VV(u,w)$.
    \item
      $d_\VV(k,w) = \bigsqcup \{u\in\VV \mid u\otimes k \vle
      w\} = \bigsqcup \{u\in\VV \mid u \vle w\} = w$.
    \item
      $d_\VV(\bot,w) = \bigsqcup \{u\in\VV \mid u\otimes \bot
      \vle w\} = \bigsqcup \{u\in\VV \mid \bot \vle w\}
      = \bigsqcup \VV = \top$. This holds, since $\bot$ is the empty
      join and $\otimes$ preserves joins, hence
      $u\otimes \bot = \bot$.
    \item
      $d_\VV(v,\top) = \bigsqcup \{u\in\VV \mid u\otimes v \vle
      \top\} = \bigsqcup \VV = \top$.
    \item
      $d_\VV(\top,\bot) = \bigsqcup \{u\in\VV \mid u\otimes \top
      \vle \bot\} \vle \bigsqcup \{u\in\VV \mid u\otimes
      k \vle \bot\} = \{u\in\VV \mid u \vle \bot\} =
      \bot$, where the inequality holds since $k\vle \top$.
    \item Recall that because of the definition of residuation, we have the following property:
      \[
        d_{\VV}(a,c) \vle d_{\VV}\left(d_{\VV}(b,a), d_{\VV}(b,c)\right) \iff d_{\VV}(a,c) \otimes d_{\VV}(b,a) \vle d_{\VV}(b,c)
      \]
      Hence, it suffices to show the second inequality:
      \begin{align*}
        d_{\VV}(a,c) \otimes d_{\VV}(b,a) &= d_{\VV}(b,a) \otimes d_{\VV}(a,c) \\
%        &\vle \vJoin \{d_{\VV}(b, d) \otimes d_{\VV}(d, c) \mid d \in \VV\} \\
%        &= (d_{\VV} \cdot d_{\VV})(b,c)\\
        &\vle d_{\VV}(b,c) \tag{Triangle inequality,
          \Cref{lem:residuation}\eqref{it:residuation-3}}
      \end{align*}
    \item Symmetric to \eqref{it:residuation-8}
    \item Because of the definition of the residuation, for all $i \in I$, we have that:
      \[
        d_{\VV}(a_i, b_i) \vle d_{\VV}(a_i, b_i) \iff d_{\VV}(a_i,b_i) \otimes a_i \vle b_i
      \]
      Therefore, we have that:
      \begin{align*}
        \vMeet_{i \in I} b_i &\vge \vMeet_{i \in I} d_{\VV}(a_i,b_i) \otimes a_i \\
        &\vge \vMeet_{i \in I} d_{\VV}(a_i, b_i) \otimes \vMeet_{i \in I}a_i
      \end{align*}
      Again, by the definition of residuation the above is equivalent
      to
      $\vMeet_{i\in I} d_\VV(a_i,b_i) \vle d_\VV(\vMeet_{i\in I} a_i,
      \vMeet_{i\in I} b_i)$ as desired, which completes the proof. \qedhere
    \end{enumerate}
  \end{proof}
\end{toappendix}

We work with three main examples of quantales.  The first is the
\emph{Boolean quantale} $\twoQ$, consisting of two elements
$\bot\vle\top$ and with binary meet $\vmeet$ as the monoid structure.
In this quantale, the unit $k$ is $\top$, and residuation is Boolean
implication: $d_\VV(a,b) = a \to b = \neg a \vjoin b$.  The second
main example is the \emph{unit interval quantale} $\unitQ$, where the
underlying lattice is the unit interval $[0,1]$ under the reversed
order (that is, ${\vle} = {\ge}$), and with \emph{truncated addition}
$a \oplus b = \min(a+b, 1)$ as the monoid structure.  In this case,
the unit of $\oplus$ is $0$, and its residuation is \emph{truncated
  subtraction}, which is given by
$d_\VV(a,b) = b \ominus a = \max(b-a, 0)$.  The third main example is
the quantale $\zeroinfQ$ of \emph{extended positive reals}, with
structure analogous to $\unitQ$, i.e.\ with reversed lattice order and
using the extended addition (with~$\infty$) as the monoid operation.
\begin{remark}\label{rem:order-reversal}
  As many of our examples use the real-valued quantales $\unitQ$ and $\zeroinfQ$, where the order is reversed, we reserve the use of the symbols $\ge$ and $\le$ for the usual order in the reals, and instead use $\vle$ and $\vge$ when working with general quantales.
  Similarly, we use $\vJoin$ and $\vMeet$ for joins and meets in the quantalic order, but switch to $\inf$ and $\sup$ when working in $\unitQ$ or $\zeroinfQ$.
\end{remark}
We consider several types of conformances based on a quantale $\VV$.
First, given a set $X$, we may consider $\VV$-valued endorelations on
$X$, that is, maps of type $d\colon X\times X\to\VV$.  We call these
structures \emph{$\VV$-graphs} \cite{Lawvere73}, and write $\Vgraph_X$
for the set of $\VV$-graphs with underlying set $X$.  Each set
$\Vgraph_X$ is a complete lattice where both the order and joins are
pointwise, that is $d \vle d'$ if $d(x,y) \vle d'(x,y)$ for all
$x,y\in X$.  Given two $\VV$-graphs $d_X\in\Vgraph_X$ and
$d_Y\in\Vgraph_Y$ we say that a map $f\colon X\to Y$ is
\emph{non-expansive} or a \emph{$\VV$-functor} if
$d_X \vle d_Y\circ (f\times f)$ in $\Vgraph_X$ and in this case we
write $f\colon (X,d_X)\to
(Y,d_Y)$. %\todo{Use a specific notation such as $\nonexp{(X,d_X)}{(Y,d_Y)}$.}
$\VV$-graphs and non-expansive maps form a category $\Vgraph$.
\begin{remark}
  \label{rem:mat-vs-rel}
  In some parts of the literature, the category $\Vgraph$ is denoted
  by $\Vrel$
  instead~\cite{bkp:up-to-behavioural-metrics-fibrations-journal}.  We
  opt for $\Vgraph$ as in \cite{Lawvere73}, as
  $\Vrel$ more often denotes the category with sets as objects and
  $\VV$-valued relations between them as morphisms~\cite{Hofmann07}.
\end{remark}
Second, within $\Vgraph$ we consider the subcategory consisting of
those $\VV$-graphs that satisfy additional axioms, corresponding to a
generalized notion of a metric space, or, equivalently, (small)
categories enriched over $\VV$~\cite{Lawvere73}.  A
\emph{$\VV$-category} is an object of $\Vgraph_X$ for some set $X$
where we additionally have $k \vle d(x,x)$ and
$d(x,y)\otimes d(y,z) \vle d(x,z)$ for all $x,y,z\in X$.  Instantiated
to the quantales $\twoQ$ and $\unitQ$, $\VV$-categories correspond
precisely to \emph{preorders} and \emph{$1$-bounded hemimetric
  spaces}, respectively.  The quantale $\VV$ itself becomes a
$\VV$-category when equipped with the residuation $d_\VV$.  The class
of all $\VV$-categories is denoted by $\Vcat$, and the set of
$\VV$-categories based on a set $X$ is denoted by $\Vcat_X$. 

In this paper, we use a coalgebraic framework. Recall that a {\em
  coalgebra} is a pair $(X, c)$, where $X$ is
the {\em state space} and $c\colon X \to F X$ is the transition structure parametric
on an endofunctor $F \colon \Set \to \Set$. We also
utilize two specific functors for (counter)examples below.
The \emph{powerset functor} is
defined as $\pfun(X) = \{ U \mid U \subseteq X\}$ on sets
and $\pfun (f)(U) = \{f(x) \mid x \in U\}$ on functions.
The \emph{countably supported distribution functor} is defined as
$\dfun(X) = \left\{ p: X \to [0,1] \mid \sum_{x\in X} p(x)=1,
  \textsf{supp}(p) \text{ is countable} \right\}$ on sets, where
$\textsf{supp}(p)= \{ x \mid p(x)\neq 0\}$, and as
$\dfun(f)(p)(y) = \sum \{p(x) \mid x\in X, f(x) = y\}$ on functions.

Given $f_i\colon X\to Y_i$, $i\in\{1,2\}$, we denote by
$\langle g_1,g_2\rangle\colon X\to Y_1\times Y_2$ the mediating
morphism of the product, namely
$\langle g_1,g_2\rangle(x) = (g_1(x),g_2(x))$. Given
$g_i\colon X_i\to Y$, $i\in\{1,2\}$, $[g_1,g_2]\colon X_1+X_2 \to Y$
is the mediating morphism of the coproduct, namely
$[g_1,g_2](x) = g_i(x)$ if $x\in X_i$.

%\todo[inline]{P: What can we or do we need to say about symmetric distances?}
%\todo[inline]{B: Check that all the constructions in the paper also  work for $\Vgraph$, not just $\Vcat$, or state restrictions  accordingly.}

\section{Motivation from Transportation Theory}
\label{sec:transport}
In this section, we give a brief description of the original
Kantorovich distance on probability distributions, before we introduce
its category theoretic generalization. Motivated by the transportation
problem \cite{v:optimal-transport}, the Kantorovich distance on
probability distributions aims to maximize the transport by finding the optimal
flow of commodities that satisfies demand from supplies and minimizes
the flow cost. In fact, it is based on the dual version of this
problem that asks for the optimal price function. For the sake of the
example, consider a metric space defined on a three element set
$X = \{A,B,C\}$ with a distance function
$d \colon X \times X \to [0,\infty]$, such that
$d(A,A) = d(B,B) = d(C,C) = 0$, $d(A,B)=d(B,A)=3$, $d(A,C)=d(C,A)= 5$
and $d(B,C)=d(C,B)=4$\todo{B: drawing?}.  Based on this distance we
now want to define a distance on probability distributions on the set
$X$, which is a function
$d^\uparrow \colon \mathcal{D}(X) \times \mathcal{D}(X) \to
[0,\infty]$.
% In other words, for all probability distributions
% $P,Q \in \mathcal{D}(X)$ over the set $X$, we need to define a
% distance $d^\uparrow(P,Q)$ between them.
As a concrete example, let us take the distributions $P$ and $Q$, such
that $P(A)=0.7$, $P(B)=0.1$ and $P(C)=0.2$, while $Q(A)=0.2$,
$Q(B)=0.3$ and $Q(C)=0.5$.

\begin{figure}[!h]
    \centering
    \begin{tikzpicture}[scale=0.9] 
        \coordinate (A) at (0, 0);
        \coordinate (B) at (3, 0); 
        \coordinate (C) at (1.5, 2.0); 

        \node[draw, circle] at (A) (A) {A};
        \node[above=0.1cm of A] {0.7};
        \node[below=0.1cm of A] {0.2};
        
        \node[draw, circle] at (B) (B) {B};
        \node[above=0.1cm of B] {0.1};
        \node[below=0.1cm of B] {0.3};

        \node[draw, circle] at (C) (C) {C};
        \node[above=0.1cm of C] {0.2};
        \node[below=0.1cm of C] {0.5};

        \draw (A) -- (B) node[midway, below] {3};
        \draw (A) -- (C) node[midway, left] {5};
        \draw (B) -- (C) node[midway, right] {4};
    \end{tikzpicture}
    \label{fig:fig1}
\end{figure}

In order to define their distance, we can interpret the three elements
$A$, $B$, $C$ as places where a certain product is produced and
consumed (imagine the places are maple syrup farms, each with an
adjacent café where one can eat the pancakes with the aforementioned
syrup). The geographical distance between the maple syrup farms is
given by the distance function $d$, while the above distributions
model the supply ($Q$) and demand ($P$) of the product in proportion
to the total supply or demand. We assume that the total value of
supply is the same as the total value of demand. As the owners of the
farms, we are interested in transporting the product in a way to avoid
excess supply and meet all demands. We can deal with this issue in two
ways: do the transport on our own or find a logistics firm which will
do it for us. The Kantorovich lifting relies on the latter
perspective. We assume that for each place it sets up a price for the
logistics company at which it will buy a unit of our product (at farms
with overproduction) or sell it (at cafés with excess demand). This is
equivalent to giving a function $f \colon X \to [0, \infty]$. We
require that prices are \emph{competitive}, that is for all
$x,y \in X$, we have that $|f(x) - f(y)| \leq d(x,y)$, which is
equivalent to saying that $f$ is a non-expansive function from $d$
into the Euclidean distance $d_e$ on $[0, \infty]$. Given a pricing
$f$, the profit made by the transportation company is given by
$c_f = \sum_{x \in X} f(x)P(x) - \sum_{x \in X}f(x)Q(X)= \sum_{x\in X}
f(x) \left(P(x) - Q(x)\right)$. Because the transportation company
wants to make the most profit, it is aiming to pick a pricing $f$
maximizing the formula given above. Combining all the moving parts
together we are left with formula
$d^\uparrow(P,Q) = \max \{ \sum_{x \in {X}} f(x)
\left(P(X)-Q(X)\right) \mid f \colon (X, d_X) \to ([0,\infty],
d_{e})\}$ defining the Kantorovich distance between probability
distributions $P$ and $Q$.

It is not straightforward to see, but in this example an optimal pricing
function is $f(A)=0$, $f(B)=3$, $f(C)=5$, which can be easily seen to be
non-expansive and yields a distance of $2.1$.

More abstractly, the formula above can be dissected into three pieces:
\begin{enumerate}
\item \label{it:transport-1}Taking all pricing plans $f$, which are
  non-expansive with respect to the Euclidean distance on
  $[0,\infty]$.
\item \label{it:transport-2}Evaluating each of the pricing plans, by calculating the
  expected value of $f$ given a distribution on the set $X$.
\item \label{it:transport-3}Picking a pricing plan which maximizes the
  difference between the expected values.
\end{enumerate}

In the following, we will concentrate on the directed case, where
distance functions can be asymmetric and $d_e$ is the directed
Euclidean distance, that is, $d_e(x,y) = y \ominus x$.

In the category-theoretic
generalization~\cite{bbkk:coalgebraic-behavioral-metrics,bgkm:hennessy-milner-galois}
the calculation of the expected value (step 2) is replaced with (the set
of) evaluation functions, intuitively scoring or testing the
observable behaviour. At the same time, the steps of taking all non-expansive plans for a given metric
and of generating a metric that maximizes the difference between expected
values (steps 1 and 3) generalize to the setting of quantales and
their residuation. In the next section, we show that the generalizations of those two steps
form a fibred adjunction.

\section{Setting Up a Fibred Adjunction}
\label{sec:adjunction}

One of the key aspects of this paper is equipping sets of states of
coalgebras with an extra structure of conformances (in particular
preorders or hemimetrics). The very idea of ``extra structure''
can be phrased formally through the lenses of fibrations and fibred
category theory, extending the ideas
of~\cite{bgkmfsw:logics-coalgebra-adjunction}. In particular, we show
that those results can be strengthened to the setting of fibred
adjunctions.

% One of the key aspects of this paper is equipping sets of states of
% coalgebras with an extra structure of conformances (in particular
% preorders or directed metrics). The very idea of ``extra structure''
% can be phrased formally through the lenses of the fibred category
% theory. While we keep the terminology and background on the fibred
% category theory to an absolute minimum, we refer the interested reader
% to the standard textbook~\cite{j:categorical-logic-type-theory}. The
% reason why we can allow ourselves for such a simplification is that we
% will look at fibrations through the lenses of Grothendieck
% construction for split indexed categories, which in our case can be
% studied concretely. In particular, it will suffice to look at split
% indexed categories being functors $A : \Set^{\op} \to \Pos$, where
% $\Pos$ is the category of posets with monotone maps.  Building on this
% perspective, we extend the ideas
% of~\cite{bgkmfsw:logics-coalgebra-adjunction} seeing Kantorovich
% lifting of functors through the lenses of fibred categories and Galois
% connections. In particular, we show that those results can be
% strengthened to the setting of fibred adjunctions when dealing with
% $\VV$-categories. At the heart of this result, we will rely on the
% generalisation of McShane-Whitney Extension Theorem for (pseudo)metric
% spaces to the setting of $\VV$-categories.

%\todo[inline]{B: Say what is new compared to
%  \cite{bgkmfsw:logics-coalgebra-adjunction}}
%\todo{P: remove fibrations from prelims and merge into \Cref{sec:adjunction}?}

\subparagraph*{The category $\Vspred$.}
\label{sec:Vspred}
The adjunction-based framework
from~\cite{bgkmfsw:logics-coalgebra-adjunction}, besides working with
$\VV$-graphs, makes use of sets of $\VV$-valued predicates, i.e.,
maps of the form $p \colon X \to \VV$. We will use $\Vspred_X$ to
denote the collection of sets of $\VV$-valued predicates over some set
$X$. A morphism between sets $S \subseteq \VV^X$ and
$T \subseteq \VV^Y$ is a function $f \colon X \to Y$ satisfying
$f^{\bullet}(T) := \{q\circ f\mid q\in T\}\subseteq S$, where
$f^\bullet$ describes reindexing. We obtain a category $\Vspred$
with objects being pairs $(X, S \subseteq \VV^X)$ and arrows are
defined as above.
% Given a collection $T \subseteq \VV^Y$ of
% $\VV$-valued predicates on $Y$ and a function $f$, the reindexing of
% $T$ by $f$ denoted $f^\bullet(T)$ is a set of $\VV$-valued predicates
% on $X$ given by $f^{\bullet}(T) = \{q \circ f \mid q \in T\}$.

\subparagraph*{Grothendieck completion.} It turns out that both
$\Vgraph / \Vcat$ and $\Vspred$ can be viewed as fibred categories~\cite{Lawvere73,bkp:up-to-behavioural-metrics-fibrations-journal}. We
here only consider fibred categories arising from the Grothendieck
construction, which can be viewed equivalently as a special kind of
split indexed categories, that in our case are functors
$A : \Set^\op \to \Pos$. Intuitively, such functors assign to each set
a poset of ``extra structure'' and take functions $f : X \to Y$ to
monotone maps canonically reindexing the ``extra structure'' on set
$Y$ by $f$.

% The one side of the correspondence between fibrations and
% split indexed categories is given by the so-called Grothendieck
% completion.
The \emph{Grothendieck
  completion}~\cite{j:categorical-logic-type-theory} of a functor
$A \colon \Set^\op\to \Pos$ is a category denoted $\int A$, whose
objects are pairs $(X,i)$ where $X\in \Set$ and $i\in A(X)$. The
arrows $(X,i)\to (Y,j)$ are maps $f\colon X\to Y$ such that
$i\vle A(f)(j)$, where $\vle$ is the partial order on $A(X)$. The
corresponding fibration is given by the forgetful functor
$U \colon \int A \to \Set$ taking each pair $(X,i)$ to its underlying
set $X$. The fibre of each set $X$ corresponds to the poset $A(X)$.

\subparagraph*{$\Vgraph / \Vcat$ and $\Vspred$ as Grothendieck
  completions.} The category $\Vgraph$ arises as the Grothendieck
completion of the functor $\Phi : \Set^{\op} \to \Pos$ that takes each
set $X$ to $\Phi(X)=(\Vgraph_X, \vle)$, the lattice of
$\VV$-valued relations equipped with the pointwise order
$\vle$. Given a function $f \colon X \to Y$, we define
$\Phi(f) = f^*$ by reindexing, where $f^*(d_Y) = d_Y\circ(f\times
f)$. Analogously for $\Vcat$.

Similarly, to obtain $\Vspred$, we define a functor
$\Psi : \Set^{\op} \to \Pos$ that maps each set $X$ to
$\Psi(X)=(\Vspred_X, \supseteq)$, the lattice of collections of
$\VV$-valued predicates on $X$ ordered by reverse
inclusion. Furthermore $\Psi(f) = f^\bullet$.

% In both cases the reindexing is well behaved, that is $\id^*=\id$ and
% $(f \circ g)^* = g^* \circ f^*$ and similarly $\id^\bullet = \id$ and
% $(f \circ g)^\bullet = g^\bullet \circ f^\bullet$ and therefore $\Phi$
% and $\Psi$ are indeed functors.

% One can easily verify, that indeed constructing a Grothendieck
% completion of $\Phi$ and $\Psi$ yields $\Vgraph$ and $\Vspred$.

\begin{toappendix}
  \begin{lemma}
    $\Vgraph = \int \Phi$ and $\Vspred = \int \Psi$.
  \end{lemma}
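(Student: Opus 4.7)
The plan is a direct unfolding of definitions: on both sides of each equation the objects and morphisms are literally described by the same data, so the proof is essentially bookkeeping, but a few naturality/monotonicity checks are needed to confirm that $\Phi$ and $\Psi$ are genuine functors $\Set^{\op}\to\Pos$.

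First I would verify that $\Phi\colon \Set^{\op}\to\Pos$ is well-defined. For each set $X$, the fibre $\Phi(X) = (\Vgraph_X,\vle)$ is a complete lattice (observed earlier in the paper), in particular a poset. For $f\colon X\to Y$ the assignment $f^*(d_Y) = d_Y\circ(f\times f)$ clearly sends $\VV$-graphs on $Y$ to $\VV$-graphs on $X$, and monotonicity $d_Y\vle d'_Y\Rightarrow f^*(d_Y)\vle f^*(d'_Y)$ is immediate from pointwise definition of $\vle$ in $\Vgraph_Y$. Functoriality $\id_X^* = \id_{\Phi(X)}$ and $(g\circ f)^* = f^*\circ g^*$ follow since $(g\circ f)\times(g\circ f) = (g\times g)\circ(f\times f)$. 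The analogous claim for $\Vcat$ requires additionally noting that the $\VV$-category axioms are preserved under reindexing: if $d_Y$ satisfies $k\vle d_Y(y,y)$ and $d_Y(y,y')\otimes d_Y(y',y'')\vle d_Y(y,y'')$, then so does $f^*(d_Y)$, essentially because these inequalities are stated pointwise and $f^*$ is defined pointwise.

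Next I would check that $\Psi$ is well-defined analogously: $\Psi(X) = (\Vspred_X,\supseteq)$ is a poset (indeed a complete lattice under reverse inclusion); the reindexing $f^\bullet(T) = \{q\circ f \mid q\in T\}$ is monotone in the reverse-inclusion order ($T\supseteq T' \Rightarrow f^\bullet(T)\supseteq f^\bullet(T')$); and $\id^\bullet = \id$, $(g\circ f)^\bullet = f^\bullet\circ g^\bullet$ both follow from $q\circ(g\circ f) = (q\circ g)\circ f$.

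With $\Phi$ and $\Psi$ established as functors, the identification of categories is now immediate. Objects of $\int\Phi$ are pairs $(X,d)$ with $d\in\Vgraph_X$, which are exactly the objects of $\Vgraph$; a morphism $(X,d_X)\to(Y,d_Y)$ in $\int\Phi$ is a map $f\colon X\to Y$ with $d_X\vle \Phi(f)(d_Y) = d_Y\circ(f\times f)$, which is precisely the non-expansiveness condition defining arrows of $\Vgraph$. Identities and composition coincide with those of $\Set$ on both sides, so the two categories are equal (not merely isomorphic). The $\Vspred$ case is identical: an arrow $(X,S)\to(Y,T)$ in $\int\Psi$ is $f\colon X\to Y$ with $S\supseteq \Psi(f)(T) = f^\bullet(T)$, which is exactly $f^\bullet(T)\subseteq S$, the defining condition in $\Vspred$.

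There is no real obstacle here; the only subtlety worth stating explicitly is the choice of order on $\Vspred_X$. The definition of $\Vspred$ makes an arrow \emph{decrease} the collection along $f^\bullet$, so the fibrewise order must be reverse inclusion for the Grothendieck condition $i\vle A(f)(j)$ to reproduce the morphism condition. Flagging this orientation once and then unfolding definitions suffices to conclude both equalities.
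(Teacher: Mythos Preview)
Your proposal is correct and follows essentially the same approach as the paper's proof, namely unfolding the definitions of the Grothendieck completion and matching them against the definitions of $\Vgraph$ and $\Vspred$. You are more thorough than the paper in that you explicitly verify functoriality and monotonicity of $\Phi$ and $\Psi$ and flag the reverse-inclusion subtlety for $\Vspred$, whereas the paper simply takes these for granted and proceeds directly to the object/morphism comparison.
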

  \begin{proof}
    The objects of $\int \Phi$ are pairs $(X, d_X)$, where $X$ is a
    set and $d_X$ is an element of the lattice $\Vgraph_X$ of
    $\VV$-valued relations. The arrows $f \colon (X, d_X) \to (Y,d_Y)$
    are functions $f \colon X \to Y$, which satisfy
    $d_X \vle f^*(d_Y)$, which is the same as
    $d_X \vle d_Y \circ (f \times f)$.  This precisely
    corresponds to the definition of the category $\Vgraph$.
	
    Similarly, the objects of $\int \Psi$ are pairs $(X, S)$, where
    $X$ is a set and $S \in \Vspred_X$ (and hence $S \subseteq \VV^X$)
    is the element of the lattice of subsets of $\VV^X$ ordered by
    reverse inclusion. The arrows $(X,S) \to (Y,T)$ are functions
    $f \colon X \to Y$, such that $S \supseteq f^\bullet(T)$, which is
    the same as $\{q \circ f \mid q \in T\} \subseteq S$. Again this
    precisely corresponds to the definition of the category $\Vspred$.
  \end{proof}
  Additionally, observe that if we were to restrict $\Phi$ to send
  each set $X$ to the lattice $(\Vcat_X, \vle)$ of
  $\VV$-categories on $X$, which is a sublattice of the lattice
  $(\Vgraph_X, \vle)$ of all $\VV$-graphs, then the
  Grothendieck completion would yield $\Vcat$ instead of $\Vgraph$.
\end{toappendix}
\subparagraph*{Galois connection on the fibres.} In the
adjunction-based framework from
\cite{bgkmfsw:logics-coalgebra-adjunction}, the Kantorovich lifting of
a functor is phrased through the means of a contravariant Galois
connection between the fibres of $\Vcat$ and $\Vspred$ and we here
generalize from $\Vcat$ to $\Vgraph$. For each set $X$, we define a map
$\alpha_X \colon \Vspred_X \to \Vgraph_X$ given by:
\[\alpha_X(S)(x_1,x_2) = \vMeet\nolimits_{f\in S} d_\VV(f(x_1),f(x_2)) \qquad (S \subseteq \VV^X)\] 
Intuitively, $\alpha_X$ takes a collection $S$ of $\VV$-valued predicates on $X$ and generates the greatest conformance on $X$ that turns all predicates into non-expansive maps. For the other part of the Galois connection, we have a map $\gamma_X \colon \Vgraph_X \to \Vspred_X$ defined by the following:
      \[ \gamma_X(d_X) = \{f\colon X\to \VV \mid d_\VV \circ (f \times f ) \vge d_X\} \]
      Given a conformance $d_X \colon X \times X \to \VV$, $\gamma_X$ generates a set of $\VV$-valued predicates on $X$ which are non-expansive maps from $d_X$ to the residuation distance. 
      As mentioned before, we can instantiate the previous result \cite[Theorem~7]{bgkmfsw:logics-coalgebra-adjunction} and obtain the following:
\begin{theorem}[\cite{bgkmfsw:logics-coalgebra-adjunction}]\label{thm:galois_connection}
  Let $X$ be an arbitrary set, $d_X \colon X \times X \to \VV$ a
  $\VV$-graph and $S \subseteq \VV^X$ a collection of $\VV$-valued
  predicates. Then $\alpha_X$ and $\gamma_X$ are both antitone (in
  $\subseteq$, $\vle$) and form a contravariant Galois connection:
  \[
    d_X \vle \alpha_X (S) \iff S \subseteq \gamma_X(d_X).
  \] 
\end{theorem}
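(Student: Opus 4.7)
The statement is essentially a direct unpacking of the definitions, so my plan is to verify the Galois-connection equivalence by two chains of simple rewrites, and then derive antitonicity of $\alpha_X$ and $\gamma_X$ either directly or as a formal consequence of the Galois connection.

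\textbf{Plan for the Galois connection.} I would start by rewriting both sides in fully quantified form. The inequality $d_X \vle \alpha_X(S)$ means, by the pointwise order on $\Vgraph_X$ and the definition of $\alpha_X$, that for every pair $(x_1,x_2) \in X \times X$,
\[
  d_X(x_1,x_2) \vle \vMeet\nolimits_{f \in S} d_\VV(f(x_1), f(x_2)).
\]
By the universal property of the meet in $\VV$, this is equivalent to saying that for every $f \in S$ and every pair $(x_1,x_2)$,
\[
  d_X(x_1,x_2) \vle d_\VV(f(x_1), f(x_2)),
\]
i.e.\ $d_\VV \circ (f \times f) \vge d_X$. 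By the definition of $\gamma_X$, this last condition for every $f \in S$ is exactly the statement that $S \subseteq \gamma_X(d_X)$. Thus the equivalence
\[
  d_X \vle \alpha_X(S) \iff S \subseteq \gamma_X(d_X)
\]
follows by reading the same bi-implication from the two sides. There are essentially no obstacles here; the only place to be careful is to confirm that ``meet-universal property'' step uses no additional assumptions on $\VV$ beyond the completeness of the underlying lattice.

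\textbf{Plan for antitonicity.} Once the Galois connection is established, antitonicity of both maps is automatic from the standard theory of (contravariant) Galois connections: if $S \subseteq T$ then $T \subseteq \gamma_X(\alpha_X(T))$ together with $S \subseteq T \subseteq \gamma_X(\alpha_X(T))$ gives $\alpha_X(T) \vle \alpha_X(T)$ and, by the connection, $\alpha_X(T) \vle \alpha_X(S)$; symmetrically for $\gamma_X$. Alternatively, one can verify antitonicity by hand: $\alpha_X$ is antitone because enlarging $S$ adds more elements to the meet and can only decrease it in $\vle$, and $\gamma_X$ is antitone because strengthening $d_X$ (moving up in $\vle$) makes the non-expansiveness condition $d_\VV \circ (f \times f) \vge d_X$ strictly harder to satisfy, so the set $\gamma_X(d_X)$ shrinks (moves up in $\supseteq$). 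I would include the hand-verification as it is immediate and makes the proof self-contained. Since the whole argument is at the level of unfolding definitions and applying the meet property in $\VV$, I do not expect any real obstacle; the main care is simply to keep the order-reversal in $\Vspred$ (reverse inclusion) consistent with the statement of the Galois connection.
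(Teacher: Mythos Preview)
Your proposal is correct; the argument is the standard definitional unfolding and works exactly as you describe. Note that the paper does not actually prove this theorem itself but merely cites it from \cite{bgkmfsw:logics-coalgebra-adjunction}, so there is no proof in the paper to compare against. (One minor slip: in your Galois-connection derivation of antitonicity you wrote ``$\alpha_X(T) \vle \alpha_X(T)$'' where you presumably meant to invoke the connection on $S \subseteq \gamma_X(\alpha_X(T))$ to obtain $\alpha_X(T) \vle \alpha_X(S)$ directly; your alternative direct verification is clean and suffices.)
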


\begin{example}
  In the setting of~\Cref{sec:transport}, $\gamma$ corresponds to
  Step~\ref{it:transport-1} and $\alpha$ to
  Step~\ref{it:transport-3}.
\end{example}

\subparagraph*{Fibred Adjunction.} \Cref{thm:galois_connection} is a
``local'' property that only holds fiberwise. However, it turns out
that we can argue something stronger, namely that we have a fibred
adjunction situation. This is a ``global'' property, as fibred
functors additionally preserve the notion of reindexing. Note that
every natural transformation between functors of type
$\Set^{\op} \to \Pos$ (a so-called morphism of split indexed categories)
\cite[Definition~1.10.5]{j:categorical-logic-type-theory} induces a
fibred functor between the corresponding Grothendieck completions.

One can quite easily verify that $\alpha$ is natural on $\Vgraph$, while
$\gamma$ is only laxly natural on $\Vgraph$ and natural on $\Vcat$. For
the latter result, we rely on a quantalic version of the
McShane-Whitney extension theorem
\cite{McS:ExtensionOfFunctions,W:ExtensionOfFunctions}, mentioned also
in \cite{Lawvere73} for the real-valued case.

\begin{toappendix}
  \begin{lemma}
    \label{lem:alpha-natural}
    $\alpha\colon\Psi\Rightarrow\Phi$ is natural.
  \end{lemma}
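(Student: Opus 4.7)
The plan is to unpack what it means for $\alpha$ to be a natural transformation between the functors $\Psi, \Phi \colon \Set^{\op} \to \Pos$. First, for each set $X$, I need to check that $\alpha_X$ is genuinely a morphism in $\Pos$, i.e., monotone from $(\Vspred_X, \supseteq)$ to $(\Vgraph_X, \vle)$. This is immediate from the definition: if $S \supseteq S'$, then for every $(x_1, x_2) \in X \times X$ the meet $\vMeet_{g \in S} d_\VV(g(x_1), g(x_2))$ is taken over a larger index set than $\vMeet_{g \in S'} d_\VV(g(x_1), g(x_2))$, so the former sits below the latter in $\vle$.

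Next, for the naturality square, I fix an arbitrary map $f \colon X \to Y$ and an arbitrary $T \subseteq \VV^Y$, and need to establish the equality
\[
\alpha_X(f^\bullet(T)) \;=\; f^*(\alpha_Y(T))
\]
in $\Vgraph_X$. The plan is a direct pointwise computation, evaluating both sides at an arbitrary pair $(x_1, x_2) \in X \times X$. On the left, unfolding $f^\bullet(T) = \{q \circ f \mid q \in T\}$ and the definition of $\alpha_X$ yields $\vMeet_{q \in T} d_\VV(q(f(x_1)), q(f(x_2)))$. On the right, unfolding $f^* = (-) \circ (f \times f)$ and the definition of $\alpha_Y$ at the pair $(f(x_1), f(x_2))$ yields exactly the same expression $\vMeet_{q \in T} d_\VV(q(f(x_1)), q(f(x_2)))$. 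Equality is thus manifest.

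There is no real obstacle here: both the definition of $\alpha$ and the reindexing operations $f^\bullet$, $f^*$ are designed to interact well by sharing the common pattern ``evaluate a predicate after composing with $f$'', so naturality reduces to noticing that these two ways of rewriting produce the same indexed meet. The only care needed is to keep the two different orderings on the fibres straight, which is handled in the first paragraph above. (The contrast with $\gamma$, which is only laxly natural on $\Vgraph$ and requires a McShane--Whitney style extension argument to become natural on $\Vcat$, highlights that the symmetry here is a genuine property of $\alpha$ and not of the Galois connection as a whole.)
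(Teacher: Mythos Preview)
Your proposal is correct and follows essentially the same approach as the paper: both verify the naturality square $\alpha_X \circ f^\bullet = f^* \circ \alpha_Y$ by a direct pointwise computation, unfolding the definitions of $f^\bullet$, $f^*$, and $\alpha$ to arrive at the common expression $\vMeet_{q \in T} d_\VV(q(f(x_1)), q(f(x_2)))$. Your explicit check of monotonicity is slightly more detailed than the paper, which simply cites antitonicity of $\alpha_X$ from the Galois connection theorem, but the substance is identical.
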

  \begin{proof}
    Let $f \colon X \to Y$ be an arbitrary function. We need to check
    the commutativity of the following diagram in $\Pos$:
    \[\begin{tikzcd}
	{\Psi(Y)} && {\Psi(X)} \\
	{\Phi(Y)} && {\Phi(X)} \arrow["{\alpha_X}", from=1-3, to=2-3]
        \arrow["{\alpha_Y}", from=1-1, to=2-1] \arrow["{f^\bullet}",
        from=1-1, to=1-3] \arrow["{f^*}", from=2-1, to=2-3]
      \end{tikzcd}\] Note that as a consequence of
    \Cref{thm:galois_connection}, we have antitonicity of $\alpha_X$ and
    hence $\alpha_X$ is a monotone map from the lattice of sets of
    predicates with inverse inclusion order to the lattice of
    $\VV$-valued relations. It suffices to check
    $\alpha_X \circ f^\bullet = f^* \circ \alpha_Y$. Let
    $T \subseteq \VV^Y$ and $x_1, x_2 \in X$ and consider the
    following:
	\begin{align*}
		(\alpha_X \circ f^\bullet)(T)(x_1,x_2) &= \vMeet_{p \in T} d_\VV ((p \circ f)(x_1), (p \circ f)(x_2)) \\
		&=  \alpha_Y (T) (f(x_1), f(x_2)) \\
		&= (f^* \circ \alpha_Y(T))(x_1, x_2) \qedhere
	\end{align*}

\end{proof}
Unfortunately, in the case of $\gamma$, without any extra assumptions, we can only prove a weaker statement, namely that it is laxly natural.
\begin{lemma}\label{lem:gamma_laxly_natural}
	$\gamma \colon \Phi \Rightarrow \Psi $ is laxly natural, that is for all functions $f \colon X \to Y$ and all $\VV$-valued relations $d_Y$ on the set $Y$ we have that \((f^\bullet \circ \gamma_Y)(d_Y) \subseteq (\gamma_X \circ f^*)(d_Y)\). 
\end{lemma}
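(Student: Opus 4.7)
The plan is to unfold all four pieces of the inclusion
$(f^\bullet \circ \gamma_Y)(d_Y) \subseteq (\gamma_X \circ f^*)(d_Y)$
and reduce it to a pointwise statement that is immediate from the definitions. Concretely, by definition $\gamma_Y(d_Y)$ consists of all maps $g\colon Y\to\VV$ with $d_\VV(g(y_1),g(y_2)) \vge d_Y(y_1,y_2)$ for all $y_1,y_2\in Y$, and $f^\bullet$ applied to this set precomposes each such $g$ with $f$. On the other side, $f^*(d_Y) = d_Y\circ (f\times f)$, so $\gamma_X(f^*(d_Y))$ consists of all $h\colon X\to\VV$ with $d_\VV(h(x_1),h(x_2)) \vge d_Y(f(x_1),f(x_2))$ for all $x_1,x_2\in X$.

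Thus the only thing to verify is: for every $g\in\gamma_Y(d_Y)$, the composite $g\circ f$ belongs to $\gamma_X(f^*(d_Y))$. Pick arbitrary $x_1,x_2\in X$ and instantiate the defining inequality for $g$ at the points $y_1 = f(x_1)$ and $y_2 = f(x_2)$; this yields
\[
  d_\VV\bigl((g\circ f)(x_1),(g\circ f)(x_2)\bigr) = d_\VV(g(f(x_1)),g(f(x_2))) \vge d_Y(f(x_1),f(x_2)) = f^*(d_Y)(x_1,x_2),
\]
which is exactly the membership condition for $\gamma_X(f^*(d_Y))$.

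There is essentially no obstacle here: laxness is weaker than naturality and amounts to a one-directional set inclusion, which reduces to a pointwise specialisation of the non-expansiveness condition. I expect the only subtlety worth flagging explicitly in the write-up is where the reverse inclusion can fail: a map $h\colon X\to\VV$ in $\gamma_X(f^*(d_Y))$ need not factor as $g\circ f$ for some $g\in\gamma_Y(d_Y)$, since $h$ is only constrained on pairs in the image of $f\times f$ and need not admit a non-expansive extension along $f$ unless we impose additional structure (e.g.\ restrict to $\Vcat$ and invoke a quantalic McShane--Whitney-style extension theorem, as is used later to promote laxness to naturality on $\Vcat$).
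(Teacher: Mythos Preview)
Your proof is correct and follows essentially the same approach as the paper: both take an element $g\circ f$ of $f^\bullet(\gamma_Y(d_Y))$ and verify membership in $\gamma_X(f^*(d_Y))$ by instantiating the non-expansiveness of $g$ at the points $f(x_1),f(x_2)$ (the paper phrases this pointfree as precomposing $f\times f$ on both sides of $d_\VV\circ(g\times g)\vge d_Y$). Your closing remark about why the reverse inclusion fails and how $\Vcat$ plus a McShane--Whitney-style extension repairs it is exactly how the paper proceeds afterwards.
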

\begin{proof}
Observing that $\gamma_X$ is antitone, we can show that for all sets $X$, $\gamma_X$ is a monotone function from the lattice of $\VV$-valued relations to the lattice of collections of $\VV$-valued predicates (ordered by reverse inclusion order). The only thing we need to show is that for all functions $f \colon X \to Y$, we have that $(f^\bullet \circ \gamma_Y)(d_Y) \subseteq (\gamma_X \circ f^*)(d_Y)$ for all $\VV$-valued relations $d_Y$ on the set $Y$. 

Assume that $p \colon X \to \VV$ belongs to $(f^\bullet \circ \gamma_Y)(d_Y)$. In other words, $p = q \circ f $ for some $q \colon Y \to \VV$, such that $d_{\VV} \circ (q \times q) \vge d_Y$. Observe that pre-composing $f \times f$ yields the inequality $d_\VV \circ ((q \circ f) \times (q \circ f)) \vge d_Y \circ (f \times f)$ and therefore $d_{\VV} \circ (p \times p) \vge d_Y \circ (f \times f) = f^{*}(d_Y)$. Hence, $p \in (\gamma_X \circ f^*)(d_X)$, which completes the proof.
\end{proof}
To obtain the converse inclusion, we need to be able to factorize any
$\VV$-valued predicate on $X$, which is non-expansive with respect to
$f^*(d_Y)$ as the composition of $f$ and a $\VV$-valued predicate on $Y$ which is non-expansive with respect to $d_Y$. Since $f$ can be seen as isometry between $(X,f^*(d_Y))$ and $(Y,d_Y)$, we are essentially focusing on extending a non-expansive map into a residuation metric through an isometry, which in the case of pseudometric spaces can be proved through the McShane-Whitney Extension Theorem. It turns out that those results can be generalized to the setting of $\VV$-categories.

% \begin{lemma}\label{lem:residuation_properties}
% \end{lemma}
% \begin{proof}

%   \begin{enumerate}
%   \end{enumerate}
% \end{proof}
\end{toappendix}

\begin{toappendix}
\begin{theoremrep}[Quantalic McShane-Whitney Extension Theorem]\label{thm:extension_theorem}
Let $\VV$ be a quantale and $(X, d)$ be an object of $\Vcat$. For any
subset $Y\subseteq X$ and any non-expansive map $f\colon (Y, d) \to (\VV, d_\VV)$ there is a non-expansive extension $\bar{f}\colon (X, d) \to (\VV, d_\VV)$ of $f$.
\end{theoremrep}

\begin{proof}
Define for every element $y\in Y$ a function $g_y\colon X \to \VV$ by
\[g_y(x) = d_\VV(d(x, y), f(y))\]
Note that all the functions $g_y$ are non-expansive, since for $x_1,
x_2 \in X$ we have $d(x_1, x_2) \otimes d(x_2, y)\vle d(x_1,
y)$ (triangle inequality) and hence
\begin{align*}
  d(x_1, x_2)
  &\vle d_\VV(d(x_2, y), d(x_1, y)) \\
  &\vle d_\VV(d_\VV(d(x_1, y), f(y)), d_\VV(d(x_2, y), f(y))) \tag{\Cref{lem:residuation}\eqref{it:residuation-9}} \\
  &= d_\VV(g_y(x_1), g_y(x_2)).
\end{align*}

Now define $\bar{f}\colon X \to \VV$ by
\[\bar{f}(x) = \vMeet_{y\in Y} g_y(x) = \vMeet_{y\in Y} d_\VV(d(x, y), f(y))\]
Once can easily verify that $\bar{f}$ is also non-expansive, because
\begin{align*}
d_\VV(\bar{f}(x_1), \bar{f}(x_2)) &= d_\VV\Big( \vMeet_{y\in Y} g_y(x_1), \vMeet_{y\in Y} g_y(x_2) \Big)\\
&\vge \vMeet_{y\in Y} d_\VV(g_y(x_1), g_y(x_2))  \tag{\Cref{lem:residuation}\eqref{it:residuation-10}}\\
&\vge \vMeet_{y\in Y} d(x_1, x_2) = d(x_1, x_2)
\end{align*}
It remains to show that $\bar{f}(y) = f(y)$ for all $y\in Y$. Consider
$z\in Y$, then from the antitonicity of $d_\VV$ in its first argument
and \Cref{lem:residuation}\eqref{it:residuation-4}:
\[\bar{f}(z) \vle  d_\VV(d(z, z), f(z)) \vle d_\VV(k, f(z)) = f(z)\]
On the other hand, by the non-expansiveness of $f$, for all $y\in Y$ we
have by commutativity:
\begin{align*}
d_\VV(f(z), f(y)) \vge d(z, y) &\iff d(z, y) \otimes f(z) \vle f(y)\\
&\iff d_\VV(d(z, y), f(y)) \vge f(z)
\end{align*}
And therefore
\[\bar{f}(z) = \vMeet_{y\in Y} d_\VV(d(z, y), f(y)) \vge f(z) \qedhere \]

\end{proof}

\begin{remark}
  Note that the function $\bar{f}$ constructed in the proof is the largest non-expansive extension of $f$.
That means for all non-expansive extensions $h\colon (X, d) \to (\VV, d_\VV)$ of $f$ we have $h \vle \bar{f}$.
For if $h$ is another non-expansive extension of $f$, then for all $x\in X$ and all $y\in Y$ we have
\begin{align*}
d_\VV(h(x), f(y)) = d_\VV(h(x), h(y)) \vge d(x, y) &\iff h(x) \otimes d(x, y) \vle f(y)\\
&\iff h(x) \vle d_\VV(d(x, y), f(y))
\end{align*}
and therefore
\[h(x) \vle \vMeet_{y\in Y} d_\VV(d(x, y), f(y)) = \bar{f}(x).\]
Similarly, the function $g\colon X \to \VV$ defined by
\[g(x) = \vJoin_{y\in Y} f(y) \otimes d(y, x)\]
is the smallest non-expansive function extending $f$.
\end{remark}
As a corollary, we obtain the desired property needed to show the converse inclusion.
\end{toappendix}

\begin{theoremrep}
  \label{thm:vrel-extension}
  Let $d_X\in\Vcat_X$ and $d_Y\in \Vcat_Y$ be elements of $\Vcat$. If
  $i \colon (Y, d_Y) \to (X, d_X)$ is an isometry, then for any
  non-expansive map $f \colon (Y,d_Y) \to (\VV, d_\VV)$ there exists a
  non-expansive map $g \colon (X,d_X) \to (\VV, d_{\VV})$ such that
  $f = g \circ i $.
\end{theoremrep}

\begin{proof}
	The function $i \colon Y \to X$ can be factorized as the composition of $e \colon Y \to i[Y]$, where $i[Y]$ denotes the image of $Y$ under $i$ and $m \colon i[Y] \to X$, where $m$ is the canonical inclusion function of $i[Y]$ into $X$. We can equip $i[Y]$ with a distance $d_{i[Y]} \colon i[Y] \times i[Y] \to \VV$ defined to be the restriction of $d_X$ to the domain $i[Y] \times i[Y]$. One can observe that $(i[Y],d_{i[Y]})$ is also an element of $\Vcat$ and that $m \colon (Y, d_Y) \to (i[Y], d_{i[Y]})$ is an isometry as a consequence of $i$ being an isometry.
	
	We define a function $h \colon i[Y] \to \VV$ to be given by $h(e(y)) = f(y)$. We now argue that $h$ is well-defined. Let $y_1, y_2 \in Y$, such that $e(y_1)=e(y_2)$. Since $f \colon (Y,d_Y) \to (\VV, d_{\VV})$ is non-expansive, we have that:
	$$
		d_{\VV}(f(y_1), f(y_2)) \vge d_Y(y_1, y_2) = d_{i[Y]} (e(y_1), e(y_2)) \vge k
	$$
	The second to last step relies on the fact that $d_{i[Y]}$ is reflexive. By the definition of $d_{\VV}$ we have that the above inequality is equivalent to 
	$$
	k \otimes f(y_1) \vle f(y_2) \iff f(y_1) \vle f(y_2)
	$$
	The second step above relies on the fact that $k$ is the unit of $\otimes$. The argument that $f(y_2) \vle f(y_1)$ is symmetric and therefore omitted. 
	
	We now argue that $h \colon (i[Y], d_{i[Y]}) \to (\VV, d_\VV)$ is non-expansive. Let $x_1, x_2 \in i[Y]$. Since $e$ is surjective, there must exist $y_1, y_2$ with $e(y_1) = x_1$ and $e(y_2) = x_2$.
	\begin{align*}
		d_{\VV}(h(x_1),h(x_2)) &= d_{\VV}(h(e(y_1)), h(e(y_2))) \tag{$e$ is surjective} \\
		&= d_{\VV}(f(y_1), f(y_2)) \tag{Def. of $h$} \\
		&\vge d_Y (y_1, y_2) \tag{$f$ is non-expansive}\\
		&= d_{i[Y]} (e(y_1), e(y_2)) \tag{$e$ is an isometry} \\
		&= d_{i[Y]} (x_1, x_2)
	\end{align*}
	
	We can sum up the situation as the commuting diagram in $\Vcat$.
	% https://q.uiver.app/#q=WzAsNCxbMCwwLCIoWSxkX1kpIl0sWzAsMSwiKGlbWV0sIGRfe2lbWV19KSJdLFswLDIsIihYLCBkX1gpIl0sWzIsMSwiKFxcVlYsIGRfe1xcVlZ9KSJdLFswLDEsImUiXSxbMSwyLCJtIiwwLHsic3R5bGUiOnsidGFpbCI6eyJuYW1lIjoiaG9vayIsInNpZGUiOiJib3R0b20ifX19XSxbMSwzLCJoIl0sWzAsMywiZiJdLFsyLDMsImciXV0=
\[\begin{tikzcd}
	{(Y,d_Y)} \\
	{(i[Y], d_{i[Y]})} && {(\VV, d_{\VV})} \\
	{(X, d_X)}
	\arrow["e", from=1-1, to=2-1]
	\arrow["m", hook', from=2-1, to=3-1]
	\arrow["h", from=2-1, to=2-3]
	\arrow["f", from=1-1, to=2-3]
	\arrow["g", from=3-1, to=2-3]
      \end{tikzcd}\] The upper triangle commutes by the definition of
    $h$, while $g$ exists and makes the bottom one commute because of
    \Cref{thm:extension_theorem}. Since $i = m \circ e$, we
    have that there exists non-expansive $g$, such that
    $g \circ i = f$, which completes the proof.
\end{proof}

\begin{toappendix}
Note, that the property above does not hold for arbitrary $\VV$-graphs and only works for $\VV$-categories. To make use of that, we restrict $\Phi(X)$ to be the lattice $(\Vcat_X, \vle)$ of $\VV$-categories on $X$.
\begin{lemma}\label{lem:gamma_natural}
	When restricted to $\Vcat$, $\gamma \colon \Phi \to \Psi$ is natural.
\end{lemma}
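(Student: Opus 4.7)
The plan is to use the lax naturality from \Cref{lem:gamma_laxly_natural} for one inclusion and to establish the reverse inclusion by invoking the quantalic McShane–Whitney extension theorem (\Cref{thm:vrel-extension}), which is exactly where the restriction to $\Vcat$ becomes essential.

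Fix a map $f\colon X\to Y$ in $\Set$ and $d_Y\in\Vcat_Y$. \Cref{lem:gamma_laxly_natural} already gives the inclusion $(f^\bullet\circ\gamma_Y)(d_Y)\subseteq (\gamma_X\circ f^*)(d_Y)$, so the only thing to prove is the converse: if $p\colon X\to\VV$ is non-expansive from $(X,f^*(d_Y))$ to $(\VV,d_\VV)$, then $p$ factors as $p = q\circ f$ for some $q\in\gamma_Y(d_Y)$.

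The first step is to observe that $p$ is constant on the fibres of $f$. Indeed, if $f(x_1)=f(x_2)$ then $f^*(d_Y)(x_1,x_2)=d_Y(f(x_1),f(x_1))\vge k$ using reflexivity of the $\VV$-category $d_Y$; non-expansiveness of $p$ then yields $k\vle d_\VV(p(x_1),p(x_2))$, hence $p(x_1)\vle p(x_2)$, and symmetrically the reverse, so $p(x_1)=p(x_2)$. This is the single point where $d_Y\in\Vcat_Y$ (rather than just $\Vgraph_Y$) is used; without reflexivity one cannot collapse the fibres and the reverse inclusion fails in general. Thus $p$ factors uniquely through the image as $p=q'\circ e$, where $e\colon X\to f[X]$ is the corestriction of $f$ and $q'\colon f[X]\to\VV$ sends $f(x)$ to $p(x)$.

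Next I equip $f[X]$ with the restricted $\VV$-category $d_Y|_{f[X]}$, so that the inclusion $m\colon (f[X],d_Y|_{f[X]})\hookrightarrow (Y,d_Y)$ is by construction an isometry and $f = m\circ e$. A direct calculation shows $q'$ is non-expansive from $(f[X],d_Y|_{f[X]})$ to $(\VV,d_\VV)$: for $y_i = f(x_i)\in f[X]$, one has $d_\VV(q'(y_1),q'(y_2)) = d_\VV(p(x_1),p(x_2)) \vge f^*(d_Y)(x_1,x_2) = d_Y(y_1,y_2)$.

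Finally, \Cref{thm:vrel-extension} applied to the isometry $m$ and the non-expansive map $q'$ produces a non-expansive extension $q\colon (Y,d_Y)\to (\VV,d_\VV)$, i.e.\ $q\in\gamma_Y(d_Y)$, with $q\circ m = q'$. Composing with $e$ gives $q\circ f = q\circ m\circ e = q'\circ e = p$, so $p\in f^\bullet(\gamma_Y(d_Y))$, completing the reverse inclusion and establishing naturality. The main obstacle is really already absorbed into the extension theorem; the rest is a straightforward fibrewise reduction to it, with the reflexivity of $\VV$-categories providing the crucial collapse needed to land in the image before extending.
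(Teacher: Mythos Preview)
Your proof is correct and rests on the same key ingredient as the paper, namely the extension theorem \Cref{thm:vrel-extension}. The only difference is a small redundancy: the paper applies \Cref{thm:vrel-extension} directly to the isometry $f\colon (X,f^*(d_Y))\to (Y,d_Y)$ and is done in one line, whereas you first collapse the fibres of $f$, factor through the image $f[X]$, and then invoke the theorem for the inclusion $m$. But the proof of \Cref{thm:vrel-extension} already handles non-injective isometries by performing exactly this image factorisation and reflexivity-based fibre collapse internally, so your first three steps reprove part of that theorem before calling it on a special case. Nothing is wrong; the paper's route is simply more direct because it recognises that the extension result does not require the isometry to be injective.
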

\begin{proof}
It suffices to show the converse inclusion to the one argued in \Cref{lem:gamma_laxly_natural}. Let $p \colon X \to \VV$ be an arbitrary $\VV$-valued predicate on $X$, such that $d_Y \circ (f \times f) \vle d_\VV \circ (p \times p)$. We want to show that there exists a non-expansive map $q \colon (Y,d_Y) \to (\VV, d_\VV)$, such that $p = q \circ f$. 
Observe that $p \colon X \to \VV$ is a non-expansive map from $(X, d_Y \circ (f \times f))$ to $(\VV, d_\VV)$ and $f \colon (Y, d_Y \circ (f \times f)) \to (Y, d_Y)$ is an isometry.
% https://q.uiver.app/#q=WzAsMyxbMCwwLCIoWCwgZF97WX0gXFxjaXJjIChmIFxcdGltZXMgZikpIl0sWzAsMSwiKFgsZF9ZKSJdLFsyLDEsIihcXFZWLCBkX3tcXFZWfSkiXSxbMCwxLCJmIl0sWzEsMiwiXFxleGlzdHMgcSJdLFswLDIsInAiXV0=
\[\begin{tikzcd}
	{(X, d_{Y} \circ (f \times f))} \\
	{(Y,d_Y)} && {(\VV, d_{\VV})}
	\arrow["f", from=1-1, to=2-1]
	\arrow["{\exists q}", from=2-1, to=2-3]
	\arrow["p", from=1-1, to=2-3]
\end{tikzcd}\]
We can use \Cref{thm:vrel-extension} and conclude that there exists a non-expansive function $q \colon (Y, d_Y) \to (\VV, d_\VV)$ such that $p = q \circ f$. Hence, $p \in (f^\bullet \circ \gamma_Y)(d_Y)$, which completes the proof.
\end{proof}
\end{toappendix}

\begin{propositionrep}
  We have that $\alpha\colon\Psi\Rightarrow\Phi$ is natural and
  $\gamma \colon \Phi \Rightarrow \Psi $ is laxly natural, that is for
  all functions $f \colon X \to Y$ and all $\VV$-valued relations
  $d_Y$ on the set $Y$ we have that
  \((f^\bullet \circ \gamma_Y)(d_Y) \subseteq (\gamma_X \circ
  f^*)(d_Y)\).  When restricted to $\Vcat$,
  $\gamma \colon \Phi \Rightarrow
   \Psi$ is natural.
\end{propositionrep}

\begin{proof}
  Follows from
  Lemmas~\ref{lem:alpha-natural},~\ref{lem:gamma_laxly_natural}
  and~\ref{lem:gamma_natural}.
\end{proof}

Note that we can safely make this restriction, while still keeping
$\alpha$ to be well-defined, as its image always lies within $\Vcat$.
\begin{toappendix}
  \begin{lemma}
    \label{lem:image-alpha-vcat}
    For all sets $X$ and collections of predicates
    $S \subseteq \VV^X$, $\alpha_X(S)$ is a $\VV$-category.
  \end{lemma}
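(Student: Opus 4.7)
The plan is to verify the two axioms of a $\VV$-category for $\alpha_X(S)$ directly from the definition, by reducing each axiom pointwise (over $f \in S$) to the corresponding property of the residuation $d_\VV$ established in \Cref{lem:residuation}, and then closing up under the meet.

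\textbf{Reflexivity.} For each $x \in X$ and each $f \in S$, \Cref{lem:residuation}\eqref{it:residuation-2} gives $k \vle d_\VV(f(x), f(x))$. Taking the meet over $f \in S$ yields $k \vle \vMeet_{f\in S} d_\VV(f(x),f(x)) = \alpha_X(S)(x,x)$, since $k$ is a lower bound of the family.

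\textbf{Transitivity.} For arbitrary $x, y, z \in X$, fix $f \in S$. By monotonicity of $\otimes$,
\[
\alpha_X(S)(x,y) \otimes \alpha_X(S)(y,z) \vle d_\VV(f(x), f(y)) \otimes d_\VV(f(y), f(z)),
\]
and by the triangle inequality of residuation (\Cref{lem:residuation}\eqref{it:residuation-3}) the right-hand side is $\vle d_\VV(f(x), f(z))$. Since this bound holds for every $f \in S$, taking the meet on the right gives $\alpha_X(S)(x,y) \otimes \alpha_X(S)(y,z) \vle \alpha_X(S)(x,z)$, as required.

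I do not anticipate any genuine obstacle: the only subtlety worth flagging is the transitivity step, where one must avoid trying to distribute $\otimes$ over the outer meets on the left side (which would in general only yield a one-sided inequality in the wrong direction); instead, as sketched above, one fixes a single witness $f$, applies the pointwise triangle inequality, and only then takes the meet on the right. The result restricts automatically to $\Vcat$, so the previously mentioned restriction of the fibration $\Phi$ to $\Vcat$ keeps $\alpha$ well-defined.
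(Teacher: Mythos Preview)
Your proof is correct and follows essentially the same approach as the paper: both verify reflexivity via \Cref{lem:residuation}\eqref{it:residuation-2} and transitivity by bounding the tensor of the two meets above by $d_\VV(f(x),f(y))\otimes d_\VV(f(y),f(z))$ for each fixed $f\in S$, applying the triangle inequality, and then taking the meet. The paper phrases the first transitivity step as the inequality $\bigl(\vMeet_i a_i\bigr)\otimes\bigl(\vMeet_i b_i\bigr)\vle\vMeet_i(a_i\otimes b_i)$, which is exactly your ``fix $f$ and use monotonicity of $\otimes$'' unpacked; your parenthetical remark about the ``wrong direction'' is therefore slightly off, since that one-sided inequality is precisely the right one and is what the paper uses.
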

  \begin{proof}
    Let $x \in X$. For reflexivity, we have that for all $p \in S$,
    $d_{\VV}(p(x),p(x)) \vge k$ and hence
    \[
      \alpha_X(S)(x,x)=\vMeet_{p \in S} d_{\VV}(p(x), p(x))
      \vge k
    \]
    For transitivity, let $x_1, x_2 \in X$.
    % We have that:
    % \[
    %   (\alpha_X(S) \cdot \alpha_X(S))(x_1,x_2) = \vJoin
    %   \{\alpha_X(S)(x_1, z) \otimes \alpha_X(S)(z, x_2) \mid z \in X\}
    % \]
    We have to show the following for all $z \in X$:
    \[
      \alpha_X(S)(x_1, z) \otimes \alpha_X(S)(z, x_2) \vle
      \alpha_X(S)(x_1,x_2) = \vMeet_{p \in S} d_{\VV}(p(x_1), p(x_2))
    \]
    To show the above, it is enough to argue that for all $p \in S$,
    we have that
    $\alpha_X(S)(x_1,z) \otimes \alpha_X(z, x_2) \vle
    d_{\VV}(p(x_1), p(x_2))$. We have the following:
    \begin{align*}
      \alpha_X(S)(x_1, z) \otimes \alpha_X(S)(z, x_2) &= \left(\vMeet_{r \in S}d_{\VV}(r(x_1), r(z))\right) \otimes  \left(\vMeet_{r \in S}d_{\VV}(r(z), r(x_2))\right)\\
      &\vle  \vMeet_{r \in S} d_{\VV}(r(x_1), r(z)) \otimes d_{\VV}(r(z), r(x_2))\tag{$\left(\vMeet_{i \in I} a_i\right) \otimes \left(\vMeet_{i \in I} b_i\right) \vle \vMeet_{i \in I} a_i \otimes b_i$}\\
%      &\vle \vMeet_{r \in S} (d_{\VV} \cdot d_{\VV})(r(x_1, r(x_2))) \tag{Def. of $\cdot$}\\
      &\vle \vMeet_{r \in S} d_{\VV}(r(x_1), r(x_2))
      \tag{$d_{\VV}$ is transitive, \Cref{lem:residuation}\eqref{it:residuation-3}}\\
      &\vle d_{\VV}(p(x_1), p(x_2)) \tag{$p \in S$}
    \end{align*}
    which completes the proof.
  \end{proof}
  Additionally, when restricting our attention to $\Vcat$ (instead of
  all of $\Vgraph$), it turns out that the co-closure operator of the
  Galois connection (\Cref{thm:galois_connection}) becomes an
  identity.
  \begin{lemma}
    \label{lem:coclosure-id}
    The co-closure $\alpha_X\circ \gamma_X$ is the identity, when
    restricted to $\Vcat$.
  \end{lemma}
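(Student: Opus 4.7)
The plan is to prove both directions of the equality $(\alpha_X \circ \gamma_X)(d_X) = d_X$ separately for any $d_X \in \Vcat_X$. The inequality $d_X \vle \alpha_X(\gamma_X(d_X))$ is immediate from the Galois connection of \Cref{thm:galois_connection}: setting $S = \gamma_X(d_X)$, the trivial inclusion $S \subseteq \gamma_X(d_X)$ translates via the adjunction into $d_X \vle \alpha_X(\gamma_X(d_X))$. So all the work is in the opposite inequality.

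For $\alpha_X(\gamma_X(d_X)) \vle d_X$, it suffices, for each pair $x_1, x_2 \in X$, to exhibit a single $f \in \gamma_X(d_X)$ satisfying $d_\VV(f(x_1), f(x_2)) \vle d_X(x_1, x_2)$; the meet defining $\alpha_X$ will then also be bounded above by $d_X(x_1, x_2)$. My choice is the Yoneda-style representable predicate $f_{x_1} \colon X \to \VV$ defined by $f_{x_1}(x) = d_X(x_1, x)$.

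To verify $f_{x_1} \in \gamma_X(d_X)$, I would unfold the definition to the requirement that, for all $y, z \in X$, $d_X(y, z) \vle d_\VV(d_X(x_1, y), d_X(x_1, z))$. Transposing across the adjunction defining $d_\VV$, this becomes $d_X(y, z) \otimes d_X(x_1, y) \vle d_X(x_1, z)$, which after using commutativity of $\otimes$ is precisely the transitivity axiom of the $\VV$-category $d_X$. This is exactly the step where passing to $\Vcat$ is essential; on a general $\VV$-graph the representable $f_{x_1}$ need not be non-expansive.

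For the final bound, I would use reflexivity: $f_{x_1}(x_1) = d_X(x_1, x_1) \vge k$, and hence by antitonicity of $d_\VV$ in its first argument together with \Cref{lem:residuation}\eqref{it:residuation-4},
\[
  d_\VV(f_{x_1}(x_1), f_{x_1}(x_2)) \vle d_\VV(k, d_X(x_1, x_2)) = d_X(x_1, x_2).
\]
Combining this with the first inequality yields equality, and the calculation makes transparent the fact that both reflexivity and transitivity of $d_X$ are used, explaining why the co-closure fails to be identity on the whole of $\Vgraph$. No real obstacle is anticipated beyond selecting the right representable predicate.
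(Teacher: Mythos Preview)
Your proposal is correct and follows essentially the same approach as the paper: both use the Galois connection for the easy inequality and the representable predicate $f_{x_1}(x) = d_X(x_1,x)$ for the harder one, verifying membership in $\gamma_X(d_X)$ via transitivity and concluding via reflexivity and antitonicity of $d_\VV$ in its first argument. If anything, your verification of non-expansiveness is slightly cleaner, since you quantify explicitly over general $y,z$ rather than reusing the fixed $x_1,x_2$.
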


  \begin{proof}

    Let $(X, d_X)$ be an object of $\Vcat$. The inequality
    $d_X \vle (\alpha_X \circ \gamma_X)(d_X)$ holds
    immediately because of \Cref{thm:galois_connection}. For the other
    direction, we will have to show that for all $x_1, x_2 \in X$, we
    have that
    $d_X(x_1,x_2) \vge (\alpha_X \circ
    \gamma_X)(d_X)(x_1,x_2)$.  To observe that, we define a function
    $p \colon X \to \VV$ given by $p(x)= d_X(x_1, x)$. First, recall
    that for all $x_1,x_2 \in X$
$$
\alpha_X (\gamma_X(d_X))(x_1, x_2) = \vMeet_{\substack{p \colon X \to
    \VV\\ d_X \vle d_{\VV} \circ (p \times p)}} d_{\VV}(p(x_1),
p(x_2))
$$
To see that $p$ is non-expansive with respect to $d_\VV$, note that by the definition of residuation
\begin{align*}
  d_X(x_1,x_2) \vle d_{\VV}(p(x_1), p(x_2))  
  &\iff
  p(x_1) \otimes d_X(x_1,x_2) \vle p(x_2) \\
  &\iff
  d_X(x_1,x_1) \otimes d_X(x_1,x_2) \vle d_X(x_1,x_2)
\end{align*}
and the latter follows by the triangle inequality (\Cref{lem:residuation}\eqref{it:residuation-3}).
Therefore, $p\in\gamma_X(d_X)$, and hence
\begin{align*}
  \alpha_X(\gamma_X(d_X))(x_1,x_2)
  &\vle d_{\VV}(p(x_1), p(x_2)) \\
  &=d_{\VV}(d_X(x_1,x_1), d_X(x_1,x_2)) \\
  &\vle d_{\VV}(k, d_X(x_1,x_2)) \tag{Antitonicity of $d_{\VV}(-,-)$ in the first argument} \\
  &=d_X(x_1,x_2) \tag{$x = d_{\VV}(k,x)$, \Cref{lem:residuation}\eqref{it:residuation-4}}
\end{align*}
% In other words, $d_\VV(p(x_1), p(x_2))$ is one of the elements of
% which we take the greatest lower bound. It now suffices to show that
% $d_X \vge d_{\VV}(p(x_1),p(x_2))$. Observe the following:
% \begin{align*}
%   d_\VV(p(x_1), p(x_2)) &= d_{\VV}(d_X(x_1,x_1), d_{X}(x_1,x_2))\\
%   &= \vJoin \{ v \in V \mid d_X(x_1,x_1) \otimes v \vle d_X(x_1,x_2)\}\tag{Def. of $d_{\VV}$} \\
%   &\vle \vJoin \{ v \in V \mid k \otimes v \vle d_X(x_1,x_2)\} \tag{Reflexivity}\\
%   &\vle \vJoin \{ v \in V \mid v \vle d_X(x_1,x_2)\}\\
%   &\vle d_X(x_1,x_2)
% \end{align*}
\end{proof}
\end{toappendix}

\begin{propositionrep}
  For all sets $X$ and $S \subseteq \VV^X$, $\alpha_X(S)$ is a
  $\VV$-category, i.e., an object of $\Vcat$.  The co-closure
  $\alpha_X\circ \gamma_X$ is the identity when restricted to
  $\Vcat$. Combined, this implies that for $d\in\Vgraph_X$,
  $\alpha_X(\gamma_X(d))$ is the metric closure of $d$, i.e.~the
  least element of $\Vcat_X$ above $d$.
\end{propositionrep}

\begin{proof}
  The first two statements follow from Lemmas~\ref{lem:image-alpha-vcat}
  and~\ref{lem:coclosure-id}.

  Now let $d\in\Vgraph_X$, $d'\in \Vcat_X$ with $d\vle d'$. By
  antitonicity of $\alpha$, $\gamma$ it holds that
  $\alpha_X(\gamma_X(d))\vle \alpha_X(\gamma_X(d')) =
  d'$. Furthermore $\alpha_X(\gamma_X(d))$ is a $\VV$-category, which
  implies the statement.
\end{proof}

Because of the Grothendieck construction
(\cite[Theorem~1.10.7]{j:categorical-logic-type-theory}), $\alpha$ and
$\gamma$ respectively correspond to fibred functors
$\alpha \colon \Vspred \to \Vcat$ and $\gamma : \Vcat \to
\Vspred$. Both functors keep morphisms unchanged and act on objects by
respectively applying the appropriate components of $\alpha$ and
$\gamma$. Now, we can state the strengthened version of
\Cref{thm:galois_connection}, by showing that $\alpha$ and $\gamma$
form a fibred adjunction. Note that due to the choice of the
orderings, $\gamma$ becomes the left and $\alpha$ the right adjoint
(cf.~\cite{kwrk:composing-codensity-bisimulations}).
\begin{theoremrep}\label{thm:adjunction}
  There is an adjunction $\gamma \dashv \alpha$.
  % between the
  % functors $\alpha\colon \Vspred\to\Vgraph$,
  % $\gamma\colon \Vgraph\to\Vspred$ induced by the corresponding
  % natural transformations.
\end{theoremrep}
\begin{proof}

  % Note that $\alpha$ acts as $\alpha_X$ on an object
  % $d\in\Vgraph_X$. On arrows it is the identity. Analogously for $\gamma$.
  
Let $d_X\in\Vcat_X$ be an arbitrary reflexive and transitive $\VV$-valued relation on $X$, $T \subseteq \VV^Y$ and $f \colon X \to Y$ a non-expansive map from $(X,d_X)$ to $(Y,\alpha_Y(T))$. We will define a non-expansive map $\eta_{d_X} \colon (X, d_X) \to (X,\alpha_X(\gamma_X(d_X)))$ and show that there exists a unique arrow $g \colon \gamma_X(d_X) \to T$ which makes the following diagram commute.
\[\begin{tikzcd}
	{(X,d_X)} && (X,\alpha_X(\gamma_X(d_X))) \\
	\\
	{(Y,\alpha_Y(T))}
	\arrow["f", from=1-1, to=3-1]
	\arrow["{\eta_{d_X}}", from=1-1, to=1-3]
	\arrow["{\alpha (g)}", dashed, from=1-3, to=3-1]
\end{tikzcd}\]
First, we set $\eta_{d_X} = \id$ to be the identity map. In order to show that it is non-expansive, we need to argue that $d_X \vle \alpha_X(\gamma_X (d_X)) $. This immediately follows from \Cref{thm:galois_connection}, where we have shown that $\alpha_X$ and $\gamma_X$ form a contravariant Galois connection.

Recall that $\alpha(g) = g$ and hence the diagram above would uniquely commute if we set $g = f$. The only remaining thing is to argue that $f$ is a morphism in $\Vspred$ between $(X, \gamma_X(d_X))$ and $(Y, T)$. In other words, we need to prove that:
$$
f^\bullet(T) = \{q \circ f \mid q \in T\} \subseteq \gamma_X(d_X) = \{p \colon X \to \VV \mid d_X \vle d_{\VV} \circ (p \times p)\}
$$
Because $f \colon (X, d_X) \to \alpha_X (T)$ is non-expansive, we have that for all $x_1, x_2 \in X$:
$$d_X(x_1,x_2) \vle \vMeet_{q	 \in T} d_{\VV}(q(f(x_1)), q(f(x_2)))$$ 
and therefore for any $q \in T$, we have that:
$$
d_X(x_1,x_2) \vle d_{\VV} ((q \circ f) (x_1), (q \circ f) (x_2))
$$
Observe that because of the inequality above every element of $f^{\bullet}(T)$ also belongs to $\gamma_X(d_X)$, which completes the proof.
\end{proof}

\section{The Coalgebraic Kantorovich Lifting}
\label{sec:kantorovich}

\subsection{Definition of the Coalgebraic Kantorovich Lifting}
\label{sec:def-kantorovich}

\todo{Add small comment on parallel to ref. [2]. B: done}The
coalgebraic Kantorovich lifting
\cite{bbkk:coalgebraic-behavioral-metrics} (originally defined for the
real-valued case and for a single evaluation map) -- extended to
codensity liftings in \cite{ks:codensity-liftings-monads} -- is
parametric in a set of \emph{evaluation functions} for a set
functor~$F$.  Evaluation functions are maps of type $F\VV\to\VV$ (generalizing the expected value
computation in the traditional Kantorovich lifting) and as such can be used to lift
$\VV$-valued predicates on a set $X$ to $\VV$-valued predicates on the
set $FX$.  More precisely, given an evaluation function
$\ev\colon F\VV\to\VV$ and a predicate $f\colon X\to\VV$ we obtain the
predicate $\ev\circ Ff\colon FX\to\VV$.  This operation extends to
sets of evaluation functions and sets of $\VV$-valued predicates,
where a set $\Lambda^F$ of evaluation functions for $F$ induces the
fibred functor $\Lambda^F\colon \Vspred\to\Vspred$, defined on
$S\subseteq\VV^X$ as follows:
\begin{equation*}
  \Lambda^F_X(S) = \{\ev\circ Ff \mid \ev\in\Lambda^F, f\in S \} \subseteq \VV^{FX}.
\end{equation*}
The Kantorovich lifting can now be restated via the fibred adjunction introduced previously.
Given $F$ and $\Lambda^F$ as above, we can define its Kantorovich
lifting $\kant{\Lambda^F}$ as follows:
\[ \kant{\Lambda^F} = \alpha{F} \circ \Lambda^F \circ \gamma \] or
more concretely, for an object $d_X$ of $\Vgraph$ and
$s,t\in FX$:
%\todo{B: defined on $(X,d_X)$ or $d_X$?}
\[\kant{\Lambda^F}(d_X)(s,t) = \bigsqcap\nolimits_{\ev\in\Lambda^F}
  \bigsqcap\nolimits_{f\in\gamma_X(d_X)} d_\VV(\ev(Ff(s)),\ev(Ff(t))). \] If
the set $\Lambda^F$ is clear from the context we sometimes write
$\overline{F}$ instead of $\kant{\Lambda^F}$.  

\todo{Note on whether there is a canonical lifting. B: I saw that
  Wojtek already did this.}
\begin{lemmarep}
  The Kantorovich lifting of a functor $F : \Set \to \Set$ is a
  functor $\overline{F} = \kant{\Lambda^F} \colon \Vgraph \to \Vgraph$, and fibred when restricted to $\Vcat$.
\end{lemmarep}
\begin{proof}
	We need to show that for an arbitrary non-expansive function $f \colon (X, d_X) \to (Y,d_Y)$, $Ff$ becomes a non-expansive function between $\overline{F}(X,d_X)$ and $\overline{F}(Y,d_Y)$. In other words, we require that:
	\[
		(\alpha_{FX} \circ \Lambda_X^F \circ \gamma_X)(d_X)\vle (\alpha_{FY} \circ \Lambda_Y^F \circ \gamma_Y)(d_Y) \circ (Ff \times Ff)
	\]
	First, we simplify the right-hand side:
	\begin{align*}
		&\left((\alpha_{FY} \circ \Lambda_Y^F \circ \gamma_Y)(d_Y) \circ (Ff \times Ff)\right)\\
		=&\left((\alpha_{FY} \circ \Lambda_Y^F \circ \gamma_Y(d_Y) )\circ (Ff \times Ff)\right)\\
		=&\left(Ff^* \circ \alpha_{FY} \circ \Lambda_Y^F \circ \gamma_Y(d_Y) \right) \tag{Definition of $(-)^*$} \\
		= &\left(\alpha_{FX} \circ (Ff)^\bullet \circ \Lambda_Y^F \circ \gamma_Y (d_Y)\right) \tag{$\alpha$ is natural}\\
		= &\left(\alpha_{FX}  \circ \Lambda_Z^F \circ f^\bullet \circ \gamma_Y (d_Y)\right)
	\end{align*}
	Hence, we have to show the following:
	\[
		(\alpha_{FX} \circ \Lambda_X^F \circ \gamma_X)(d_X)  \vle \left(\alpha_{FX} \circ \Lambda_X^F \circ f^\bullet \circ \gamma_Y (d_Y)\right)
	\]
	Since $\alpha_{FX}$ is antitone (\Cref{thm:galois_connection}), it suffices to argue the following:
	\[
	(\Lambda_X^F \circ \gamma_X)(d_X) \supseteq (\Lambda_X^F \circ f^\bullet \circ \gamma_Y)(d_Y)
	\]
	Because of the lax naturality of $\gamma$ (\Cref{lem:gamma_laxly_natural}), we simplify the above even further:
	\[
		(\Lambda_X^F \circ \gamma_X)(d_X) \supseteq (\Lambda_X^F \circ \gamma_X \circ f^*)(d_Y)	
	\]
	Since it is not hard to see that $\Lambda_X^F$ is monotone, we are left with proving:
	\[
	\gamma_Y (d_Y \circ (f \times f)) \subseteq \gamma_X(d_X)
	\]
	Let $p \in \gamma_X (d_Y \circ (f \times f))$. In other words, for all $x_1, x_2 \in X$ we have that \[d_{\VV}(p(x_1), p(x_2)) \vge d_Y(f(x_1), f(x_2))\]
	Combining the above with the fact that $f$ is a non-expansive function from $(X,d_X)$ to $(Y, d_Y)$, we obtain that $d_{\VV}(p(x_1), p(x_2)) \vge d_X(x_1, x_2)$
	and therefore $p \in \gamma_X(d_X)$, thus showing functoriality.

 For fibredness, consider a function $f:X\to Y$ in $\Set$, and an object $(Y,d_Y)$ in $\Vcat$. We wish to see that for all $s,t\in FX$ \[\kant{\Lambda^F}(d_Y\circ (f\times f))(s,t)=\kant{\Lambda^F}(d_Y)\circ (Ff\times Ff)(s,t)\]  This is equivalent to
 \begin{eqnarray*}
   && \bigsqcap_{\ev\in\Lambda^F} \bigsqcap_{p\in\gamma_X(d_Y\circ
     (f\times f))} d_\VV(\ev(Fp(s)),\ev(Fp(t))) \\ & = &
   \bigsqcap_{\ev\in\Lambda^F} \bigsqcap_{q\in\gamma_Y(d_Y)}
   d_\VV(\ev(Fq(Ff(s))),\ev(Fq(Ff(t))))
\end{eqnarray*}
and since in the category $\Vcat$, we have that $(f^\bullet \circ \gamma_Y)(d_Y) = (\gamma_X \circ f^*)(d_Y)$, then the above equality holds, completing the proof. \qedhere
	
\end{proof}
\begin{remark}
	Instantiating the construction above with the distribution
        functor $\mathcal{D}$ and a single evaluation function
        $\expect{}$ taking the expected values yields the usual
        Kantorovich lifting, while in the case of powerset functor
        $\mathcal{P}$ and a single evaluation function $\sup$, one
        obtains the  (directed) Hausdorff metric. Despite those two
        instantiations corresponding to well-known constructions,
        there is no well-defined notion of \emph{canonical lifting}
        and there are often different possibilities for a given functor.
        For example, the usual symmetric Hausdorff distance arises
        by additionally considering the dual evaluation
        function~$\inf$~\cite{ws:hemimetrics-fuzzy-lax}.
        We will later also see that constant factors admit more than
        one choice of evaluation functions.
%        For the coproduct (described in \Cref{sec:polynomial-functors}) we do not obtain the (symmetric) coproduct in pseudometric spaces, but instead an asymmetric lifting that prioritizes one chosen set.
	\end{remark}
\subsection{Compositionality of the Kantorovich Lifting}
\label{sec:compositionality}

When an endofunctor is given as the composition of two or more
individual set functors, it is natural to ask under which conditions
its Kantorovich lifting is also the composition of Kantorovich
liftings of the respective functors.  Specifically, our aim in this
section is to identify situations where this composition happens
already at the level of the underlying sets of evaluation maps.  If
$\ev_F$ is an evaluation function for $F$ and $\ev_G$ is an evaluation
function for $G$, then an evaluation function for $FG$ is given by
$\ev_F * \ev_G := \ev_F \circ F\ev_G$.  Extending this to sets of
evaluation functions, we put
\begin{math}
  \Lambda^F * \Lambda^G = \{\ev_F * \ev_G \mid \ev_F\in\Lambda^F, \ev_G\in\Lambda^G\}
\end{math}
for sets $\Lambda^F$ and $\Lambda^G$ of evaluation functions for functors $F$ and $G$, respectively.

\begin{definition}[Compositionality]
  \label{defn:compositionality}
  Given two functors $F$ and $G$ and sets $\Lambda^F$ and $\Lambda^G$ of evaluation functions, we say that we have \emph{compositionality} if
  \begin{math}
    \kant{\Lambda^F} \circ \kant{\Lambda^G} = \kant{\Lambda^F * \Lambda^G}.
  \end{math}
\end{definition}
%
%\todo{P: Add a remark on~\cite[Corollary 3.11]{GoncharovEA23}.}
%
Expanding definitions, compositionality amounts to showing that 
\begin{equation}
  \label{eqn:compositional-unfolded}
  \kant{\Lambda^F} \circ \kant{\Lambda^G} = \alpha FG \circ
  \Lambda^F \circ \gamma G \circ \alpha G \circ \Lambda^G \circ \gamma
  = \alpha FG \circ \Lambda^F \circ \Lambda^G \circ \gamma
  = \kant{\Lambda^F * \Lambda^G}.
\end{equation}
One inequality (`$\vle$') always holds:
As $\alpha$ and $\gamma$ form a Galois connection \cite{bgkmfsw:logics-coalgebra-adjunction}, we have $\id_{\Vspred_X} \subseteq \gamma_{GX} \circ \alpha_{GX}$, and thus we may use antitonicity of $\alpha$ to deduce `$\vle$' in~\eqref{eqn:compositional-unfolded}.
Baldan et al.~\cite[Lemma~7.5]{bbkk:coalgebraic-behavioral-metrics} prove this for the special case of pseudometric liftings.

%\todo[inline]{B: Move the rest of this section to Section~\ref{sec:polynomial-functors}?}

The other inequality, `$\vge$', does not hold in general, and
requires more work.  Still, one notices that a sufficient
condition is that
$\Lambda^F\circ \gamma \circ \alpha \subseteq \gamma{F} \circ
\alpha{F}\circ\Lambda^F$:
\begin{eqnarray*}
  \kant{\Lambda^F} \circ \kant{\Lambda^G} &
  = & \alpha{FG}\circ\Lambda^F\circ \gamma{G} \circ
  \alpha{G}\circ\Lambda^G\circ\gamma \\
  & \vge & \alpha{FG} \circ \gamma{FG} \circ \alpha{FG} \circ \Lambda^F\circ\Lambda^G\circ\gamma
  =\alpha FG\circ\Lambda^{FG}\circ\gamma = \kant{\Lambda^F * \Lambda^G},
\end{eqnarray*}
using that $\alpha\circ \gamma\circ \alpha = \alpha$ for every Galois
connection. Note that it is enough to prove the sufficient condition
on non-empty sets, since $\gamma$ always yields a non-empty set.

Before discussing the problem of systematically constructing sets of
evaluation functions such that the sufficient condition (lax
commutativity of $\Lambda^F$ with the closure induced by the Galois
connection) holds, we consider a few examples where compositionality
fails:

%\subsection{Counterexamples to Compositionality}
%\label{sec:counterexamples-comp}
\begin{example}\label{expl:non-compositional}
  Consider the powerset functor $\pfun$ with the predicate lifting
  $\sup$ ($\Lambda^\pfun = \{\sup\}$), and the discrete distribution
  functor $\dfun$ with the predicate lifting $\expect{}$ that takes
  expected values ($\Lambda^\dfun = \{\expect{}\}$).  With just these
  predicate liftings, compositionality fails for all four combinations
  $\pfun\pfun$, $\pfun\dfun$, $\dfun\pfun$, $\dfun\dfun$.  We show
  this for the case of $\pfun\dfun$ and discuss the others in
  \full{the appendix
    (\Cref{expl:non-compositional-full})}\short{\cite{dgkknrw:behavioural-metrics-compositionality-arxiv}}.
  Let $X$ be the two-element set $\{x,y\}$, equipped with the discrete
  metric $d$ (that is, $d(x,y) = d(y,x) = 1$), so that in particular
  all maps $g\colon X\to[0,1]$ are non-expansive.  We also consider
  the non-expansive function
  $f_\dfun\colon\nonexp{(\dfun
    X,\kant{\{\expect{}\}}d)}{([0,1],d_\unitQ)}$ given by
  \begin{math}
    f_\dfun(p\cdot x + (1-p)\cdot y) = \min(p,1-p).
  \end{math}
  Put $U = \{1\cdot x, 1\cdot y\}$ and $V = {\{1\cdot x,\nicefrac{1}{2}\cdot x+\nicefrac{1}{2}\cdot y,1\cdot y\}}$.
  Then
  \begin{math}
    \sup f_\dfun[U] = \max(0,0) = 0
  \end{math}
  and
  \begin{math}
    \sup f_\dfun[V] = \max(0,\nicefrac{1}{2},0) = \nicefrac{1}{2},
  \end{math}
  so that $\kant{\{\sup\}}(\kant{\{\expect{}\}}d)(U,V) \ge \nicefrac{1}{2}$.
  For every $g\colon X\to[0,1]$ one finds that
  \begin{align*}
    (\sup * \expect{})(g)(U) &= \max(g(x),g(y)) 
    = \textstyle\max(g(x),\nicefrac{g(x)+g(y)}{2},g(y)) = (\sup * \expect{})(g)(V),
  \end{align*}
  implying that $K_{\{\sup*\expect{}\}}d(U,V) = 0$.
\end{example}

\begin{toappendix}
  
\begin{example}\label{expl:non-compositional-full}
  Consider the powerset functor $\pfun$ with the predicate lifting
  $\sup$, and the discrete distribution functor $\dfun$ with the
  predicate lifting $\expect{}$ that takes expected values.  With just
  these predicate liftings, compositionality fails for all four
  combinations $\pfun\pfun$, $\pfun\dfun$, $\dfun\pfun$, $\dfun\dfun$.
  In all cases, let $X$ be the two-element set $\{x,y\}$, equipped
  with the discrete metric $d$ (that is, $d(x,y) = d(y,x) = 1$), so
  that in particular all maps $g\colon X\to[0,1]$ are non-expansive.
  We also consider the following two non-expansive functions on
  $\mathcal{P}X$ respectively $\mathcal{D}X$:
  % \todo{B:
  %   $\nonexp{(X,d_X)}{(Y,d_Y)}$ is used very little in the
  %   paper. Either be consistent or only use $(X,d_X)\to (Y,d_Y)$.}
  \begin{align*}
    f_\pfun\colon\nonexp{(\pfun X,\kant{\{\sup\}}d)}{([0,1],d_\unitQ)} &&&
    f_\dfun\colon\nonexp{(\dfun X,\kant{\{\expect{}\}}d)}{([0,1],d_\unitQ)} \\
    f_\pfun(A) = \begin{cases}
      1, & \text{if $A=\{x,y\}$} \\
      0, & \text{otherwise}\\
    \end{cases}
    &&&
    f_\dfun(p\cdot x + (1-p)\cdot y) = \min(p,1-p)
  \end{align*}
  To see that $f_\mathcal{P}$ is indeed non-expansive, note that for
  $B\subseteq A$ we have $f_\mathcal{P}(B) \ominus f_\mathcal{P}(A) =
  0 = \kant{\{\sup\}}d(A,B)$ by monotonocity of $f_\mathcal{P}$ and
  $\sup$, and for $B\nsubseteq A$ we have $\kant{\{\sup\}}d(A,B) =
  1$ as we may pick $g$ to be the characteristic function of some $z\in B\setminus A$.
  In $\mathcal{D} X$, the distance of $\mu = p\cdot x + (1-p)\cdot y$ and $\nu = q\cdot x + (1-q)\cdot y$ can be computed as
  \begin{equation*}
    \kant{\{\expect{}\}}d(\mu,\nu)
    = \sup \{(q\cdot a + (1-q)\cdot b) - (p\cdot a + (1-p)\cdot b) \mid a,b\in[0,1] \}
    = |q-p|,
  \end{equation*}
  so that $f_\mathcal{D}$ is easily seen to be non-expansive as well.
  \begin{enumerate}
    \item\label{expl:pow-pow-fail} Put $U=\{\{x\},\{y\}\}$ and $V=\{\{x\},\{y\},\{x,y\}\}$.
      Then
      \begin{equation*}
        \sup f_\pfun[U] = 0 \quad\text{and}\quad
        \sup f_\pfun[V] = 1,
      \end{equation*}
      so that $\kant{\{\sup\}}(\kant{\{\sup\}} d)(U,V) = 1$.
      For every $g\colon X\to[0,1]$ one finds that
      \begin{equation*}
        (\sup * \sup)(g)(U) = \max(g(x),g(y)) = (\sup * \sup)(g)(V),
      \end{equation*}
      implying that $\kant{\{\sup * \sup\}}d(U,V) = 0$.
    \item\label{expl:pow-dist-fail} Put $U = \{1\cdot x, 1\cdot y\}$ and $V = \{1\cdot x,\nicefrac{1}{2}\cdot x+\nicefrac{1}{2}\cdot y,1\cdot y\}$.
      Then
      \begin{equation*}\textstyle
        \sup f_\dfun[U] = \max(0,0) = 0\quad\text{and}\quad
        \sup f_\dfun[V] = \max(0,\nicefrac{1}{2},0) = \nicefrac{1}{2},
      \end{equation*}
      so that $\kant{\{\sup\}}(\kant{\{\expect{}\}}d)(U,V) \ge \nicefrac{1}{2}$.
      For every $g\colon X\to[0,1]$ one finds that
      \begin{align*}
        (\sup * \expect{})(g)(U) &= \max(g(x),g(y)) 
        = \textstyle\max(g(x),\nicefrac{g(x)+g(y)}{2},g(y)) = (\sup * \expect{})(g)(V),
      \end{align*}
      implying that $\kant{\{\sup * \expect{}\}}(d)(U,V) = 0$.
    \item\label{expl:dist-pow-fail} Put $\mu = \nicefrac{1}{2}\cdot\{x\}+\nicefrac{1}{2}\cdot\{y\}$ and $\nu = 1\cdot\{x,y\}$.
      Then
      \begin{equation*}\textstyle
        \expect{\mu}(f_\pfun) = \nicefrac{1}{2}\cdot 0+\nicefrac{1}{2}\cdot 0 = 0
        \quad\text{and}\quad
        \expect{\nu}(f_\pfun) = 1\cdot 1 = 1,
      \end{equation*}
      so that $\kant{\{\expect{}\}}(\kant{\{\sup\}} d)(U,V) = 1$.
      For every $g\colon X\to[0,1]$ one finds that
      \begin{equation*}\textstyle
        (\expect{} * \sup)(g)(\mu) = \nicefrac{g(x)+g(y)}{2} \quad\text{and}\quad
        (\expect{} * \sup)(g)(\nu) = \max(g(x),g(y)),
      \end{equation*}
      implying that $\kant{\{\expect{} * \sup\}}(d)(\mu,\nu) \le \nicefrac{1}{2}$.
    \item\label{expl:dist-dist-fail} Put $\mu = \nicefrac{1}{2}\cdot(1\cdot x) + \nicefrac{1}{2}\cdot(1\cdot y)$ and $\nu = 1\cdot(\nicefrac{1}{2}\cdot x+\nicefrac{1}{2}\cdot y)$.
      Then
      \begin{equation*}\textstyle
        \expect{\mu}(f_\dfun) = 0 \quad\text{and}\quad
        \expect{\nu}(f_\dfun) = \nicefrac{1}{2},
      \end{equation*}
      so that $\kant{\{\expect{}\}}(\kant{\{\expect{}\}} d)(U,V) \ge \nicefrac{1}{2}$.
      For every $g\colon X\to[0,1]$ one finds that
      \begin{equation*}\textstyle
        (\expect{} * \expect{})(g)(\mu) = \nicefrac{g(x)+g(y)}{2} = (\expect{} * \expect{})(g)(\nu),
      \end{equation*}
      implying that $\kant{\{\expect{} * \expect{}\}}(d)(\mu,\nu) = 0$.
  \end{enumerate}
\end{example}

\end{toappendix}

\subsection{Finite Coproduct Polynomial Functors}
\label{sec:polynomial-functors}

\todo{Polynomial functor phrasing more precise (finite coproducts) +
  explanation of finite coproducts. B: Wojtek did this, I changed a
  tiny bit}

We now assume that the first functor ($F$ with the lifting
$\overline{F} = \kant{\Lambda_F}$) is in fact a polynomial functor
(with finite coproducts) and we show that in this case
compositionality holds automatically for certain sets of predefined
evaluation maps. This will later allow us to use compositionality to
define up-to techniques for large classes of coalgebras that are based
on such functors.

Consider the set of polynomial functors (with finite coproducts) given
by
\begin{equation*}
  F \Coloneqq \constfun{B} \mid \Id \mid \sprod_{i\in I} F_i \mid F_1 + F_2
\end{equation*}
where $\constfun{B}$ is the constant functor mapping to some set $B$ and $\Id$ is the identity functor.
We support products over arbitrary index sets, but we restrict to
finitary coproducts for simplicity.
% \todo[inline]{P: We could
%   potentially also do infinitary coproducts, but I don't think we need
%   them. The construction would likely depend on defining some
%   well-ordering on the index set.}

For such polynomial functors we can obtain compositionality in a
structured manner, by constructing suitable sets of predicate liftings
alongside with the functors themselves.  We recursively define a set
$\Lambda^F$ of evaluation functions for each polynomial functor $F$ as
follows:
\begin{description}
\item[constant functors:] $F = \constfun{B}$. Here we choose
  $\Lambda^F$ to be any set of maps of type $B \to \VV$.
  %\todo{P: Do    we need at least one?}  
  (For instance, when $B=\VV$ we can put
  $\Lambda^F = \{\id_\VV\}$.)
  \item[identity functor:] $F = \Id$. We put $\Lambda^F = \{ \id_\VV \}$.
  \item[product functors:] $F = \sprod_{i\in I} F_i$. Put
    $\Lambda^F = \{\ev_i\circ\pi'_i \mid i\in I,
    \ev_i\in\Lambda^{F_i}\}$ where $\pi'_i\colon \sprod_{i\in
      I}F_i\VV\to F_i\VV$ are the projections.
  % \item For product functors $F = F_1\times F_2$ we put
  %   $\Lambda^F = \{\ev\circ\pi_i \mid i\in \{1,2\}, \ev\in\Lambda^{F_i}\}$.
  \item[coproduct functors:] $F = F_1 + F_2$. We put
    \begin{math}
      \Lambda^F =
      \{[\ev_1,\top] \mid \ev_1\in\Lambda^{F_1}\}\cup
      \{[\bot,\ev_2] \mid \ev_2\in\Lambda^{F_2}\}\cup
      \{[\bot,\top]\},
    \end{math}
    where $\top$ and $\bot$ denote constant maps into $\VV$.
    % where for $t_1\in F_1\VV$ and $t_2\in F_2\VV$ we put
    % \begin{align*}
    %     [\ev_1,\top](t_1) &= \ev_1(t_1)
    %   & [\bot,\ev_2](t_1) &= \bot
    %   & [\bot,\top](t_1) &= \bot \\
    %   [\ev_1,\top](t_2) &= \top
    %   & [\bot,\ev_2](t_2) &= \ev_2(t_2)
    %   & [\bot,\top](t_2) &= \top
    % \end{align*}
  \end{description}

\begin{remark}\label{rem:evalfun-associative}
  We note that the construction for coproduct functors is associative,
  that is, for functors $F_1$, $F_2$ and $F_3$ the sets $\Lambda^{(F_1
    + F_2) + F_3}$ and $\Lambda^{F_1 + (F_2 + F_3)}$ coincide up to
  isomorphism.

  Note also that the exponentiation $F^AX = (FX)^A$ is special case of
  the product where $I=A$ and $F_i=F$ for all $i\in I$.
\end{remark}
\begin{remark}
  Throughout the paper, we restrict our attention to finite coproducts
  for the sake of simplicity, but we would like to note that our
  construction could be generalized to infinite sets. In general,
  since our lifting for the coproduct will be based on prioritization, we
  need to compare the sets in order of preference, i.e. have the
  additional structure of a well-order. This immediately works for
  countable sets.
\end{remark}

% \todo{B: explain which metrics these evaluation maps induce. Argue
%   that these are the natural ones. The following proposition still
%   needs a proof!}

The choice of evaluation maps above induces the following
liftings, leading to the natural expected distances in the directed
case. 
  
\begin{propositionrep}
  Given a polynomial functor $F$ and a set of evaluation maps
  $\Lambda^F$ as defined above, the corresponding lifting
  $\overline{F} = \kant{\Lambda^F}$ is defined as follows on objects
  of $\Vgraph$: $\overline{F}(d_X) = d^F_X\colon FX\times FX\to\VV$
  where
  \begin{description}
  \item[constant functors:] $d^F_X\colon B\times B\to \VV$,
    $d^F_X(b,c) = \bigsqcap_{\ev\in\Lambda^F}
    d_\VV(\ev(b),\ev(c))$.
  \item[identity functor:] $d^F_X \colon X\times X\to \VV$ with
    $d^F_X = \alpha_X(\gamma_X(d))$.
  \item[product functors:]
    $d^F_X\colon \sprod_{i\in I} F_iX\times \sprod_{i\in I} F_iX\to
    \VV$,
    $d^F_X(s,t) = \bigsqcap_{i\in I} d_X^{F_i}(\pi_i(s),\pi_i(t))$
    where $\pi_i \colon \sprod_{i\in I} F_iX\to F_iX$ are the
    projections.
  \item[coproduct functors:]
    $d^F_X\colon (F_1X+F_2X)\times (F_1X+F_2X)\to \VV$, where
    
    $d^F_X(s,t) =
    \begin{cases}
      d^{F_i}(s,t) & \mbox{if $s,t\in F_iX$ for $i\in\{1,2\}$} \\
      \top & \mbox{if $s\in F_1X$, $t\in F_2X$} \\
      \bot & \mbox{if $s\in F_2X$, $t\in F_1X$}
    \end{cases}$
  \end{description}
\end{propositionrep}

\begin{proof}
  \begin{description}
  \item[constant functors:] let $b,c\in B$. We have, since $Ff$ is
    the identity:
    \begin{eqnarray*}
      d^F_X(b,c) & = & \bigsqcap_{\ev\in\Lambda^{F}}
      \bigsqcap_{f\in\gamma(d_X)} d_\VV(\ev(Ff(b)),\ev(Ff(c))) \\
      & = & \bigsqcap_{\ev\in\Lambda^{F}}
      d_\VV(\ev(b),\ev(c))
    \end{eqnarray*}
  \item[identity functor:] let $x,y\in X$. We have, since $Ff = f$ and
    $\ev$ is the identity:
    \begin{eqnarray*}
      d^F_X(x,y) & = & \bigsqcap_{\ev\in\Lambda^{F}}
      \bigsqcap_{f\in\gamma(d_X)} d_\VV(\ev(Ff(x)),\ev(Ff(y))) \\
      & = & \bigsqcap_{f\in\gamma(d_X)} d_\VV(f(x)),f(y))) =
      \alpha_X(\gamma_X(d_X))(x,y)
    \end{eqnarray*}
  \item[product functors:] Let $s,t\in \sprod_{i\in I} F_iX$. Let
    $\pi_i\colon \sprod_{i\in I} F_iX\to F_iX$,
    $\pi'_i\colon \sprod_{i\in I} F_i\VV\to F_i\VV$. Then
    \begin{eqnarray*}
      d^F_X(s,t) & = & \bigsqcap_{\ev\in\Lambda^F}
      \bigsqcap_{f\in\gamma(d_X)} d_\VV(\ev(Ff(s)),\ev(Ff(t))) \\
      & = & \bigsqcap_{i\in I} \bigsqcap_{\ev_i\in\Lambda^{F_i}}
      \bigsqcap_{f\in\gamma(d_X)}
      d_\VV(\ev_i(\pi'_i(Ff(s))),\ev_i(\pi'_i(Ff(t)))) \\
      & = & \bigsqcap_{i\in I} \bigsqcap_{\ev_i\in\Lambda^{F_i}}
      \bigsqcap_{f\in\gamma(d_X)}
      d_\VV(\ev_i(F_if(\pi_i(s))),\ev_i(F_if(\pi_i(t)))) \\
      & = & \bigsqcap_{i\in I} d_X^{F_i}(\pi_i(s),\pi_i(t )) \\
    \end{eqnarray*}
    using the fact that $\pi'_i\circ Ff = F_if\circ \pi_i$.
  \item[coproduct functors:] We use the properties of $d_\VV$ shown in
    \Cref{lem:residuation}. Whenever $s,t\in F_1X$, we have that
    \begin{eqnarray*}
      d^F_X(s,t) & = & \bigsqcap_{\ev_1\in\Lambda^{F_1}}
      \bigsqcap_{f\in\gamma(d_X)} d_\VV(\ev_1(Ff(s)),\ev_1(Ff(t)))
      \sqcap d_\VV(\bot,\bot) \\
      & = & d^{F_1}(s,t) \sqcap \top = d^{F_1}(s,t)
    \end{eqnarray*}
    If instead $s,t\in F_2X$, we obtain
    \begin{eqnarray*}
      d^F_X(s,t) & = & \bigsqcap_{\ev_2\in\Lambda^{F_2}}
      \bigsqcap_{f\in\gamma(d_X)} d_\VV(\ev_2(Ff(s)),\ev_2(Ff(t)))
      \sqcap d_\VV(\top,\top) \\
      & = & d^{F_2}(s,t) \sqcap \top = d^{F_2}(s,t)
    \end{eqnarray*}
    Now assume that $s\in F_1X$, $t\in F_2X$. Then:
    \begin{eqnarray*}
      d^F_X(s,t) & = & \bigsqcap_{\ev_1\in\Lambda^{F_1}}
      \bigsqcap_{f\in\gamma(d_X)} d_\VV(\ev_1(Ff(s)),\top)
      \mathop{\sqcap} \\
      && \qquad 
      \bigsqcap_{\ev_2\in\Lambda^{F_2}} \bigsqcap_{f\in\gamma(d_X)}
      d_\VV(\bot,\ev_1(Ff(t)))
      \sqcap d_\VV(\bot,\top) \\
      & = & \top\sqcap \top\sqcap \top = \top
    \end{eqnarray*}
    Now assume that $s\in F_2X$, $t\in F_1X$. Then, since
    $d_\VV(\bot,\top) = \bot$:
    \begin{eqnarray*}
      d^F_X(s,t) & = & \bigsqcap_{\ev_1\in\Lambda^{F_1}}
      \bigsqcap_{f\in\gamma(d_X)} d_\VV(\top,\ev_1(Ff(t)))
      \mathop{\sqcap} \\
      && \qquad 
      \bigsqcap_{\ev_2\in\Lambda^{F_2}} \bigsqcap_{f\in\gamma(d_X)}
      d_\VV(\ev_2(Ff(s)),\bot)) \sqcap
      d_\VV(\top,\bot) = \bot
    \end{eqnarray*}
  \end{description}
\end{proof}

Under this choice of evaluation functions we can show the following,
which implies compositionality (cf.~\Cref{sec:compositionality}):
\begin{propositionrep}
  \label{prop:comp-polynomial}
  For every polynomial functor $F$ and the corresponding set
  $\Lambda^F$ of evaluation maps (as above) we have
  $\Lambda^F\circ \gamma\circ\alpha \subseteq \gamma{F}\circ\alpha{F}
  \circ \Lambda^F$ on non-empty sets of predicates.
\end{propositionrep}
\begin{proof}
  Given $F$ and $\Lambda^F$, we need to show that for every set $X$, every $S\subseteq\VV^X$, every non-expansive $f\colon\nonexp{(X,\alpha_X(S))}{(\VV,d_\VV)}$, every $\ev\in\Lambda^F$ and every $t_1,t_2\in FX$ we have
  \begin{equation}
    \alpha_{FX}(\Lambda^F_X(S))(t_1,t_2) \vle d_\VV(\ev\circ Ff(t_1), \ev\circ Ff(t_2)).
    \label{eqn:compatible}
  \end{equation}
  We proceed by induction over $F$.
  \begin{itemize}
    \item For $F=\constfun{B}$, the set $\Lambda^F$ consists of some maps of type $B\to\VV$.
      Then for \eqref{eqn:compatible} to hold we need that for every $b_1,b_2\in B$ and every $\ev\in\Lambda^F$ we have
      \begin{equation*}
        \vMeet_{\ev'\in\Lambda^F} d_\VV(\ev'(b_1), \ev'(b_2)) \vle d_\VV(\ev(b_1),\ev(b_2)).
      \end{equation*}
      This holds automatically, regardless of the specific choice of $\Lambda^F$, as the term on the right is one of the terms in the meet on the left.
    \item For $F=\Id$, the equation \eqref{eqn:compatible} reduces to the statement that $f$ is non-expansive wrt.~$\alpha_X(S)$, which is true by assumption.
    \item For $F=\sprod_{i\in I} F_i$, let $s = (s_i)_{i\in I}, t = (t_i)_{i\in I} \in FX$ and fix some $j\in I$.
      Then we have for each $\ev\in\Lambda^{F_j}$ and each $g\colon X\to\VV$ that $\ev\circ\pi_j\circ Fg(s) = \ev\circ F_j g(s_j)$ and $\ev\circ\pi_j\circ Fg(t) = \ev\circ F_j g(t_j)$, and thus
      \begin{align*}
        &\alpha_X(\Lambda^F_X(S))(s,t) \\
        &= \vMeet_{i\in I}\vMeet_{\ev\in\Lambda^{F_i}}\vMeet_{g\in S} d_\VV(\ev\circ\pi_i\circ Fg(s), \ev\circ\pi_i\circ Fg(t)) \\
        &\vle \vMeet_{\ev\in\Lambda^{F_j}}\vMeet_{g\in S} d_\VV(\ev\circ\pi_j\circ Fg(s), \ev\circ\pi_j\circ Fg(t)) \\
        &= \vMeet_{\ev\in\Lambda^{F_j}}\vMeet_{g\in S} d_\VV(\ev\circ F_j g(s_j), \ev\circ F_j g(t_j)) \\
        &= \alpha_X(\Lambda^{F_j}_X(S))(s_j,t_j) \\
        &\vle d_\VV(\ev\circ F_j f(s_j), \ev\circ F_j f(t_j)) &&\by{IH} \\
        &= d_\VV(\ev\circ\pi_j\circ F f(s), \ev\circ\pi_j\circ F f(t))
      \end{align*}
      for each $f\colon\nonexp{(X,\alpha_X(S))}{(\VV,d_\VV)}$ and each $\ev\in\Lambda^{F_j}$.
    % \item For $F=F_1 \times F_2$, let $s = (s_1,s_2), t = (t_1,t_2) \in FX$, and assume that $\ev = \ev_1\circ\pi_1$ for some $\ev_1\in\Lambda^{F_1}$.
    %   Then we have, for each $\ev'_1\in\Lambda^{F_1}$ and $g\colon X\to\VV$, that $\ev'_1\circ\pi_1\circ Fg(s) = \ev'_1\circ F_1 g(s_1)$ and $\ev'_1\circ\pi_1\circ Fg(t) = \ev'_1\circ F_1 g(t_1)$, and thus
    %   \begin{align*}
    %     &\alpha_X(\logic(\Lambda^F)(S))(s,t) \\
    %     &\vle \vMeet_{\ev'_1\in\Lambda^{F_1}}\vMeet_{g\in S} d_\VV(\ev'_1\circ\pi_1\circ Fg(s), \ev'_1\circ\pi_1\circ Fg(t)) \\
    %     &= \vMeet_{\ev'_1\in\Lambda^{F_1}}\vMeet_{g\in S} d_\VV(\ev'_1\circ F_1 g(s_1), \ev'_1\circ F_1 g(t_1)) \\
    %     &= \alpha_X(\logic(\Lambda^{F_1})(S))(s_1,t_1) \\
    %     &\vle d_\VV(\ev_1\circ F_1 f(s_1), \ev_1\circ F_1 f(t_1)) &&\by{IH} \\
    %     &= d_\VV(\ev_1\circ\pi_1\circ F f(s), \ev_1\circ\pi_1\circ F f(t)).
    %   \end{align*}
    %   The proof when $\ev=\ev_2\circ\pi_2$ is analogous.
    \item For $F=F_1 + F_2$, let $s\in F_i X \subseteq F X$ and $t\in F_j X \subseteq F X$, and let $\ev\in\Lambda^F$.
      \begin{itemize}
        \item If $i=1$ and $\ev$ is of the form $[\bot,\top]$ or $[\bot,\ev_2]$, then the right hand side of \eqref{eqn:compatible} evaluates to $d_\VV(\bot,v)$ for some $v\in\VV$, which equals $\top$.
        \item If $j=2$ and $\ev$ is of the form $[\bot,\top]$ or $[\ev_1,\top]$, then the right hand side of \eqref{eqn:compatible} evaluates to $d_\VV(v,\top)$ for some $v\in\VV$, which equals $\top$.
        \item If $i=j=1$, and $\ev$ is of the form $[\ev_1,\top]$, then we show ~\eqref{eqn:compatible} as follows.
          \begin{align*}
            &\alpha_X(\Lambda^F_X(S))(s,t) \\
            &\vle \vMeet_{\ev_1\in\Lambda^{F_1}} \vMeet_{g\in S} d_\VV([\ev_1,\top]\circ Fg(s), [\ev_1,\top]\circ Fg(t)) \\
            &= \vMeet_{\ev_1\in\Lambda^{F_1}} \vMeet_{g\in S} d_\VV(\ev_1\circ F_1 g(s), \ev_1\circ F_1 g(t)) \\
            &= \alpha_X(\Lambda^{F_1}_X(S))(s,t) \\
            &\vle d_\VV(\ev_1\circ F_1 f(s), \ev_1\circ F_1 f(t)) && \by{IH} \\
            &= d_\VV([\ev_1,\top]\circ Ff(s), [\ev_1,\top]\circ Ff(t))
          \end{align*}
        \item If $i=j=2$, and $\ev$ is of the form $[\bot,\ev_2]$, the proof is very similar to the previous item.
        \item The only remaining case is that where $i=2$ and $j=1$.
          In that case, we can pick $[\bot,\top]\in\Lambda^F$ and some arbitrary $g\in S$ (here it is important that $S$ is non-empty) to show that the left hand side of \eqref{eqn:compatible} evaluates to~$\bot$:
          \begin{equation*}
            \alpha_X(\Lambda^F_X(S))(s,t)
            \vle d_\VV([\bot,\top]\circ Fg(s), [\bot,\top]\circ Fg(t)) = d_\VV(\top,\bot) = \bot. \qedhere
          \end{equation*}
      \end{itemize}
  \end{itemize}
\end{proof}

Using the arguments of \Cref{sec:compositionality}, we infer:
\begin{corollary}
  Let $F$ and $G$ be functors, and $\Lambda^F$ and $\Lambda^G$ be sets of predicate liftings for them.
  If $F$ and $\lambda^F$ are as in~\Cref{prop:comp-polynomial}, then $\kant{\Lambda^F}\circ \kant{\Lambda^G} = \kant{\Lambda^F*\Lambda^G}$.
\end{corollary}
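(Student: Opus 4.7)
The plan is to observe that the corollary is essentially a packaging of the discussion in \Cref{sec:compositionality} together with \Cref{prop:comp-polynomial}. Compositionality was unfolded there into the equality $\alpha FG \circ \Lambda^F \circ \gamma G \circ \alpha G \circ \Lambda^G \circ \gamma = \alpha FG \circ \Lambda^F \circ \Lambda^G \circ \gamma$, and it was noted that this splits into two inequalities, one of which is always true.

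First I would recall the easy inequality $\kant{\Lambda^F} \circ \kant{\Lambda^G} \vle \kant{\Lambda^F * \Lambda^G}$. This holds in full generality: since $(\gamma,\alpha)$ is a Galois connection (\Cref{thm:galois_connection}), the unit $\id \subseteq \gamma G \circ \alpha G$ holds, and monotonicity of $\Lambda^F$ together with antitonicity of $\alpha FG$ yields the inequality. No assumption on $F$ or $\Lambda^F$ is needed for this direction.

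For the reverse inequality $\kant{\Lambda^F} \circ \kant{\Lambda^G} \vge \kant{\Lambda^F * \Lambda^G}$, I would invoke the sufficient condition identified in \Cref{sec:compositionality}, namely that $\Lambda^F \circ \gamma \circ \alpha \subseteq \gamma F \circ \alpha F \circ \Lambda^F$ on non-empty sets. Given this, the computation
\[ \alpha FG \circ \Lambda^F \circ \gamma G \circ \alpha G \circ \Lambda^G \circ \gamma \vge \alpha FG \circ \gamma FG \circ \alpha FG \circ \Lambda^F \circ \Lambda^G \circ \gamma = \alpha FG \circ \Lambda^F \circ \Lambda^G \circ \gamma \]
applied exactly as in \Cref{sec:compositionality} (using $\alpha \circ \gamma \circ \alpha = \alpha$) yields the desired inequality. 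The non-emptiness proviso is not an issue, since $\gamma$ always produces a non-empty set of predicates (constant maps into $\VV$ are always non-expansive).

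The only thing left is to supply the sufficient condition itself, and that is precisely the content of \Cref{prop:comp-polynomial} under the hypothesis that $F$ is polynomial (with finite coproducts) and $\Lambda^F$ is the set of evaluation maps constructed in \Cref{sec:polynomial-functors}. There is no real obstacle here: the corollary is a direct consequence of combining \Cref{prop:comp-polynomial} with the generic argument from \Cref{sec:compositionality}, so the proof is essentially a one-line chaining of these two facts, with the only care being to state explicitly the Galois-connection inequality used in the easy direction.
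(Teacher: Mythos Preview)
Your proposal is correct and matches the paper's approach exactly: the paper does not give a separate proof of the corollary but simply writes ``Using the arguments of \Cref{sec:compositionality}, we infer'' before stating it, and your write-up just unpacks those arguments (the easy Galois-connection inequality plus the sufficient condition supplied by \Cref{prop:comp-polynomial}, with the non-emptiness proviso handled since $\gamma$ always returns a non-empty set).
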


\begin{example}
  \label{ex:running-1}
  We consider a running example specifying standard directed trace
  metrics for probabilistic automata as introduced in
  \cite{jss:trace-determinization-journal}. We take the polynomial
  functor $F = [0,1]\times \_^A$ (``machine functor''), monad
  $T = \mathcal{D}$ and quantale $\VV = \unitQ$. Furthermore we use
  expectation ($\expect{}$) as evaluation map for $T$ and as
  evaluation maps for the functor $F$ we take $\ev_*$ mapping to the
  first component and $\ev_a$ (for each $a\in A$) with
  $\ev_a(r,g) = g(a)$. These evaluation maps are of the type described
  in this section and hence we have compositionality.

  Note that this example is not directly realizable in the Wasserstein
  approach \cite{bkp:up-to-behavioural-metrics-fibrations-journal}:
  the issue with the Wasserstein lifting is that whenever no coupling
  of two elements exists, the distance is automatically the bottom
  element in the quantale. This can be seen for the functor $F$ where
  $(r_1,x), (r_2,x)\in FX$ have no coupling whenever $r_1\neq
  r_2$. Hence it is harder to parameterize and would not work here.

  Using a set of evaluation maps $\Lambda^F$ as opposed to a single
  evaluation map gives us additional flexibility.
\end{example}

\section{Application: Up-To Techniques}

We now adapt results from
\cite{bkp:up-to-behavioural-metrics-fibrations-journal} on up-to
techniques from Wasserstein to Kantorovich liftings. In particular, we
instantiate the fibrational approach to coinductive proof techniques
from \cite{bppr:general-coinduction-up-to} that allows to prove lower
bounds for greatest fixpoints, using post-fixpoints up-to as
witnesses. As shown in the running example and in
\Cref{sec:case-study} this can greatly help to reduce the size of such
witnesses, even allowing finitary witnesses which would be infinite
otherwise.

\subsection{Introduction to Up-To Techniques}
\label{sec:intro-up-to}

We first recall the notion of a bialgebra \cite{j:bialgebraic-review},
a coalgebra with a compatible algebra structure.

\begin{definition}
  \label{def:bialgebra}
  Consider two functors $F,T$ and a natural transformation
  $\zeta\colon TF\Rightarrow FT$. An \emph{$F$-$T$-bialgebra} for
  $\zeta$ is a tuple $(Y,a,c)$ such that $a\colon TY\to Y$ is a
  $T$-algebra and $c\colon Y\to FY$ is an $F$-coalgebra so that the
  diagram below commutes.
  \[
    \begin{tikzcd}
      TY \ar[r,"a"] \ar[d,"Tc"] & Y \ar[r,"c"] & FY \\
      TFY \ar[rr,"\zeta_Y"] & & FTY \ar[u,"Fa"]
    \end{tikzcd}
  \]
\end{definition}

In order to construct such bialgebras, distributive laws exchanging
functors and monads are helpful. 

\begin{definition}
  \label{def:em-law}
  A \emph{distributive law} or \emph{EM-law} of a monad $T:\C\to \C$ with unit
  $\eta:\mathrm{Id}\Rightarrow T$ and multiplication
  $\mu:TT\Rightarrow T$ over
  a functor $F:\C\to \C$ is a natural transformation
  $\zeta:TF\Rightarrow FT$ such that the following diagrams commute:
  \[\begin{tikzcd}
    FX && {T^2FX} & TFTX & {FT^2X} \\
    TFX & FTX & TFX && FTX \arrow["{\eta_{FX}}"', from=1-1, to=2-1]
    \arrow["{\zeta_X}"', from=2-1, to=2-2] \arrow["{F\eta_X}",
    from=1-1, to=2-2] \arrow["{\mu_{FX}}"', from=1-3, to=2-3]
    \arrow["{F\mu_X}", from=1-5, to=2-5] \arrow["{T\zeta_X}",
    from=1-3, to=1-4] \arrow["{\zeta_{TX}}", from=1-4, to=1-5]
    \arrow["{\zeta_X}"', from=2-3, to=2-5]
  \end{tikzcd}\]
\end{definition}

Whenever $T$ is a monad and $\zeta$ is an EM-law, then an
$F$-$T$-bialgebra can be obtained by determinizing a coalgebra
$c\colon X\to FTX$. More concretely, we obtain $c^\#\colon Y\to FY$
where $Y=TX$ and $c^\# = F\mu_X\circ \zeta_{TX}\circ Tc$. The algebra
map is $a=\mu_X\colon TY\to Y$.
% Examples involving determinization
% will be given in \Cref{sec:case-study}.

% We assume that we are given a bialgebra consisting of a $T$-algebra
% $a\colon TX\to X$ (where $T$ is a monad) and a coalgebra
% $c\colon X\to FX$, together with a distributive law
% $\zeta\colon TF\Rightarrow FT$
% (cf. \cite[Definition~2]{bkp:up-to-behavioural-metrics-fibrations-journal}).

% \todo[inline]{B: we can obtain a bialgebra from Attention: this must
%   be a distributive law for a monad, i.e., interact well with unit and
%   multiplication!}
\todo{Change to F-T-bialgebra. B: done}
We now assume a bialgebra $(Y,a,c)$ and Kantorovich liftings
$\overline{T} = \kant{\Lambda^T}$,
$\overline{F} = \kant{\Lambda^F}$ of $T,F$.  Based on this we can
define a \emph{behaviour function} $\textrm{beh}$ via\todo{Recall
  notation of $c^*$? B: done}
\[ \Vgraph_Y \stackrel{\overline{F}}{\longrightarrow} \Vgraph_{FY}
  \stackrel{c^*}{\longrightarrow} \Vgraph_{Y} \] Remember that $c^*$
denotes reindexing via $c$. The greatest fixpoint of $\textrm{beh}$
corresponds to a behavioural conformance (e.g., behavioural
equivalence or bisimulation metric).

\begin{example}
  \label{ex:running-2}
  We continue with \Cref{ex:running-1}.  We use the standard
  distributive law $\zeta\colon TF\Rightarrow FT$ given by the following
  components where $\expect{\mu} \pi_1 = \expect{}(\mathcal{D}\pi_1(\mu))$:
  \begin{eqnarray*}
    \zeta_X\colon \mathcal{D}([0,1]\times X^A) & \to & [0,1]\times
    \mathcal{D}X^A \\
    \zeta_X(\mu) & = & (\expect{\mu} \pi_1, a\mapsto \dfun (\mathsf{eval}_a\circ\pi_2)(\mu))
  \end{eqnarray*}
  where $\mathsf{eval}_a(f) = f(a)$.
  Given an Eilenberg-Moore coalgebra $c\colon X\to FTX$ (more
  concretely: $c\colon X\to [0,1]\times \mathcal{D}X^A$) and its
  determinization $c^\#\colon TX\to FTX$, the behavioural distance on
  $TX$ arises as the greatest fixpoint (in the quantale order) of the
  map $\textrm{beh} = (c^\#)^*\circ \bar{F}$ defined above.

  By unravelling the fixpoint equation one can see that it coincides
  with the directed trace metric on probability distributions that is
  defined as follows: for each state $x\in X$ let
  $\mathit{tr}_x\colon A^*\to [0,1]$ be a map that assigns to each
  word (trace) $w\in A^*$ the expected payoff for this word when read
  from $x$, where the payoff of a state $x'$ is $\pi_1(c(x'))$. Then
  \[ \nu \textrm{beh}(p,q) = \sup_{w\in A^*} \big( \sum\nolimits_{x\in X}
    \mathit{tr}_x(w)\cdot q(x) \ominus \sum\nolimits_{x\in X}
    \mathit{tr}_x(w)\cdot p(x)\big) \]
  If $p,q$ are Dirac distributions $\delta_x,\delta_y$, we have:
  $\nu \textrm{beh}(\delta_x,\delta_y) = \sup_{w\in A^*}
  (\mathit{tr}_y(w) \ominus \mathit{tr}_x(w))$.
\end{example}

% 
%\todo[inline]{B: do this for $\Vgraph$ instead of $\Vcat$?}
%
One can typically avoid computing the full fixpoint $\nu\textrm{beh}$
when checking the behavioural distance of two states;
this is facilitated through the use of an
\emph{up-to function} $u$ defined via
\[ \Vgraph_Y \stackrel{\overline{T}}{\longrightarrow} \Vgraph_{TY}
  \stackrel{\Sigma_a}{\longrightarrow} \Vgraph_{Y} \] where
$\Sigma_f\colon \Vgraph_X \to \Vgraph_Y$ is defined as
$\Sigma_f(d)(y_1,y_2) = \bigsqcup_{f(x_i)=y_i} d(x_1,x_2)$ for
$f\colon X\to Y$ (direct image).

Both functions ($\mathrm{beh}$, $u$) are monotone functions on a
complete lattice. Hence we can use the Knaster-Tarski theorem
\cite{t:lattice-fixed-point} and the theory of up-to techniques
\cite{p:complete-lattices-up-to}. In particular, given a monotone
function $f\colon L\to L$ over a complete lattice $(L,\sqsubseteq)$,
we have the guarantee that $\ell\sqsubseteq f(\ell)$ for $\ell\in L$
guarantees $\ell\sqsubseteq \nu f$, i.e., a post-fixpoint of $f$ is always a
lower bound for the greatest fixpoint $\nu f$, an essential proof
rule in coinductive reasoning. Even more widely applicable are proof
rules based on up-to functions. An up-to function is a monotone
function $u\colon L\to L$ that is $f$-compatible (i.e.,
$u\circ f\sqsubseteq f\circ u$). Then we can infer that
$\ell\sqsubseteq f(u(\ell))$ (i.e., $\ell$ is a post-fixpoint up-to
$u$) implies $\ell\sqsubseteq \nu f$ ($\ell$ is a lower bound for the
greatest fixpoint). Typically $u$ is extensive
($\ell\sqsubseteq u(\ell)$) and hence it is ``easier'' to find a
post-fixpoint up-to rather than a post-fixpoint.

From \cite{bppr:general-coinduction-up-to} we obtain
the following result that ensures compatibility:
\todo{Reviewer B: Ref Proposition 26
  \cite{bkp:up-to-behavioural-metrics-fibrations-journal}? B: we have
  to cite a different paper, done.} 
\begin{proposition}[\cite{bppr:general-coinduction-up-to}]
  Whenever the EM-law $\zeta\colon TF\Rightarrow FT$ lifts to
  $\zeta\colon \overline{T}\,\overline{F}\Rightarrow
  \overline{F}\,\overline{T}$, we have that
  $u\circ \mathrm{beh} \vle \mathrm{beh}\circ u$ (for $u$,
  $\mathrm{beh}$ as defined above), i.e., $u$ is
  $\mathrm{beh}$-compatible.
\end{proposition}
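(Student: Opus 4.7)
The plan is to chain together three standard tools: the fibredness of the Kantorovich liftings $\overline{F}$ and $\overline{T}$, the bialgebra axiom $c\circ a = Fa \circ \zeta_Y \circ Tc$, and the adjunction $\Sigma_a \dashv a^*$ between direct image and reindexing. All ingredients are at hand, so the proof should reduce to a diagram chase once the orientation of each inequality is fixed.

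I would start by expanding $u\circ\mathrm{beh} = \Sigma_a\circ\overline{T}\circ c^*\circ\overline{F}$ and pulling $c^*$ past $\overline{T}$ using fibredness of $\overline{T}$, which gives $\Sigma_a\circ (Tc)^*\circ\overline{T}\circ\overline{F}$. Next, I apply the hypothesis that $\zeta$ lifts, which unpacked says $\overline{T}\,\overline{F}(d)\vle \zeta_Y^*(\overline{F}\,\overline{T}(d))$, so monotonicity of $(Tc)^*$ and $\Sigma_a$ yields
\begin{equation*}
  u\circ\mathrm{beh} \vle \Sigma_a\circ (Tc)^*\circ \zeta_Y^*\circ \overline{F}\circ\overline{T}
  = \Sigma_a\circ (\zeta_Y\circ Tc)^*\circ\overline{F}\circ\overline{T}.
\end{equation*}
It now remains to prove this is below $\mathrm{beh}\circ u = c^*\circ\overline{F}\circ\Sigma_a\circ\overline{T}$.

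For that last step I would transpose across the adjunction $\Sigma_a\dashv a^*$, reducing the task to
\begin{equation*}
  (Tc)^*\circ \zeta_Y^*\circ\overline{F}\circ\overline{T}
  \ \vle\
  a^*\circ c^*\circ\overline{F}\circ\Sigma_a\circ\overline{T}.
\end{equation*}
The bialgebra law gives $a^*\circ c^* = (Tc)^*\circ \zeta_Y^*\circ (Fa)^*$, so the right-hand side becomes $(Tc)^*\circ \zeta_Y^*\circ (Fa)^*\circ\overline{F}\circ\Sigma_a\circ\overline{T}$. Fibredness of $\overline{F}$ rewrites $(Fa)^*\circ\overline{F}$ as $\overline{F}\circ a^*$, after which the unit $\id\vle a^*\circ\Sigma_a$ and monotonicity of $\overline{F},\overline{T},(Tc)^*, \zeta_Y^*$ close the inequality.

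The main obstacle I expect is not any single step but bookkeeping: every manipulation requires checking that the fibres line up (so that $f^*$ and $\Sigma_f$ are being applied between the correct $\Vgraph_X$'s), that the $\vle$'s do not flip under the antitone pieces of the Galois connection, and that fibredness of $\overline{F}$ and $\overline{T}$ really gives equalities $\overline{F}\circ f^* = (Ff)^*\circ\overline{F}$ rather than mere inequalities. The latter is what forces the restriction to $\Vcat$ mentioned in the previous section, so I would state the proposition over $\Vcat$ (or note where fiberedness is invoked) to ensure both $\overline{F}\circ a^* = (Fa)^*\circ\overline{F}$ and $\overline{T}\circ c^* = (Tc)^*\circ\overline{T}$ are available as equalities.
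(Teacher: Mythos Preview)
Your argument is correct and is precisely the standard fibrational diagram chase underlying the cited result; the paper itself does not supply a proof but imports the proposition directly from~\cite{bppr:general-coinduction-up-to}, so there is nothing to compare against beyond noting that your reconstruction matches that reference's approach.

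One remark on your closing paragraph: you slightly overstate the need for fibredness. Both places where you invoke it---$\overline{T}\circ c^* = (Tc)^*\circ\overline{T}$ and $(Fa)^*\circ\overline{F} = \overline{F}\circ a^*$---only need the inequality $\overline{H}\circ f^* \vle (Hf)^*\circ\overline{H}$, which follows from mere functoriality of the lifting on $\Vgraph$ (apply $\overline{H}$ to the morphism $f\colon (X,f^*d)\to(Y,d)$). In the first spot you are upper-bounding $u\circ\mathrm{beh}$, and in the second you are lower-bounding the right-hand side after transposition, so the lax direction suffices throughout. The restriction to $\Vcat$ is therefore not required for this proposition; it becomes relevant elsewhere in the paper when genuine equalities (fibredness) are needed.
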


Hence, we can deduce that every post-fixpoint up-to witnesses a lower
bound of the greatest fixpoint. More concretely:
$d_Y\vle \textrm{beh}(u(d_Y))$ implies $d_Y\vle \nu\,\textrm{beh}$
(where $d_Y\in\Vgraph_Y$) (coinduction up-to proof principle).

\subsection{Lifting Distributive Laws}
\label{sec:distributive-law-lifting}

To use the proof technique laid out in the previous section, we have to show that $\zeta$ lifts
accordingly. We start by defining distributive laws and lifting them
to $\Vgraph$.

Let $F$ be a polynomial functor (cf.~\Cref{sec:polynomial-functors})
and $(T,\mu,\eta)$ a monad over $\Set$. Following
\cite[Exercise~5.4.4]{j:introduction-coalgebra}, EM-laws
$\zeta:TF\Rightarrow FT$ \todo{Reviewer A: ``Is the EM-law terminology
  here standard?'' B: I would not do anything here} can then be
constructed inductively over the structure of $F$, i.e., for $F$ being
an identity, constant, product and coproduct functor. In the coproduct
case we extend \cite{j:introduction-coalgebra} by weakening the
requirement that $T$ preserves coproducts.

In the following, we inductively construct an EM-law
$\zeta:TF\Rightarrow FT$ and first lift it to
$\zeta:\overline{TF}\Rightarrow \overline{FT}$ (and then to
$\zeta:\overline{T}\,\overline{F}\Rightarrow
\overline{F}\,\overline{T}$). For the evaluation maps we assume that
$\Lambda^F$ is defined as in \Cref{sec:polynomial-functors} and that
$\Lambda^T = \{\ev_T\}$, where $\ev_T\colon T\VV\to \VV$ is a
$T$-algebra.
%We then define the distributive laws recursively as follows:
\begin{description}
\item[constant functors:] For $F=\constfun{B}$ we have that $TFX=TB$
  and $FTX=B$, and so define the EM-law as
  $\zeta:T\constfun{B}\Rightarrow \constfun{B}$, where the (unique)
  component $\zeta_X\colon TB\to B$ is an arbitrary $T$-algebra on
  $B$.  (From now on, we assume that evaluation maps
  $\ev\colon B\to \VV$ for constant functors are $T$-algebra
  homomorphisms between $\zeta\colon TB\to B$ and
  $\ev_T\colon T\VV\to \VV$.)
\item[identity functor:] For $F=\operatorname{Id}$, we let the EM-law
  be the identity map $\operatorname{id}:T\Rightarrow T$.
\item[product functors:] For 
  $F=\sprod_{i\in I} F_i$, assuming we have distributive laws
  $\zeta^i:TF_i\Rightarrow F_i T$, the EM-law is
  \[\langle \zeta^i \circ T\pi_i\rangle: T\sprod_{i\in
      I}F_i\Rightarrow\sprod_{i\in I} F_iT.\]
\item[coproduct functors:] For $F=F_1+F_2$, assume that we have
  distributive laws $\zeta^i:TF_i\Rightarrow F_i T$ and a natural
  transformation $g:T ((-) + (-)) \Rightarrow T + T$ between
  bifunctors.  The EM-law is given by
  \[ (\zeta^1 +\zeta^2)\circ g_{F_1,F_2}\colon T(F_1+F_2) \Rightarrow
    TF_1+TF_2 \Rightarrow F_1 T + F_2 T\]
\end{description}

% What is more, we will give a general way to define evaluation maps for
% $F$ and such distributive laws such that:

% \begin{enumerate}
% \item the evaluation maps and distributive laws match the known ones in common cases, and
% \item the condition in Proposition \ref{prop:lifting-em-law} is fulfilled.
% \end{enumerate}
    
% Then, the distributive law lifts, and we can use the results about
% compositionality in the way that is described in Subsection
% \ref{sec:distributive-law-lifting}.  So, recall that polynomial
% functors are given by the following grammar:\todo{B: clash of
%   notation: constant functors, Kantorovich lifting}
% \begin{equation*}
%   F \Coloneqq \constfun{B} \mid \Id \mid \sprod_{i\in I} F_i \mid F_1 + F_2,
% \end{equation*}
% where $\constfun{B}$ stands for the constant functor to $B$.

% Assume that for the constant functor, there exists a map
% $\zeta:TB\to B$ that is an algebra over $T$. For the product and
% coproduct cases, assume that we have distributive laws
% $\zeta^i:TF_i\Rightarrow F_iT$. In addition, we assume that regarding
% the monad $T$, we have a natural transformation
% $g:T ((-) + (-)) \Rightarrow T + T$, satisfying the following:

% https://q.uiver.app/#q=WzAsMyxbMSwwLCJZXzEgKyBZXzIiXSxbMCwxLCJUKFlfMSArIFlfMikiXSxbMiwxLCJUWV8xICsgVFlfMiJdLFsxLDIsImdfe1lfMSxZXzJ9Il0sWzAsMSwiXFxldGFfe1lfMSArIFlfMn0iLDJdLFswLDIsIlxcZXRhX3tZXzF9ICsgXFxldGFfe1lfMn0iXV0=

\begin{definition}
  \label{def:compatibility-g}
  Let $T$ be a monad and let $g:T ((-) + (-)) \Rightarrow T + T$ be a
  natural transformation as above.
  We say that $g$ is \emph{compatible with the unit} $\eta$ of the monad if
  for all sets $Y_1, Y_2$ the left diagram below commutes. Analogously,
  $g$ is \emph{compatible with the multiplication} $\mu$ of the monad if the
  right diagram commutes for all sets $Y_1, Y_2$.
  \begin{equation*}
    \adjustbox{scale=0.85}{
    \begin{tikzcd}
      & {Y_1 + Y_2} \\
      {T(Y_1 + Y_2)} && {TY_1 + TY_2} \arrow["{g_{Y_1,Y_2}}",
      from=2-1, to=2-3] \arrow["{\eta_{Y_1 + Y_2}}"', from=1-2,
      to=2-1] \arrow["{\eta_{Y_1} + \eta_{Y_2}}", from=1-2, to=2-3]
    \end{tikzcd}
    }
    \quad
    \adjustbox{scale=0.85}{
    \begin{tikzcd}
      {TT(Y_1 + Y_2)} && {T(Y_1 + Y_2)} \\
      {T(TY_1 + TY_2)} & {TTY_1 + TTY_2} & {TY_1 + TY_2}
      \arrow["{Tg_{Y_1,Y_2}}"', from=1-1, to=2-1]
      \arrow["{g_{TY_1, TY_2}}"', from=2-1, to=2-2]
      \arrow["{\mu_{Y_1} + \mu_{Y_2}}"', from=2-2, to=2-3]
      \arrow["{g_{Y_1, Y_2}}", from=1-3, to=2-3]
      \arrow["{\mu_{Y_1 + Y_2}}", from=1-1, to=1-3]
    \end{tikzcd}
    }
  \end{equation*}
  % https://q.uiver.app/#q=WzAsNSxbMCwwLCJUVChZXzEgKyBZXzIpIl0sWzAsMSwiVChUWV8xICsgVFlfMikiXSxbMSwxLCJUVFlfMSArIFRUWV8yIl0sWzIsMSwiVFlfMSArIFRZXzIiXSxbMiwwLCJUKFlfMSArIFlfMikiXSxbMCwxLCJUZ197WV8xLFlfMn0iLDJdLFsxLDIsImdfe1RZXzEsIFRZXzJ9IiwyXSxbMiwzLCJcXG11XzEgKyBcXG11XzIiLDJdLFs0LDMsImdfe1lfMSwgWV8yfSJdLFswLDQsIlxcbXVfe1lfMSArIFlfMn0iXV0=
  % \begin{equation}
  % \end{equation}
\end{definition}

\begin{propositionrep}
  Assume that the natural transformation $g$ is compatible with unit
  and multiplication of $T$. Then the transformation $\zeta$ as
  defined above is an EM-law of $T$ over $F$.
\end{propositionrep}
\begin{proof}
  We will proceed by induction over the shape of $F$. We wish to show
  that $\zeta\colon TF\Rightarrow FT$ is a natural
  transformations and that the two diagrams in
  \Cref{def:em-law} commute.

\begin{description}
\item[constant functors:]
If $F=\constfun{B}$, then naturality is clear and the distributive law diagrams become
% https://q.uiver.app/#q=WzAsOCxbMCwwLCJCIl0sWzAsMSwiVEIiXSxbMSwxLCJCIl0sWzIsMCwiVF4yQiJdLFszLDAsIlRCIl0sWzQsMCwiQiJdLFs0LDEsIkIiXSxbMiwxLCJUQiJdLFswLDEsIlxcZXRhX1giLDJdLFsxLDIsIlxcemV0YSIsMl0sWzAsMiwiXFxvcGVyYXRvcm5hbWV7SWR9X0IiXSxbMyw3LCJcXG11X3tCfSIsMl0sWzUsNiwiXFxtYXRocm17SWR9X0IiXSxbMyw0LCJUXFx6ZXRhIl0sWzQsNSwiXFx6ZXRhIl0sWzcsNiwiXFx6ZXRhIiwyXV0=
\[\begin{tikzcd}
	B && {T^2B} & TB & B \\
	TB & B & TB && B,
	\arrow["{\eta_X}"', from=1-1, to=2-1]
	\arrow["\zeta"', from=2-1, to=2-2]
	\arrow["{\operatorname{Id}_B}", from=1-1, to=2-2]
	\arrow["{\mu_{B}}"', from=1-3, to=2-3]
	\arrow["{\mathrm{Id}_B}", from=1-5, to=2-5]
	\arrow["T\zeta", from=1-3, to=1-4]
	\arrow["\zeta", from=1-4, to=1-5]
	\arrow["\zeta"', from=2-3, to=2-5]
\end{tikzcd}\]
which are precisely the conditions for the map $\zeta:TB\to B$ being an algebra over $T$.

\item[identity functor:]
For $F=\mathrm{Id}$, it is also clear that $\mathrm{id}:T\Rightarrow T$ is a natural transformation and a distributive law.

\item[product functors:]
For $F=\sprod_{i\in I}F_i$, we note that the definition of the distributive law $\zeta$ in terms of the $\zeta^i$ amounts to the set of equalities $\zeta^i\circ T\pi_i = \pi_i\circ\zeta$ for each $i\in I$.
Therefore we can verify the triangle and pentagon diagrams via the following computations:
\begin{align*}
  F\eta
  &= \langle F_i\eta\circ\pi_i\rangle \\
  &= \langle\zeta^i\circ\eta_{F_i}\circ\pi_i\rangle &&\by{IH}\\
  &= \langle\zeta^i\circ T\pi_i\circ\eta_F\rangle &&\by{$\eta$ natural}\\
  &= \langle\pi_i\circ\zeta\circ\eta_F\rangle = \zeta\circ\eta_F \\
  F\mu\circ\zeta\circ T\zeta
  &= \langle F_i\mu\circ\pi_i\circ\zeta\circ T\zeta\rangle \\
  &= \langle F_i\mu\circ\zeta^i\circ T\pi_i\circ T\zeta\rangle \\
  &= \langle F_i\mu\circ\zeta^i\circ T\zeta^i\circ TT\pi_i\rangle \\
  &= \langle \zeta^i\circ\mu_{F_i}\circ TT\pi_i\rangle &&\by{IH}\\
  &= \langle \zeta^i\circ T\pi_i\circ\mu_F\rangle &&\by{$\mu$ natural}\\
  &= \langle \pi_i\circ\zeta\circ\mu_F\rangle =\zeta\circ\mu_F
\end{align*}

\item[coproduct functors:]
For $F=F_1 + F_2$, we first check the commutativity of the triangular diagram. In the diagram below, the left inner triangle commutes because of the assumption on $g$, while the right inner triangle commutes by induction hypothesis.
% https://q.uiver.app/#q=WzAsNSxbMCwxLCJUKEZfMVggKyBGXzIgWCkiXSxbMSwxLCJURl8xWCArIFRGXzIgWCJdLFsxLDAsIkZfMVggKyBGXzIgWCJdLFsyLDEsIkZfMVQgWCArIEZfMiBUWCJdLFswLDBdLFswLDEsImdfe0ZfMSBYLEZfMiBYfSIsMl0sWzIsMCwiXFxldGFfe0ZfMVggKyBGXzJYfSIsMl0sWzIsMSwiXFxldGFfe0ZfMVh9ICsgXFxldGFfe0ZfMlh9Il0sWzEsMywiXFx6ZXRhXjEgKyBcXHpldGFeMiIsMl0sWzIsMywiRl8xXFxldGFfWCArIEZfMlxcZXRhX1giXV0=
\[\begin{tikzcd}[column sep=3.15em]
	{} & {F_1X + F_2 X} \\
	{T(F_1X + F_2 X)} & {TF_1X + TF_2 X} & {F_1T X + F_2 TX}
	\arrow["{g_{F_1 X,F_2 X}}"', from=2-1, to=2-2]
	\arrow["{\eta_{F_1X + F_2X}}"', from=1-2, to=2-1]
	\arrow["{\eta_{F_1X} + \eta_{F_2X}}", from=1-2, to=2-2]
	\arrow["{\zeta^1_X + \zeta^2_X}"', from=2-2, to=2-3]
	\arrow["{F_1\eta_X + F_2\eta_X}", from=1-2, to=2-3]
\end{tikzcd}\]
Hence, the outer diagram commutes, establishing the required condition.
Moving on to the pentagonal diagram, we have the following:
% https://q.uiver.app/#q=WzAsOSxbMiwwLCJUKEZfMVRYICsgRl8yVFgpIl0sWzIsMSwiVEZfMVRYICsgVEZfMlRYIl0sWzIsMiwiRl8xIFRUIFggKyBGXzIgVFRYIl0sWzIsMywiRl8xIFQgWCArIEZfMiBUIFgiXSxbMSwzLCJURl8xWCArIFRGXzJYIl0sWzEsMSwiVFRGXzFYICsgVFRGXzJYIl0sWzEsMCwiVChURl8xIFggKyBURl8yIFgpIl0sWzAsMCwiVFQoRl8xIFggKyBGXzIgWCkiXSxbMCwzLCJUKEZfMSBYICsgRl8yIFgpIl0sWzAsMSwiZ197Rl8xIFQgWCwgRl8yIFQgWH0iXSxbMSwyLCJcXHpldGFeMV97VFh9ICsgXFx6ZXRhXjJfe1RYfSJdLFsyLDMsIkZfMVxcbXVfWCArIEZfMlxcbXVfWCJdLFs0LDMsIlxcemV0YV4xX1ggKyBcXHpldGFeMl9YIl0sWzUsMSwiVFxcemV0YV4xX1ggKyBUXFx6ZXRhXjJfWCJdLFs1LDQsIlxcbXVfe0ZfMSBYfSArIFxcbXVfe0ZfMlh9Il0sWzYsMCwiVChcXHpldGFeMV9YICsgXFx6ZXRhXjJfWCkiXSxbNiw1LCJnX3tURl8xWCwgVEZfMlh9Il0sWzcsNiwiVGdfe0ZfMVgsIEZfMlh9Il0sWzgsNCwiZ197Rl8xIFgsIEZfMiBYfSJdLFs3LDgsIlxcbXVfe0ZfMSBYICsgRl8yIFh9Il1d
\[\hspace{-5mm}\begin{tikzcd}[column sep=3.45em]
	{TT(F_1 X + F_2 X)} & {T(TF_1 X + TF_2 X)} & {T(F_1TX + F_2TX)} \\
	& {TTF_1X + TTF_2X} & {TF_1TX + TF_2TX} \\
	&& {F_1 TT X + F_2 TTX} \\
	{T(F_1 X + F_2 X)} & {TF_1X + TF_2X} & {F_1 T X + F_2 T X}
	\arrow["{g_{F_1 T X, F_2 T X}}", from=1-3, to=2-3]
	\arrow["{\zeta^1_{TX} + \zeta^2_{TX}}", from=2-3, to=3-3]
	\arrow["{F_1\mu_X + F_2\mu_X}", from=3-3, to=4-3]
	\arrow["{\zeta^1_X + \zeta^2_X}", from=4-2, to=4-3]
	\arrow["{T\zeta^1_X + T\zeta^2_X}", from=2-2, to=2-3]
	\arrow["{\mu_{F_1 X} + \mu_{F_2X}}", from=2-2, to=4-2]
	\arrow["{T(\zeta^1_X + \zeta^2_X)}", from=1-2, to=1-3]
	\arrow["{g_{TF_1X, TF_2X}}", from=1-2, to=2-2]
	\arrow["{Tg_{F_1X, F_2X}}", from=1-1, to=1-2]
	\arrow["{g_{F_1 X, F_2 X}}", from=4-1, to=4-2]
	\arrow["{\mu_{F_1 X + F_2 X}}", from=1-1, to=4-1]
\end{tikzcd}\]
The top right square commutes by naturality of $g$, while the bottom right diagram commutes by induction hypothesis.
Finally, the left diagram commutes because of the assumption we make on $g$.
The commutativity of the outer diagram yields the desired result. \qedhere
\end{description}
\end{proof}
%\todo[inline]{B: show that we have a distributive law for a monad
%  (naturality and interacts well with unit and multiplication:
%  $\zeta_X\circ \eta_{FX} = F\eta_X$ and
%  $\zeta_X\circ \mu_{FX} = F\mu_X\circ \zeta_{TX}\circ T\zeta_X$). Add
%  coproduct (define a suitable natural transformation with components
%  $g_{(X_i)_i}\colon T\sum_i X_i\to \sum_i TX_i$, check that there are
%  such $g$ for $T=\mathcal{P},\mathcal{D}$ that ``prioritize'').}

% Furthermore, let the evaluation maps for $F$ be given as in Section
% \ref{sec:polynomial-functors}, except let the map for the constant
% functor $\constfun{B}$ only be the constant algebra map $\top:B\to \VV$, 
% $b\mapsto \top$.\todo{J: Was it that one? B: I think there could be
%   other choices, e.g., $B=\VV$ and $\ev$ identity. They should be
%   algebra maps, I guess.}

% In addition, denoting the evaluation maps by $\Lambda^T$ for $T$ and $\Lambda^F$
% for $F$, we define:
% \[ \Lambda^{FT} = \Lambda^F * \Lambda^T = \{ \ev_F\circ T\ev_T\mid
%   \ev_T\in \Lambda^T, \ev_F\in \Lambda^F \} \] Analogously for
% $\Lambda^{TF}$. We take the collection of evaluation maps for $T$ to
% be arbitrary.

In order to lift natural transformations (respectively distributive
laws), we will use the following result:

\begin{propositionrep}
  \label{prop:lifting-em-law}
  Let $F,G$ be functors on $\Set$ and let the sets of evaluation maps
  of $F$ and $G$ be denoted by $\Lambda^{F}$ and $\Lambda^{G}$. Let
  $\zeta: F\Rightarrow G$ be a natural transformation. If
  \begin{equation}\label{eqn:eval-nat}
    \Lambda^{G} \circ \zeta_\VV := \{\ev_G\circ\zeta_\VV\mid
    \ev_G\in\Lambda^G\} \subseteq \Lambda^{F},
  \end{equation}
  then $\zeta$ lifts to
  $\zeta:\overline{F}\Rightarrow\overline{G}$ in $\Vgraph$, where
  $\overline{F} = \kant{\Lambda^F}$,
  $\overline{G} = \kant{\Lambda^G}$.
\end{propositionrep}
 
\begin{proof}
  To prove this, we need to show that the transformation $\zeta$ is
  non-expansive in each component, i.e. that for every
  $(X,d)\in \Vgraph$,
  \[d^F \vle d^G\circ (\zeta_X\times \zeta_X).\]
  Unpacking the definition of the Kantorovich lifting, we proceed as follows for $t_1,t_2\in FX$:
  \begin{eqnarray*}
    d^F(t_1,t_2) & = & \vMeet \big\{ d_{\VV}((\ev_F\circ
    Ff)(t_1),(\ev_F\circ
    Ff)(t_2))\mid \ev_F\in \Lambda^F, \\
    && \qquad\qquad \, f:X\to \VV, d\vle d_{\VV}\circ (f\times f) \big\}\\
    & \vle & \vMeet \big\{d_{\VV}((\ev_G\circ \zeta_\VV\circ
    Ff)(t_1),(\ev_G\circ \zeta_\VV\circ Ff)(t_2))\mid \ev_G\in
    \Lambda^G, \\
    && \qquad\qquad \, f:X\to \VV, d\vle d_{\VV}\circ (f\times f) \big\}\\
    & = & \vMeet \big\{d_{\VV}((\ev_G\circ Gf\circ
    \zeta_X)(t_1),(\ev_G\circ Gf\circ \zeta_X)(t_2))\mid \ev_G\in
    \Lambda^G, \\
    && \qquad\qquad \, f:X\to \VV, d\vle d_{\VV}\circ (f\times f) \big\}\\
    & = & d^G(\zeta_X(t_1),\zeta_X(t_2)).
  \end{eqnarray*}
  By the assumption we know that the set of functions of the form
  $\ev_G\circ \zeta_{\VV}$ is a subset of the set of functions of
  the form $\ev_F$, and so the first infimum is smaller or equal than
  the second. And by the naturality of $\zeta$, we have that
  $\zeta_{\VV}\circ Ff = Gf\circ \zeta_X$.
\end{proof}

We can show that the inclusion~\eqref{eqn:eval-nat} (even equality)
holds under some conditions.
  
\begin{definition}
  Let $g:T ((-) + (-)) \Rightarrow T + T$ be a natural transformation as
  introduced above and let $\ev_T\colon T\VV\to \VV$ be the evaluation
  map of the monad. We say that $g$ is \emph{well-behaved} wrt. $\ev_T$ if the
  following diagrams commute for $f_i\colon X_i\to \VV$, where
  $\bot,\top$ are constant maps of the appropriate type.
  % \begin{align*}
  %   [\ev_T\circ Tf_1,\top_{TX_2}] \circ g_{X_1,X_2} & =
  %   \ev_T\circ T[f_1,\top_{X_2}] \\
  %   [\bot_{TX1},\ev_T\circ Tf_2] \circ g_{X_1,X_2} & =
  %   \ev_T\circ T[\bot_{X_1},f_2] \\
  %   [\bot_{TX_1},\top_{TX_2}] \circ g_{X_1,X_2} & =
  %   \ev_T\circ T[\bot_{X_1},\top_{X_2}]
  % \end{align*}
  % https://q.uiver.app/#q=WzAsNCxbMCwwLCJUKFhfMSArIFhfMikiXSxbMCwxLCJUWF8xICsgVFhfMiJdLFsyLDAsIlRcXFZWIl0sWzIsMSwiXFxWViJdLFswLDEsImdfe1hfMSwgWF8yfSJdLFsyLDMsImV2X1QiXSxbMCwyLCJUW2ZfMSwgXFx0b3Bfe1hfMn1dIl0sWzEsMywiW1xcZXZfMSBcXGNpcmMgVEYxLCBcXHRvcF97WF8yfV0iXV0=
  \[
    \adjustbox{scale=0.9}{
      \begin{tikzcd}
	{T(X_1 + X_2)} && T\VV \\
	{TX_1 + TX_2} && \VV
	\arrow["{g_{X_1, X_2}}", from=1-1, to=2-1]
	\arrow["{ev_T}", from=1-3, to=2-3]
	\arrow["{T[f_1, \top_{X_2}]}", from=1-1, to=1-3]
	\arrow["{[\ev_T \circ Tf_1, \top_{TX_2}]}", from=2-1, to=2-3]
      \end{tikzcd}
    }
    \quad
    \adjustbox{scale=0.9}{
      \begin{tikzcd}
	{T(X_1 + X_2)} && T\VV \\
	{TX_1 + TX_2} && \VV
	\arrow["{g_{X_1, X_2}}", from=1-1, to=2-1]
	\arrow["{ev_T}", from=1-3, to=2-3]
	\arrow["{T[\bot_{X_1}, f_2]}", from=1-1, to=1-3]
	\arrow["{[\bot_{TX_1}, \ev_T \circ Tf_2]}", from=2-1, to=2-3]
      \end{tikzcd}
    }
    \quad
    \adjustbox{scale=0.9}{
      \begin{tikzcd}
	{T(X_1 + X_2)} && T\VV \\
	{TX_1 + TX_2} && \VV
	\arrow["{g_{X_1, X_2}}", from=1-1, to=2-1]
	\arrow["{ev_T}", from=1-3, to=2-3]
	\arrow["{T[\bot_{X_1}, \top_{X_2}]}", from=1-1, to=1-3]
	\arrow["{[\bot_{TX_1}, \top_{TX_2}]}", from=2-1, to=2-3]
      \end{tikzcd}}
  \]      
\end{definition}

\begin{lemmarep}
  \label{lem:eval-G-comp-zeta-equal-eval-F}
  Let $F$ be a polynomial functor and $T$ a monad with
  $\Lambda^T = \{\ev_T\}$.

  For distributive laws $\zeta$ as described above where the component
  $g$ is well-behaved wrt. $\ev_T$ and evaluation maps as defined in
  \Cref{sec:polynomial-functors}, we have that
  \[ (\Lambda^{F} * \Lambda^{T})\circ \zeta_\VV = \Lambda^{T} *
    \Lambda^{F}. \]
\end{lemmarep}
  
\begin{proof}
  We will prove this via the inductive definition of $F$. In
  particular we will show that for each $\ev\in \Lambda^T*\Lambda^F$
  there exists $\ev'\in\Lambda^F*\Lambda^T$ such that
  $\ev = \ev' \circ \zeta_V$. And for each
  $\ev'\in\Lambda^F*\Lambda^T$ there exists
  $\ev\in \Lambda^T*\Lambda^F$ such that $\ev = \ev' \circ \zeta_V$.
  
  \begin{itemize}
  \item
  For the case where $F=\constfun{B}$ we obtain:
  \begin{eqnarray*}
    \Lambda^F * \Lambda^T & = & \{\ev_F\circ F\ev_T \mid
    \ev_F\in\Lambda^F\} = \{\ev_F\circ \id_B \mid \ev_F\in\Lambda^F\}
    = \Lambda^F \\
    \Lambda^T * \Lambda^F & = & \{\ev_T\circ T\ev_F \mid
    \ev_F\in\Lambda^F\} = \{\ev_F\circ \zeta_\VV \mid
    \ev_F\in\Lambda^F\} 
  \end{eqnarray*}
  where the last equality holds since each map $\ev_F$ is an algebra
  map from $\zeta_\VV$ to $\ev_T$ by assumption.  In this case we
  clearly have the correspondence stated above.

  \item
  For $F=\operatorname{Id}$, recall that $\ev_F\in\Lambda^F$
  and $\zeta_\VV$ are both the identity. Hence:
  \begin{eqnarray*}
    \Lambda^F * \Lambda^T & = & \{\ev_F\circ F\ev_T \mid
    \ev_F\in\Lambda^F\} = \{\ev_T\} = \Lambda^T \\
    \Lambda^T * \Lambda^F & = & \{\ev_T\circ \id_{T\VV} \mid
    \ev_F\in\Lambda^F\} = \Lambda^T 
  \end{eqnarray*}
  Since both sets are the same and $\zeta_\VV$ is the identity, the
  correspondence clearly holds.
    
  \item
  Now, consider the case for the product functor
  $F=\sprod_{i\in I}F_i$. Then
  \begin{eqnarray*}
    \Lambda^F * \Lambda^T & = & \{\ev'_i\circ\pi_i \circ F\ev_T\mid
    i\in I,
    \ev'_i\in\Lambda^{F_i},\ev_T\in \Lambda^T\} \\
    \Lambda^T * \Lambda^F & = & \{\ev_T\circ T\ev_i \circ T\pi_i \mid
    i\in I, \ev_i\in\Lambda^{F_i},\ev_T\in \Lambda^T\}
  \end{eqnarray*}
  Here $\zeta_{\VV}:T\sprod_{i\in I}F_i\VV\to \sprod_{i\in I}F_iT\VV$ is
  given as the universal map
  $\langle \zeta^i_{\VV}\circ T\pi_i\rangle_{i\in I}$, i.e. the map
  such that $\pi'_i\circ \zeta_{\VV}=\zeta^i_{\VV}\circ T\pi_i$, where
  $\pi'_i\colon \sprod_{i\in I} F_iT\VV \to F_iT\VV$.

  Fix $i\in I$, $\ev_i \in \Lambda^{F_i}$. Then,
  $\ev_T\circ T\ev_i\in \Lambda^T*\Lambda^{F_i}$ and by the induction
  hypothesis there exists
  $\ev_i'\circ F_i\ev_T\in \Lambda^{F_i} * \Lambda^T$ with
  $\ev'_i\in\Lambda^{F_i}$ such that
  $\ev_T\circ T\ev_i = \ev_i'\circ F_i\ev_T\circ \zeta^i_\VV$.

  Now for $\ev_T \circ T\ev_i \circ T\pi_i \in \Lambda^T*\Lambda^F$ we
  pick $\ev'_i\circ\pi_i \circ F\ev_T \in \Lambda^F*\Lambda^T$ as
  corresponding map and we have:
  \begin{align*}
    \ev'_i\circ\pi_i \circ F\ev_T\circ \zeta_{\VV} 
    &= \ev'_i\circ\pi_i \circ F\ev_T\circ\langle \zeta^i_{\VV}\circ T\pi_i\rangle_i \\
    &= \ev'_i \circ \pi_i\circ\langle  F_i\ev_T\circ \zeta^i_{\VV}\circ T\pi_i\rangle\\
    &= \ev'_i \circ F_i\ev_T\circ \zeta^i_{\VV}\circ T\pi_i \\
    &= \ev_T \circ T\ev_i \circ T\pi_i 
  \end{align*}
  The proof is analogous for the other direction.
%   So then we see -- using the induction hypothesis -- that
%   \begin{eqnarray*}
%     (\Lambda^F * \Lambda^T)\circ\zeta_{\VV} & = & \bigcup_{i\in
%       I}\left( \Lambda^{F_i} * \Lambda^T\right)\circ\zeta^i_{\VV}\circ
%     T\pi_i=\left( \bigcup_{i\in I}(\Lambda^{F_i} *
%       \Lambda^T)\circ\zeta^i_{\VV}\right)\circ T\pi_i \\
%     & = & \bigcup_{i\in I}\left(\Lambda^T * \Lambda^{F_i}\right)\circ
%   T\pi_i=\Lambda^T * \Lambda^F
% \end{eqnarray*}
  
  \item
  Finally we address the case of the coproduct functor.  Then
  %\todo{B: add types of $\bot,\top$?}
  \begin{eqnarray*}
    \Lambda^F * \Lambda^T & = & \{[\ev_1\circ F\ev_T,\top] \mid
    \ev_1\in\Lambda^{F_1}\}\cup \{[\bot,\ev_2\circ F\ev_T] \mid
    \ev_2\in\Lambda^{F_2}\}\cup
    \{[\bot,\top]\} \\
    \Lambda^T * \Lambda^F & = & \{\ev_T\circ T[\ev_1,\top] \mid
    \ev_1\in\Lambda^{F_1}\}\cup \{\ev_T\circ T[\bot,\ev_2] \mid
    \ev_2\in\Lambda^{F_2}\}\cup
    \{\ev_T\circ T[\bot,\top]\}
  \end{eqnarray*}
  If we fix $\ev_i\in \Lambda^{F_i}$, then
  $\ev_T\circ T\ev_i\in \Lambda^T*\Lambda^{F_i}$ and by the induction
  hypothesis there exists
  $\ev_i'\circ F_i\ev_T\in \Lambda^{F_i} * \Lambda^T$ with
  $\ev'_i\in\Lambda^{F_i}$ such that
  $\ev_T\circ T\ev_i = \ev_i'\circ F_i\ev_T\circ \zeta^1_\VV$.

  We now consider three cases, according to the three subsets above.
  
  For $\ev_T\circ T[\ev_1,\top] \in \Lambda^T*\Lambda^F$ we pick
  $[\ev'_1\circ F\ev_T,\top] \in \Lambda^F*\Lambda^T$ as corresponding
  map and we have, using well-behavedness of $g$:
  \begin{align*}
    [\ev'_1\circ F\ev_T,\top]\circ \zeta_\VV & = [\ev'_1\circ F\ev_T,\top]
    \circ (\zeta^1_\VV+\zeta^2_\VV)\circ g_{F_1X,F_2X} \\
    & = [\ev'_1\circ F\ev_T\circ \zeta^1_\VV,\top\circ \zeta^2_\VV]
    \circ g_{F_1X,F_2X} \\
    & = [\ev'_1\circ F\ev_T\circ \zeta^1_\VV,\top]
    \circ g_{F_1X,F_2X} \\
    & = [\ev_T\circ T\ev_1,\top] \circ g_{F_1X,F_2X} \\
    & = \ev_T\circ T[\ev_1,\top]
  \end{align*}
  For $\ev_T\circ T[\bot,\ev_2] \in \Lambda^T*\Lambda^F$ we pick
  $[\bot,\ev'_2\circ F\ev_T] \in \Lambda^F*\Lambda^T$ as corresponding
  map and we have, using well-behavedness of $g$:
  \begin{align*}
    [\bot,\ev'_2\circ F\ev_T]\circ \zeta_\VV & = [\bot,\ev'_2\circ F\ev_T]
    \circ (\zeta^1_\VV+\zeta^2_\VV)\circ g_{F_1X,F_2X} \\
    & = [\bot\circ \zeta^1_\VV,\ev'_2\circ F\ev_T\circ \zeta^2_\VV]
    \circ g_{F_1X,F_2X} \\
    & = [\bot,\ev'_2\circ F\ev_T\circ \zeta^2_\VV]
    \circ g_{F_1X,F_2X} \\
    & = [\bot,\ev_T\circ T\ev_2] \circ g_{F_1X,F_2X} \\
    & = \ev_T\circ T[\bot,\ev_2]
  \end{align*}
  For $\ev_T\circ T[\bot,\top] \in \Lambda^T*\Lambda^F$ we pick
  $[\bot,\top] \in \Lambda^F*\Lambda^T$ as corresponding
  map and we have, using well-behavedness of $g$:
  \begin{align*}
    [\bot,\top]\circ \zeta_\VV & = [\bot,\top]
    \circ (\zeta^1_\VV+\zeta^2_\VV)\circ g_{F_1X,F_2X} \\
    & = [\bot\circ \zeta^1_\VV,\top \circ \zeta^2_\VV]
    \circ g_{F_1X,F_2X} \\
    & = [\bot,\top] \circ g_{F_1X,F_2X} \\
    & = \ev_T\circ T[\bot,\top]
  \end{align*}
  The other directions can be shown analogously. This concludes the
  proof. \qedhere
\end{itemize}
  
\end{proof}

Then, when we have a coalgebra of the form $Y\to FTY$ for $F$
polynomial and $T$ a monad as above, and we determinize it to get a
coalgebra $X\to FX$ for $X=TY$, we obtain a bialgebra with the algebra
structure given by the monad multiplication $\mu_Y:TX\to X$. The
EM-law obtained then also forms a distributive law for the
bialgebra. By \Cref{prop:lifting-em-law} and
\Cref{lem:eval-G-comp-zeta-equal-eval-F} we know that the distributive
law $\zeta$ lifts to $\Vgraph$, i.e.,
$\zeta\colon \overline{FT}\Rightarrow \overline{TF}$ where
$\overline{FT} = \kant{\Lambda^F*\Lambda^T}$ and
$\overline{TF} = \kant{\Lambda^T*\Lambda^F}$.

We now show that natural transformations $g$ as required above do
exist for the powerset and subdistribution monad for suitable
quantales. Note that they are ``asymmetric'' and prioritize one of the
two sets over the other.

\begin{propositionrep}
  Let $T=\mathcal{P}$ be the powerset monad with evaluation map
  $\ev_T = \sup$ for $\VV=\unitQ$. Then $g_{X_1,X_2}$ below is a
  natural transformation that is compatible with unit and
  multiplication of $T$ and is well-behaved.
  \[
    g_{X_1,X_2}\colon \mathcal{P}(X_1+X_2)  \to
    \mathcal{P}X_1+\mathcal{P}X_2 \qquad
    g_{X_1,X_2}(X') =
      \begin{cases}
        X'\cap X_1 & \mbox{if $X'\cap X_1\neq \emptyset$} \\
        X' & \mbox{otherwise}
      \end{cases}
  \]
\end{propositionrep}

\begin{proof}
  We have to prove the following properties of $g$:
  \begin{description}
  \item[naturality:] let $f_i\colon X_i\to Y_i$,
    $X'\subseteq X_1+X_1$. We distinguish the following cases: if
    $X'\cap X_1\neq \emptyset$, then
    \begin{align*}
      (\mathcal{P}(f_1)+\mathcal{P}(f_2))\circ g_{X_1,X_2}(X') & =
      (\mathcal{P}(f_1)+\mathcal{P}(f_2))(X'\cap X_1) \\
      & = \mathcal{P}(f_1)(X'\cap X_1) \\
      & = f_1[X'\cap X_1] \\
      & =
      g_{Y_1,Y_2}(f_1[X'\cap X_1]+f_2[X'\cap X_2]) \\
      & =
      g_{Y_1,Y_2}((f_1+f_2)[X']) \\
      & =
      g_{Y_1,Y_2}\circ \mathcal{P}(f_1+f_2)(X') 
    \end{align*}
    Note that the equality on the third line holds since $X'\cap X_1$
    is non-empty and the same is true for $f_1[X'\cap X_1]$.

    If $X'\cap X_1 = \emptyset$, hence $X'\subseteq X_2$, then
    \begin{align*}
      (\mathcal{P}(f_1)+\mathcal{P}(f_2))\circ g_{X_1,X_2}(X') & =
      (\mathcal{P}(f_1)+\mathcal{P}(f_2))(X') \\
      & = \mathcal{P}(f_2)(X') \\
      & = f_2[X'] \\
      & =
      g_{X_1,X_2}(f_2[X']) \\
      & =
      g_{X_1,X_2}((f_1+f_2)[X']) \\
      & =
      g_{X_1,X_2}\circ \mathcal{P}(f_1+f_2)(X') 
    \end{align*}
  \item[compatibility with $\eta$:]
    $g_{X_1,X_2}\circ \eta_{X_1+X_2} = \eta_{X_1}+\eta_{X_2}$

    Let $x\in X_1+X_2$. Then, since $g_{X_1,X_2}$ is the identity on
    singletons:
    \begin{align*}
      g_{X_1,X_2}\circ \eta_{X_1+X_2}(x) & = g_{X_1,X_2}(\{x\}) = \{x\}
    \end{align*}
  \item[compatibility with $\mu$:]
    $g_{X_1,X_2}\circ \mu_{X_1+X_2} = (\mu_{X_1}+\mu_{X_2}) \circ
    g_{TX_1,TX_2}\circ Tg_{X_1,X_2}$

    Let $\mathcal{X}\subseteq \mathcal{P}(X_1+X_2)$. We can split
    $\mathcal{X} = \mathcal{X}_1 + \mathcal{X}_2$, where
    $\mathcal{X}_1 = \{X'\in\mathcal{X}\mid X'\cap X_1\neq
    \emptyset\}$ and
    $\mathcal{X}_2 = \{X'\in\mathcal{X}\mid X'\cap X_1 = \emptyset\}
    \subseteq \mathcal{P}(X_2)$. Note that
    $\mathcal{X}_1\cap X_1 = \{X'\cap X_1\mid X'\in \mathcal{X}_1\}
    \subseteq \mathcal{P}(X_1)$. We consider two cases: if
    $\mathcal{X}_1\neq \emptyset$, we have that $\bigcup \mathcal{X}$
    contains at least one element from $X_1$ and hence:
    \begin{align*}
      g_{X_1,X_2}\circ \mu_{X_1+X_2}(\mathcal{X}) & =
      g_{X_1,X_2}(\bigcup \mathcal{X}) \\
      & = \left(\bigcup \mathcal{X}\right)\cap X_1 \\
      & = \bigcup (\mathcal{X}\cap X_1)  \\
      & = \bigcup (\mathcal{X}_1\cap X_1) \\
      & = (\mu_{X_1}+\mu_{X_2})(\mathcal{X}_1\cap X_1) \\
      & = (\mu_{X_1}+\mu_{X_2}) \circ
      g_{\mathcal{P}(X_1),\mathcal{P}(X_2)}((\mathcal{X}_1\cap X_1)
      + \mathcal{X}_2) \\
      & = (\mu_{X_1}+\mu_{X_2}) \circ
      g_{\mathcal{P}(X_1),\mathcal{P}(X_2)}\circ
      g_{X_1,X_2}[\mathcal{X}_1 + \mathcal{X}_2] \\
      & = (\mu_{X_1}+\mu_{X_2}) \circ
      g_{\mathcal{P}(X_1),\mathcal{P}(X_2)}\circ
      \mathcal{P}(g_{X_1,X_2})(\mathcal{X})
    \end{align*}
    If instead $\mathcal{X}_1 = \emptyset$, we have that
    $\bigcup \mathcal{X}$ contains no element from $X_1$ and
    $\mathcal{X} = \mathcal{X}_2$. Then:
    \begin{align*}
      g_{X_1,X_2}\circ \mu_{X_1+X_2}(\mathcal{X}) & = g_{X_1,X_2}\circ
      \mu_{X_1+X_2}(\mathcal{X}_2) \\
      & = g_{X_1,X_2}(\bigcup \mathcal{X}_2) \\
      & = \bigcup \mathcal{X}_2 \\
      & = (\mu_{X_1}+\mu_{X_2})(\mathcal{X}_2) \\
      & = (\mu_{X_1}+\mu_{X_2}) \circ
      g_{\mathcal{P}(X_1),\mathcal{P}(X_2)}(\mathcal{X}_2) \\
      & = (\mu_{X_1}+\mu_{X_2}) \circ
      g_{\mathcal{P}(X_1),\mathcal{P}(X_2)}\circ
      g_{X_1,X_2}[\mathcal{X}_2] \\
      & = (\mu_{X_1}+\mu_{X_2}) \circ
      g_{\mathcal{P}(X_1),\mathcal{P}(X_2)}\circ
      \mathcal{P}(g_{X_1,X_2})(\mathcal{X})
    \end{align*}
  \item[well-behavedness:] First, note that here $\top=0$ and $\bot = 1$.
    \begin{itemize}
    \item
      $[\sup\circ \mathcal{P}f_1,\top_{\mathcal{P}X_2}] \circ g_{X_1,X_2} = \sup\circ
      \mathcal{P}[f_1,\top_{X_2}]$:
      
      Let $X' \subseteq X_1 + X_2$. We distinguish three subcases.
      \begin{itemize}
      	\item If $X' \cap X_1 \neq \emptyset$, then
      	\begin{align*}
      		([\sup\circ \mathcal{P}f_1,\top_{\mathcal{P}X_2}] \circ g_{X_1,X_2})(X') &= (\sup \circ \mathcal{P}f_1)(X' \cap X_1) \\
      		&= \sup f_1[X' \cap X_1]\\
      		&= \sup (f_1[X' \cap X_1] \cup \top_{X_2}[X' \cap X_2])\\
      		&= \sup \left(\mathcal{P}[f_1, \top_{X_2}](X')\right)
      	\end{align*} 
      	\item If $X' = \emptyset$, then
     	\begin{align*}
     		([\sup\circ \mathcal{P}f_1,\top_{\mathcal{P}X_2}] \circ g_{X_1,X_2})(X') &= \top_{\mathcal{P} X_2}(\emptyset)\\
     				&= \top \\
     				&= \sup (\emptyset) \\
     				&= \sup ([\mathcal{P}f_1, \top_{X_2}](\emptyset)) \\
     				&= (\sup \circ [\mathcal{P}f_1, \top_{X_2}]) (X')
     	\end{align*}
     	\item $X' \cap X_1 = \emptyset$ and $X' \cap X_2 \neq
          \emptyset$, hence $X' \subseteq X_2$. We can assume that $X'\neq\emptyset$.
     	\begin{align*}
     		([\sup\circ \mathcal{P}f_1,\top_{\mathcal{P}X_2}] \circ g_{X_1,X_2})(X') &= \top_{\mathcal{P} X_2}(X' \cap X_2)\\
     				&= \top \\
     				&= \sup (\top_{X_2}(X' \cap X_2)) \\
     				&= \sup ([\mathcal{P}f_1, \top_{X_2}](X' \cap X_2)) \\
     				&= (\sup \circ [\mathcal{P}f_1, \top_{X_2}]) (X')
     	\end{align*}
      \end{itemize}
    \item
      $[\bot_{\mathcal{P}X_1},\sup\circ \mathcal{P}f_2] \circ g_{X_1,X_2} = \sup\circ
      \mathcal{P}[\bot_{X_1},f_2]$: 
      
      Let $X' \subseteq X_1 + X_2$. We distinguish two subcases.
      \begin{itemize}
      	\item If $X' \cap X_1 \neq \emptyset$, then:
      	\begin{align*}
      		([\bot_{\mathcal{P}X_1},\sup\circ \mathcal{P}f_2] \circ g_{X_1,X_2})(X') &= \bot_{\mathcal{P}X_1}(X' \cap X_1) \\
      		&= \bot \\
      		&= \sup (\bot_{X_1}[X' \cap X_1] \cup f_2 [X' \cap X_2]) \\
      		&= (\sup \circ \mathcal{P}[\bot_{X_1}, f_1])(X')
      	\end{align*}
      	\item If $X' \cap X_1 = \emptyset$, hence $X' \subseteq X_2$:
      	\begin{align*}
      		([\bot_{\mathcal{P}X_1},\sup\circ \mathcal{P}f_2] \circ g_{X_1,X_2})(X') &= (\sup \circ \mathcal{P}f_2)(X') \\
      		&= \sup (f_2[X']) \\
      		&= \sup (\bot_{X_1}[X' \cap X_1] \cup f_2[X' \cap X_2])\\
      		&= (\sup \circ \mathcal{P}[\bot_{X_1}, f_2])(X')
      	\end{align*}
      \end{itemize}
    \item
      $[\bot_{\mathcal{P}X_1},\top_{\mathcal{P}X_2}] \circ g_{X_1,X_2} = \sup\circ
      \mathcal{P}[\bot_{X_1},\top_{X_2}]$:
      Let $X' \subseteq X_1 + X_2$. We distinguish three subcases:
      \begin{itemize}
      	\item If $X' \cap X_1 \neq \emptyset$, then
      	\begin{align*}
      		([\bot_{\mathcal{P}X_1},\top_{\mathcal{P}X_2}] \circ g_{X_1,X_2})(X') &= \bot_{\mathcal{P} X_1} (X' \cap X_1) \\
      		&= \bot \\
      		&= \sup (\bot_{X_1}[X' \cap X_1]) \\
      		&= \sup (\bot_{X_1}[X' \cap X_1] \cup \top_{X_2}[X' \cap X_2]) \\
      		&= (\sup \circ \mathcal{P}[\bot_{X_1}, \top_{X_2}])(X')
      	\end{align*}
      	\item If $X' = \emptyset$, then
      	\begin{align*}
      		([\bot_{\mathcal{P}X_1},\top_{\mathcal{P}X_2}] \circ g_{X_1,X_2})(X') &= \top_{\mathcal{P} X_2} (\emptyset) \\
      		&= \top \\
      		&= \sup(\emptyset) \\
      		&= (\sup \circ \mathcal{P}[\bot_{X_{1}}, \top_{X_2}])(\emptyset) \\
      		&= (\sup \circ \mathcal{P}[\bot_{X_{1}}, \top_{X_2}])(X') \\
      	\end{align*}
      	\item $X' \cap X_1 = \emptyset$ and $X' \cap X_2 \neq \emptyset$, hence $X' \subseteq
          X_2$. We can assume that $X'\neq\emptyset$.
      	\begin{align*}
      		([\bot_{\mathcal{P}X_1},\top_{\mathcal{P}X_2}] \circ g_{X_1,X_2})(X') &= \top_{\mathcal{P} X_2}(X' \cap X_2) \\
      		&= \top \\
      		&= \sup(\top_{X_2}(X' \cap X_2)) \\
      		&= \sup(\bot_{X_1}[X' \cap X_1] \cup \top_{X_2}[X' \cap X_2]) \\
      		&= (\sup \circ \mathcal{P}[\bot_{X_1}, \top_{X_2}])(X') \qedhere  
      	\end{align*}
      \end{itemize}
    \end{itemize}
  \end{description}
\end{proof}

\begin{propositionrep}
  Let $T=\mathcal{S}$ be the subdistribution monad where
  $\mathcal{S}(X) = \{p\colon X\to [0,1]\mid \sum_{x\in X} p(x) \le
  1\}$. Assume that its evaluation map is $\ev_T=\expect{}$ for the
  quantale $\VV=\zeroinfQ$ (where we assume that
  $p\cdot \infty = \infty$ if $p>0$ and $0$ otherwise). Then $g_{X_1,X_2}$ below is a
  natural transformation that is compatible with unit and
  multiplication of $T$ and is well-behaved.
  \[
    g_{X_1,X_2}\colon \mathcal{S}(X_1+X_2) \to
    \mathcal{S}X_1+\mathcal{S}X_2 \qquad
    g_{X_1,X_2}(p) =
      \begin{cases}
        p|_{X_1} & \mbox{if $\supp(p)\cap X_1\neq \emptyset$} \\
        p|_{X_2} & \mbox{otherwise}
      \end{cases}
  \]
\end{propositionrep}

\begin{proof}
  We have to prove the following properties of $g$:
  \begin{description}
  \item[naturality:] let $f_i\colon X_i\to Y_i$,
    $p \in \mathcal{S}(X_1+X_1)$. We distinguish the following cases:
    if $\supp(p) \cap X_1 \neq \emptyset$, then for all $y \in Y_1$ we have that
     \begin{align*}
      \left((\mathcal{S}(f_1)+\mathcal{S}(f_2))\circ g_{X_1,X_2}(p)\right)(y) & =
      \left((\mathcal{S}(f_1)+\mathcal{S}(f_2))(p|_{X_1})\right)(y) \\
      & = \left(\mathcal{S}(f_1)(p|_{X_1})\right)(y) \\
      & = \sum_{y = f_1(x)} p|_{X_1}(x) \\
      & = \sum_{y = (f_1 + f_2)(x)} p|_{X_1}(x) \\
      &=\left(\mathcal{S}(f_1 + f_2)(p|_{X_1})\right)(y) \\
      &= \left(\mathcal{S}(f_1 + f_2)(p)\right)(y) \\ 
      &=(\mathcal{S}(f_1 + f_2)(p))|_{Y_1}(y) \\ 
      &= \left(g_{Y_1, Y_2} \circ \mathcal{S}(f_1 + f_2) (p)\right) (y)
    \end{align*}
    
    For the remaining case, we have that for all $y \in Y_2$
    \begin{align*}
    	\left((\mathcal{S}(f_1)+\mathcal{S}(f_2))\circ g_{X_1,X_2}(p)\right)(y) 
    	&= \mathcal{S}(f_2)(p|_{X_2})(y) \\
    	&= \sum_{y = f_2(x)}p|_{X_2}(x) \\
    	&= \sum_{y = (f_1 + f_2)(x)}p_{X_2}(x) \\
    	&= \left(\mathcal{S}(f_1 + f_2)(p|_{X_2})\right)(y) \\
    	&= \left(\mathcal{S}(f_1 + f_2)(p)\right)(y) \\	
    	&= (\mathcal{S}(f_1 + f_2)(p))|_{Y_2}(y) \\	
    	&= \left(g_{Y_1, Y_2} \circ \mathcal{S}(f_1 + f_2)(p)\right)(y)
    \end{align*}
  \item[compatibility with $\eta$:] $g_{X_1,X_2}\circ \eta_{X_1+X_2} = \eta_{X_1}+\eta_{X_2}$. Since $g_{X_1, X_2}$ is an identity when applied to Dirac distributions, we have that:
  	\[
		g_{X_1, X_2} \circ \eta_{X_1 + X_2}(x) = g_{X_1, X_2} (\delta_x) = \delta_x = \eta_{X_1}(x) 
	\]
  \item[compatibility with $\mu$:]
    $g_{X_1,X_2}\circ \mu_{X_1+X_2} = (\mu_{X_1}+\mu_{X_2}) \circ
    g_{\mathcal{S}X_1,\mathcal{S}X_2}\circ \mathcal{S}g_{X_1,X_2}$.
		Let $p \in \mathcal{S}\mathcal{S}(X_1 + X_2)$. 
		Note that we can partition $\supp(p) = \mathcal{X}_1 \cup \mathcal{X}_2$ in a way that $\mathcal{X}_1 = \{ \nu \in \supp(p) \mid \supp(\nu) \cap X_1 \neq \emptyset\}$ and $\mathcal{X}_2 = \{ \nu \in \supp(p) \mid \supp(\nu) \cap X_1 = \emptyset\}$. We consider two cases: if $\mathcal{X}_1 \neq \emptyset$, then $\supp(\mu_{X_1 + X_2}(p))\cap X_1 \neq \emptyset$ and hence for all $x \in X_1$, we have that:
		\begin{align*}
			\left((g_{X_1, X_2} \circ \mu_{X_1 + X_2})(p)\right)(x) &= (\mu_{X_1 + X_2}(p))|_{X_1}(x) \\
			&= \sum_{\nu \in \mathcal{S}(X_1 + X_2)} p(\nu) \nu|_{X_1}(x) \\
			&= \sum_{\nu \in \mathcal{X}_1} p(\nu)\nu|_{X_1}(x) \\
			&= \sum_{\nu \in \mathcal{X}_1} \left(\mathcal{S}(g_{X_1, X_2})(p)\right)(\nu)\nu(x) \\
			&=  \sum_{\nu \in \mathcal{S}(X_1 + X_2)} \left(\mathcal{S}g_{X_1,X_2}(p)|_{\mathcal{S}X_1}\right)(\nu) \nu(x)  \\
			&=  \left((\mu_{X_1}+\mu_{X_2}) \circ
    g_{\mathcal{S}X_1,\mathcal{S}X_2}\circ \mathcal{S}g_{X_1,X_2}(p)\right)(x)
		\end{align*}
		Now, consider the case when $\supp(\mu_{X_1 + X_2}(p)) \cap X_1 = \emptyset$. For all $x \in X_2$ we have that:
		\begin{align*}
			\left((g_{X_1, X_2} \circ \mu_{X_1 + X_2})(p)\right)(x) &= (\mu_{X_1 + X_2}(p))|_{X_2}(x) \\
			&= \sum_{\nu \in \mathcal{S}(X_1 + X_2)} p(\nu) \nu|_{X_2}(x) \\
			&= \sum_{\nu \in \mathcal{X}_2} p(\nu) \nu|_{X_2}(x) \\
			&= \sum_{\nu \in \mathcal{X}_2} p(\nu) \nu(x) \\
			&= \sum_{\nu \in \mathcal{X}_2} \left(\mathcal{S}(g_{X_1, X_2})(p)\right)(\nu)\nu(x) \\
			&= \sum_{\nu \in \mathcal{S}(X_1 + X_2)} \left(\mathcal{S}(g_{X_1, X_2})(p)|_{\mathcal{S} X_2}\right)(\nu)\nu(x) \\
			&= \sum_{\nu \in \mathcal{S}(X_1 + X_2)} \left(\mathcal{S}(g_{X_1, X_2})(p)|_{\mathcal{S} X_2}\right)(\nu)\nu(x) \\
			&=  \left((\mu_{X_1}+\mu_{X_2}) \circ
    g_{\mathcal{S}X_1,\mathcal{S}X_2}\circ \mathcal{S}g_{X_1,X_2}(p)\right)(x)
		\end{align*}
		
  \item[well-behavedness:] First, note that here $\top=0$ and $\bot = \infty$.
    \begin{itemize}
    \item
      $[\expect{}\circ {\mathcal{S}}f_1,\top_{{\mathcal{S}}X_2}] \circ g_{X_1,X_2} = \expect{}\circ
      {\mathcal{S}}[f_1,\top_{X_2}]$: Let $p \in S(X_1 + X_2)$. We first consider the case when $\supp(p) \cap X_1 \neq \emptyset$. We have that:
      \begin{align*}
      	[\expect{}\circ {\mathcal{S}}f_1,\top_{{\mathcal{S}}X_2}] \circ g_{X_1,X_2} (p) &= [\expect{}\circ {\mathcal{S}}f_1,\top_{{\mathcal{S}}X_2}] (p|_{X_1}) \\
      	&= \expect{}\circ {\mathcal{S}}f_1 (p|_{X_1}) \\
      	&= \sum_{y \in \VV} y \left(\sum_{y = f_1(x)} p(x)\right)\\
      	&= \left(\sum_{y \in \VV} y \left(\sum_{y = f_1(x)} p(x)\right)\right) + 0 \left(\sum_{0 = \top_{X_2}(x)} p(x)\right)\\
      	&= \sum_{y \in \VV} y \left(\sum_{y = [f_1,\top_{X_2}](x)} p(x)\right)\\
      	&= \expect{} \circ \mathcal{S} [f_1, \top_{X_2}]
      \end{align*}
     
     Now, consider the case when $\supp(p) = \emptyset$. We have that:
     \begin{align*}
     	[\expect{}\circ {\mathcal{S}}f_1,\top_{{\mathcal{S}}X_2}] \circ g_{X_1,X_2} (p) &= [\expect{}\circ {\mathcal{S}}f_1,\top_{{\mathcal{S}}X_2}] (p|_{X_2}) \\
     	&= \top \\
     	&= 0 \\
     	&= \expect{} \circ \mathcal{S}[f_1, \top_{X_2}] (p)
     \end{align*}
     Finally, consider the case when $\supp(p)\cap X_1 = \emptyset$, but $\supp(p) \neq \emptyset$. We have that:
     \begin{align*}
     	[\expect{}\circ {\mathcal{S}}f_1,\top_{{\mathcal{S}}X_2}] \circ g_{X_1,X_2} (p) &= [\expect{}\circ {\mathcal{S}}f_1,\top_{{\mathcal{S}}X_2}] (p|_{X_2}) \\
     	&= \top \\
     	&= 0 \\
     	&= \sum_{y \in \VV} y \left( \sum_{y = \top_{X_2}(x)} p(x)\right) \\
     	&= \expect{} \circ \mathcal{S}(\top_{X_2}) (p) \\
     	&= \expect{} \circ \mathcal{S}[f_1, \top_{X_2}] (p)
     \end{align*}
    \item
      $[\bot_{\mathcal{S}X_1},\expect{}\circ \mathcal{S}f_2] \circ g_{X_1,X_2} = \expect{}\circ
      \mathcal{S}[\bot_{X_1},f_2]$: Let $p \in S(X_1 + X_2)$. Firstly, consider the case when $\supp(p) \cap X_1 \neq \emptyset$. In such a case, we have
      \begin{align*}
      	[\bot_{\mathcal{S}X_1},\expect{}\circ \mathcal{S}f_2] \circ g_{X_1,X_2}(p) &= [\bot_{\mathcal{S}X_1},\expect{}\circ \mathcal{S}f_2] (p|_{X_1}) \\
      	&= \bot \\
      	&= \infty \\
      	&= \infty  + \left(\sum_{y \in \mathbb{R}} y \left( \sum_{y = [\bot_{X_1}, f_2](x)}p(x)\right)\right)\\
      	&= \left( \sum_{\infty = [\bot_{X_1}, f_2](x)}p(x)\right)  + \left(\sum_{y \in \mathbb{R}} y \left( \sum_{y = [\bot_{X_1}, f_2](x)}p(x)\right)\right)\\
      	&= \sum_{y \in [0, + \infty]} y \left( \sum_{y = [\bot_{X_1}, f_2](x)}p(x)\right)\\
      	&= \expect{}\circ \mathcal{S}[\bot_{X_1},f_2](p)
      \end{align*}
      If $\supp(p) = \emptyset$, then we have that
      \begin{align*}
      	[\bot_{\mathcal{S}X_1},\expect{}\circ \mathcal{S}f_2] \circ g_{X_1,X_2}(p) &= 	[\bot_{\mathcal{S}X_1},\expect{}\circ \mathcal{S}f_2](p|_{X_2}) \\
      	&= \expect{}\circ \mathcal{S}f_2(p|_{X_2}) \\
      	&= 0 \\
      	&= \expect{} \circ \mathcal{S}[\bot_{X_1}, f_2] (p)
      \end{align*}
      Finally, consider the case when $\supp(p) \cap X_1 = \emptyset$, but $\supp(p) \neq \emptyset$. Then, we have that
      \begin{align*}
      	[\bot_{\mathcal{S}X_1},\expect{}\circ \mathcal{S}f_2] \circ g_{X_1,X_2}(p) &= [\bot_{\mathcal{S}X_1},\expect{}\circ \mathcal{S}f_2]  (p|_{X_2}) \\
      	&= \expect{} \circ \mathcal{S}f_2 (p|_{X_2})\\
      	&= \sum_{y \in \VV} y \left(\sum_{y = f_2(x)} p|_{X_2}(x)\right)\\
      	&= \sum_{y \in \VV} y \left(\sum_{y = [\bot_{X_1}, f_2](x)} p(x)\right)\\
      	&= \expect{} \circ \mathcal{S}[\bot_{X_1}, f_2](p)
      \end{align*}
    \item
      $[\bot_{\mathcal{S} X_1},\top_{\mathcal{S} X_2}] \circ g_{X_1,X_2} = \expect{} \circ
      \mathcal{S}[\bot_{X_1},\top_{X_2}]$: Let $p \in \mathcal{S}(X_1 + X_2)$. First, consider the case when $\supp(p) \cap X_1 \neq \emptyset$. We have that
      \begin{align*}
      	[\bot_{\mathcal{S} X_1},\top_{\mathcal{S} X_2}] \circ g_{X_1,X_2}(p) &= [\bot_{\mathcal{S} X_1},\top_{\mathcal{S} X_2}]  (p|_{X_1}) \\
      	&= \bot_{\mathcal{S} X_1} (p|_{X_1}) \\
      	&= \infty \\
      	&= \infty  + 0\\
      	&= \left( \sum_{\infty = [\bot_{X_1}, \top_{X_2}](x)}p(x)\right)  + \left( \sum_{0 = [\bot_{X_1}, \top_{X_2}](x)}p(x)\right)\\
      	&= \sum_{y \in [0, + \infty]} y \left( \sum_{y = [\bot_{X_1}, \top_{X_2}](x)}p(x)\right)\\
      	&= \expect{}\circ \mathcal{S}[\bot_{X_1},\top_{X_2}](p)
      \end{align*}
      Then, consider the case when $\supp(p) = \emptyset$. 
      \begin{align*}
      [\bot_{\mathcal{S} X_1},\top_{\mathcal{S} X_2}] \circ g_{X_1,X_2}(p) &= [\bot_{\mathcal{S}X_1}, \top_{\mathcal{S}X_2}] (p|_{X_2}) \\
      	&= \top_{\mathcal{S}X_2} (p|_{X_2}) \\
      	&= 0 \\
      	&= \expect{} \circ \mathcal{S}[\bot_{X_1}, \top_{X_2}](p)
      \end{align*}
      Finally, consider the case when $\supp(p) \cap X_1 = \emptyset$ and $\supp(p) \cap X_2 \neq \emptyset$. We have that 
      \begin{align*}
      	[\bot_{\mathcal{S} X_1},\top_{\mathcal{S} X_2}] \circ g_{X_1,X_2}(p) &= [\bot_{\mathcal{S}X_1}, \top_{\mathcal{S}X_2}] (p|_{X_2}) \\
      	&= \top_{\mathcal{S} X_2} (p|_{X_2}) \\
      	&= 0 \\
      	&= \left( \sum_{y = [\bot_{X_1}, \top_{X_2}](x)} p(x) \right) \\
      	&= \expect{} \circ \mathcal{S} [\bot_{X_1}, \top_{X_2}](p) \qedhere
      \end{align*}
    \end{itemize}
  \end{description}  
\end{proof}

% \subsection{Lifting Distributive Laws}
% \label{sec:distributive-law-lifting}

It is left to show that the EM-law $\zeta\colon FT\Rightarrow TF$ lifts
to
$\zeta\colon \overline{F}\,\overline{T}\Rightarrow
\overline{T}\,\overline{F}$, where $\overline{F} = \kant{\Lambda^F}$,
$\overline{T} = \kant{\Lambda^T}$ where $\Lambda^T = \{\ev_T\}$,
and where the evaluation maps for $F$ are obtained as described earlier. We namely
prove that its components are all non-expansive, i.e.,
$\Vgraph$-morphisms, commutativity is already known. Using the previous
results we obtain:
\[ \overline{F}\,\overline{T} \stackrel{(1)}{\vle} \overline{TF}
  \stackrel{(2)}{\Rightarrow} \overline{FT} \stackrel{(3)}{=}
  \overline{F}\,\overline{T} \]

\begin{description}
\item[(1)] follows from the inequality in
  \Cref{sec:compositionality}. This implies that the identity
  $\id_{TFX}\colon \overline{T}\,\overline{F}X\to \overline{TF}X$ is
  non-expansive (a $\Vgraph$-morphism), resulting in the natural
  transformation
  $\overline{T}\,\overline{F} \stackrel{\id}{\Rightarrow} \overline{TF}$.
\item[(2)] is implied by the results of this section
  (\Cref{lem:eval-G-comp-zeta-equal-eval-F},
  \Cref{prop:lifting-em-law})
\item[(3)] is guaranteed by the fact that
  $\overline{FT} = \overline{F}\,\overline{T}$ if $F$ is polynomial
  (and we have suitable evaluation maps), hence compositionality holds
  (\Cref{prop:comp-polynomial})
\end{description}

\begin{example}
  We continue \Cref{ex:running-2} and first observe that the
  distributive law given there is obtained by the inductive
  construction given above\todo{B: more details?}.

  For this concrete example fix the label set as a singleton:
  $A = \{a\}$. Consider the coalgebra with states $X = \{x,x',y\}$
  drawn below on the left. The payoff is written next to each state.
  \begin{center}
    \begin{tikzpicture}[scale=0.7]
      \node (Y) at (7,0) [circle,draw,label=160:$\nicefrac{1}{2}$]{$y$};
      \node (Ya) at (7,-1.5) [inner sep=0]{$\bullet$}; 			
      \node (X) at (2,0) [circle,draw,label=20:$\nicefrac{1}{2}$]{$x$};
      \node (Xa) at (2,-1.5) [inner sep=0]{$\bullet$}; 			
      \node (Xp) at (4,-1.5) [circle,draw,label=90:$1$]{$x'$};
      \node (Xpa) at (5.5,-1.5) [inner sep=0]{$\bullet$}; 			
      \draw  [->] (Y) to [bend right] node [left]{$a$} (Ya);
      \draw  [->] (Ya) to [bend right] node [right]{$1$} (Y);	
      \draw  [->] (X) to [bend right] node [left]{$a$} (Xa);
      \draw  [->] (Xa) to [bend right] node [right]{$\nicefrac{1}{2}$} (X);
      \draw  [->] (Xa) to node [below]{$\nicefrac{1}{2}$} (Xp);	
      \draw  [->] (Xp) to [bend left] node [above]{$a$} (Xpa);
      \draw  [->] (Xpa) to [bend left] node [below]{$1$} (Xp);	
    \end{tikzpicture}
    \qquad
    \raisebox{1cm}{
    \begin{tabular}[b]{c||c|c|c|c|c}
      & $\epsilon$ & $a$ & $aa$ & $aaa$ & \dots \\ \hline 
      $x$ & $\nicefrac{1}{2}$ & $\nicefrac{3}{4}$ & $\nicefrac{7}{8}$
      & $\nicefrac{15}{16}$ & \dots \\
      $y$ & $\nicefrac{1}{2}$ & $\nicefrac{1}{2}$ & $\nicefrac{1}{2}$
      & $\nicefrac{1}{2}$ & \dots 
    \end{tabular}}
  \end{center}
  Here, the trace map has the values for states $x,y$ given in the
  table above on the right, leading to the behavioural distance
  $\nu\mathrm{beh}(\delta_x,\delta_y) = \nicefrac{1}{2}$ (the supremum
  of all differences).

  Determinizing the probabilistic automaton above leads to an
  automaton with infinite state space, even if we only consider the
  reachable states (which are probability distributions):
  \begin{center}
    \begin{tikzpicture}[scale=0.7]
      \node (X) at (0,0)
      [circle,draw,label=90:$\nicefrac{1}{2}$]{$1\cdot y$};
      \draw  [->] (X) to [loop left] node {$a$} (X);
      \node (X) at (2,0) [circle,draw,label=90:$\nicefrac{1}{2}$]{$1\cdot x$};
      \node (X1) at (6,0) [ellipse,draw,label=90:$\nicefrac{3}{4}$]{$\nicefrac{1}{2}
        \cdot x + \nicefrac{1}{2}\cdot x'$};
      \node (X2) at (12,0) [ellipse,draw,label=90:$\nicefrac{7}{8}$]{$\nicefrac{1}{4} \cdot x + \nicefrac{3}{4}\cdot x'$};
      \node (X3) at (16,0) {$\dots$};
      \draw  [->] (X) to node [above]{$a$} (X1);
      \draw  [->] (X1) to node [above]{$a$} (X2);
      \draw  [->] (X2) to node [above]{$a$} (X3);
    \end{tikzpicture}
  \end{center}
  Our aim is now to define a witness distance of type
  $d\colon TX\times TX\to [0,1]$ that is a post-fixpoint up-to in the
  quantale order and a pre-fixpoint for the order on the reals
  ($d\ge \mathrm{beh}(u(d))$). From this we can infer that
  $\nu \mathrm{beh} \le d$, obtaining an upper bound. We set $d(1\cdot x,1\cdot y) = \nicefrac{1}{2}$, $d(1\cdot x',1\cdot
    y) = \nicefrac{1}{2}$ and $d(p,q) = 1$ for all other pairs of
  probability distributions $p,q$. We now show that $d$ is a
  pre-fixpoint of $\mathrm{beh}$ in the order on the reals, i.e., our
  aim is to prove for all $p,q$ that $d(p,q) \ge \mathrm{beh}(u(d))(p,q)$.
  This is obvious for the cases where $d(p,q) = 1$, hence only two
  cases remain:
  \begin{itemize}
  \item If $p = 1\cdot x$, $q = 1\cdot y$, we have:
    \begin{align*}
      \mathrm{beh}(u(d))(1\cdot x,1\cdot y) &=
      \max\{u(d)(\nicefrac{1}{2}\cdot x+\nicefrac{1}{2}\cdot
      x',y),|\nicefrac{1}{2}-\nicefrac{1}{2}|\} \\
      &= u(d)(\nicefrac{1}{2}\cdot x+\nicefrac{1}{2}\cdot x',y)
      \\
      &\le \nicefrac{1}{2}\cdot d(1\cdot x,1\cdot y) +
      \nicefrac{1}{2}\cdot d(1\cdot x',1\cdot y) \\
      &= \nicefrac{1}{2}\cdot \nicefrac{1}{2} +
      \nicefrac{1}{2}\cdot \nicefrac{1}{2} = \nicefrac{1}{2} =
      d(1\cdot x,1\cdot y)
    \end{align*}
  \item The case $p = 1\cdot x'$, $q = 1\cdot y$ can be shown analogously.
  % \item If $p = 1\cdot x'$, $q = 1\cdot y$, we have:
  %   \begin{align*}
  %     \mathrm{beh}(u(d))(1\cdot x',1\cdot y) &= \max\{u(d)(1\cdot
  %     x',1\cdot y),|\nicefrac{1}{2}-1|\} \\
  %     &\le \max\{d(1\cdot x',1\cdot y),\nicefrac{1}{2}\} \\
  %     &= \max\{\nicefrac{1}{2},\nicefrac{1}{2}\} 
  %     = \nicefrac{1}{2} = d(1\cdot x',1\cdot y)
  %   \end{align*}
  \end{itemize}
  We are using the following inequalities: (i) $u(d)(p,q) \le d(p,q)$
  (follows from the definition of~$u$); (ii)
  $u(d)(r_1\cdot p_1+r_2\cdot p_2,r_1\cdot q_1+r_2\cdot q_2) \le
  r_1\cdot d(p_1,q_1) + r_2\cdot d(p_2,q_2)$ (metric congruence,
  \full{cf.~\Cref{cor:metric-congruence} in the appendix}\short{see
    \cite{dgkknrw:behavioural-metrics-compositionality-arxiv} for more
    details.}).  This concludes the argument.
\end{example}

Note that here up-to techniques in fact allow us to consider finitary
witnesses for bounds for behavioural distances, even when
the determinized coalgebra has an infinite state space.

\section{Case Study: Transition Systems with Exceptions}
\label{sec:case-study}

In addition to the running example treated in the paper we now
consider a second case study on transition systems with exceptions
that helps to concretely show upper bounds (lower bounds in the
quantalic order) for behavioural distances via appropriate witnesses.

\todo{Wasserstein giving less fine-grained distances explanation. B:
  done, earlier in the running example}
\todo{Up-to functions/convex contextual closure, reference to
  Chatzikokolakis, Palamidessi, Vignudelli and others. B: done
  $\leadsto$ conclusion}
%\todo[inline]{B: make sure that case studies are for the directed case!}

\begin{toappendix}
  We will now show a property of the up-to function $u$ that is
  closely related to the fact that under certain conditions the
  Kantorovich lifting is always bounded by the Wasserstein lifting, as
  shown in \cite{bbkk:coalgebraic-behavioral-metrics}. The requirement
  is part of the notion of a well-behaved evaluation function
  introduced there.
  
  \begin{proposition}
    \label{prop:well-behaved-upper-bound}
    Let $(Y,a,c)$ be an $F$-$T$-bialgebra, $\ev_T$ an evaluation map
    for $T$ and let $u$ be the up-to function (as in
    \Cref{sec:intro-up-to}). Furthermore assume that
    $d\colon Y\times Y\to \VV$ is an object of $\Vgraph$ and
    $t\in T(Y\times Y)$. Then:
    \[ u(d)(a(T\pi_1(t),a(T\pi_2(t))) \vge \ev_T\circ Td(t) \]
    whenever
    \begin{equation}
      d_\VV\circ (\ev_T\times \ev_T)\circ \langle
      T\pi'_1,T\pi'_2\rangle \vge \ev_T\circ Td_\VV.
      \label{eq:well-behaved-w2}
    \end{equation}
    where $\pi_i\colon Y\times Y\to Y$ and
    $\pi'_i\colon \VV\times\VV\to\VV$.
  \end{proposition}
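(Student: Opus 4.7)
The plan is to reduce the statement to an inequality about the Kantorovich lifting $\overline{T}$, unfold that lifting pointwise in $f\in\gamma_Y(d)$, and then apply hypothesis~\eqref{eq:well-behaved-w2} at a well-chosen element of $T(\VV\times\VV)$.

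First I would use that $u = \Sigma_a\circ \overline{T}$, so that $u(d)(y_1,y_2) = \bigsqcup_{a(s_i)=y_i}\overline{T}(d)(s_1,s_2)$. Taking the evident witnesses $s_i := T\pi_i(t)$ immediately gives
\[ u(d)(a(T\pi_1(t)),a(T\pi_2(t))) \;\vge\; \overline{T}(d)(T\pi_1(t),T\pi_2(t)), \]
so it suffices to prove $\overline{T}(d)(T\pi_1(t),T\pi_2(t)) \vge \ev_T\circ Td(t)$.

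Next I would unfold the Kantorovich lifting: since $\Lambda^T=\{\ev_T\}$, the left-hand side equals $\bigsqcap_{f\in\gamma_Y(d)} d_\VV(\ev_T\circ Tf\circ T\pi_1(t),\,\ev_T\circ Tf\circ T\pi_2(t))$, and it is enough to bound each $f$-th term below by $\ev_T\circ Td(t)$. Fix $f\in\gamma_Y(d)$ and use the identity $f\circ\pi_i = \pi'_i\circ(f\times f)$ together with functoriality of $T$ to rewrite the $f$-th term as
\[ d_\VV\circ(\ev_T\times\ev_T)\circ\langle T\pi'_1,T\pi'_2\rangle\circ T(f\times f)(t). \]
Applying hypothesis~\eqref{eq:well-behaved-w2} at the point $T(f\times f)(t)\in T(\VV\times\VV)$ bounds this below by $\ev_T\circ Td_\VV\circ T(f\times f)(t) = \ev_T\circ T(d_\VV\circ(f\times f))(t)$.

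Since $f\in\gamma_Y(d)$ means precisely $d\vle d_\VV\circ(f\times f)$ pointwise, all that remains is to conclude $\ev_T\circ T(d_\VV\circ(f\times f))(t) \vge \ev_T\circ Td(t)$. This last step is the expected obstacle: it needs monotonicity of $\ev_T\circ T({-})$ in its functional argument (i.e.\ $g\vle g'$ pointwise implies $\ev_T(Tg(t))\vle \ev_T(Tg'(t))$), which does not seem to follow from~\eqref{eq:well-behaved-w2} alone. For the intended evaluation maps (e.g.\ $\expect{}$ or $\sup$) this monotonicity is immediate, and it is a standard companion axiom in the well-behaved-evaluation-function framework of~\cite{bbkk:coalgebraic-behavioral-metrics}; modulo it, all earlier steps above are mechanical rewrites.
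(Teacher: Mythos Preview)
Your proposal is correct and follows exactly the same route as the paper: reduce to $\overline{T}(d)(T\pi_1(t),T\pi_2(t))$ via the direct-image witness, unfold the Kantorovich lifting, and bound each $f$-term by $\ev_T\circ Td(t)$. The paper is simply terser about the last step, delegating it to \cite[Lemma~5.25]{bbkk:coalgebraic-behavioral-metrics}; your observation that this step also needs monotonicity of $\ev_T\circ T(-)$ (not just condition~\eqref{eq:well-behaved-w2}) is accurate, and indeed that monotonicity is part of the well-behaved-evaluation-function axioms in the cited reference.
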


  \begin{proof}
    \begin{eqnarray*}
      u(d)(a(T\pi_1(t),a(T\pi_2(t))) & = &
      \bigsqcup_{a(t_i)=a(T\pi_i(t))} \overline{T}(d)(t_1,t_2) \\
      & \vge & \overline{T}(d)(T\pi_1(t),T\pi_1(t)) \\
      & = & \bigsqcap_{f\in \gamma_Y(d)} d_\VV(\ev_T\circ Tf\circ
      T\pi_1(t),\ev_T\circ Tf\circ T\pi_2(t)) \\
      & \vge & \ev_T\circ Td(t)
    \end{eqnarray*}
    The last inequality follows directly from
    \cite[Lemma~5.25]{bbkk:coalgebraic-behavioral-metrics} whenever
    Condition~\eqref{eq:well-behaved-w2} holds. Note that in
    \cite{bbkk:coalgebraic-behavioral-metrics} the quantale is $\unitQ$
    and $d_\VV$ (called $d_e$ there) is assumed to be symmetric, but
    neither assumption is used in the proof.
  \end{proof}

  \begin{corollary}
    \label{cor:metric-congruence}
    In the setting of \Cref{prop:well-behaved-upper-bound}
    assume that $Y = TX$, $a=\mu_X$ and $\VV=\unitQ$. Furthermore
    consider the monad $T=\mathcal{P}$ ($T = \mathcal{D}$). Then the
    evaluation map $\ev_T = \sup$ ($\ev_T = \expect{}$) satisfies
    Condition~\eqref{eq:well-behaved-w2} and hence we obtain:
    \begin{itemize}
    \item $T=\mathcal{P}$: \[ u(d)(\bigcup_{i\in I}
      X_i,\bigcup_{i\in I} Y_i) \le \sup_{i\in I} d(X_i,Y_i) \]
      whenever $X_i,Y_i\subseteq X$, $i\in I$.
    \item $T = \mathcal{D}$:
      \[ u(d)(\sum_{i\in I} r_i\cdot p_i,\sum_{i\in I} r_i\cdot q_i)
        \le \sum_{i\in I} r_i\cdot d(p_i,q_i) \] whenever
      $p_i,q_i\in \mathcal{D}(X)$, $i\in I$.
    \end{itemize}
  \end{corollary}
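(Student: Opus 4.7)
The plan is to verify Condition~\eqref{eq:well-behaved-w2} for each of the two monads paired with its evaluation map, and then to specialize Proposition~\ref{prop:well-behaved-upper-bound} by choosing $t \in T(Y \times Y)$ that realizes the desired union or convex combination under the algebra $a = \mu_X$. Recall that in $\unitQ$ the order $\vge$ reads as $\le$ on the reals and $d_\VV(a,b) = b \ominus a$, so the condition asks, for every $t \in T(\VV \times \VV)$, that
\[ \ev_T(T\pi'_2(t)) \ominus \ev_T(T\pi'_1(t)) \le \ev_T(Td_\VV(t)). \]

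For $T = \mathcal{P}$ with $\ev_T = \sup$, the required inequality unfolds to $\sup \pi'_2[S] \ominus \sup \pi'_1[S] \le \sup\{b \ominus a \mid (a,b) \in S\}$. I would prove this by an $\epsilon$-approximation: for any $\epsilon > 0$ pick $(a^*, b^*) \in S$ with $b^*$ within $\epsilon$ of $\sup \pi'_2[S]$, use $a^* \le \sup \pi'_1[S]$, and let $\epsilon \to 0$ (truncation preserves the inequality). For $T = \mathcal{D}$ with $\ev_T = \expect{}$, the bound $\mathbb{E}_\mu[\pi'_2] \ominus \mathbb{E}_\mu[\pi'_1] \le \mathbb{E}_\mu[\pi'_2 \ominus \pi'_1]$ follows from linearity of expectation combined with the pointwise inequality $x - y \le \max(x - y, 0)$, and truncation once again preserves it.

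Having established~\eqref{eq:well-behaved-w2}, I would specialize Proposition~\ref{prop:well-behaved-upper-bound} via carefully chosen witnesses. For $T = \mathcal{P}$, take $t = \{(X_i, Y_i) \mid i \in I\} \in \mathcal{P}(Y \times Y)$; then $\mu_X \circ \mathcal{P}\pi_j(t) = \bigcup_i X_i$ (resp.\ $\bigcup_i Y_i$) and $\sup \mathcal{P}d(t) = \sup_i d(X_i, Y_i)$, which reproduces the first stated inequality. For $T = \mathcal{D}$, take $t = \sum_i r_i \cdot (p_i, q_i) \in \mathcal{D}(Y \times Y)$; then $\mu_X \circ \mathcal{D}\pi_j(t)$ is the convex combination $\sum_i r_i \cdot p_i$ (resp.\ $\sum_i r_i \cdot q_i$) and $\expect{t}(d) = \sum_i r_i \cdot d(p_i, q_i)$, yielding the second inequality.

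I expect the supremum case of Condition~\eqref{eq:well-behaved-w2} to be the most delicate step, since it is the only place where a genuine ordering argument (the $\epsilon$-approximation) is needed; the expectation case reduces to linearity, and the final translation is essentially unfolding $\mu_X$ on singletons and Dirac-weighted sums. Care is needed only to check that truncated subtraction $\ominus$ commutes sensibly with $\sup$ and $\mathbb{E}$, but this is automatic because both evaluation maps preserve nonnegativity and the reverse order of $\unitQ$ aligns with $\le$ on $[0,1]$.
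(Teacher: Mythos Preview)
Your proposal is correct and follows essentially the same approach as the paper: verify Condition~\eqref{eq:well-behaved-w2} for each monad/evaluation pair, then instantiate Proposition~\ref{prop:well-behaved-upper-bound} with the witness $t=\{(X_i,Y_i)\}$ (resp.\ $t=\sum_i r_i\cdot(p_i,q_i)$) so that $\mu_X\circ T\pi_j(t)$ yields the union (resp.\ convex combination) and $\ev_T\circ Td(t)$ yields the right-hand side. The only minor difference is that the paper asserts the inequality $\sup_i r_i \ominus \sup_i s_i \le \sup_i (r_i\ominus s_i)$ without detail, whereas you spell out an $\epsilon$-approximation; a slightly slicker one-line argument is $r_i \le s_i + (r_i\ominus s_i) \le \sup_j s_j + \sup_j(r_j\ominus s_j)$ for every $i$, but this is cosmetic.
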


  \begin{proof}
    We first prove Condition~\eqref{eq:well-behaved-w2} in both cases
    ($T=\mathcal{P}$, $T=\mathcal{D}$):
    \begin{itemize}
    \item $T=\mathcal{P}$: Let $t\in T(\VV\times \VV)$, i.e.,
      $t\subseteq [0,1] \times [0,1]$ respectively
      $t = \{(s_i,r_i)\mid i\in I, r_i,s_i\in [0,1]\}$. Then
      \begin{eqnarray*}
        d_\VV\circ (\ev_T\times \ev_T)\circ \langle
        T\pi'_1,T\pi'_2\rangle(t) & = &
        d_\VV(\ev_T(T\pi'_1(t)),\ev_T(T\pi'_2(t))) \\
        & = & \sup_{i\in I}  r_i \ominus \sup_{i\in I} s_i \\
        & \le & \sup_{i\in I} (r_i\ominus s_i) \\
        & = & \ev_T\circ Td_\VV(t).
      \end{eqnarray*}
    \item $T=\mathcal{D}$: Let $t\in T(\VV\times \VV)$, i.e.,
      $t\colon [0,1] \times [0,1]\to [0,1]$ is a probability
      distribution respectively
      $t = \sum_{i\in I} p_i\cdot (s_i,r_i)$, written as a formal sum
      where $p_i,r_i,s_i\in[0,1]$ and $\sum_{i\in I} p_i =
      1$. Then
      \begin{eqnarray*}
        d_\VV\circ (\ev_T\times \ev_T)\circ \langle
        T\pi'_1,T\pi'_2\rangle(t) & = &
        d_\VV(\ev_T(T\pi'_1(t)),\ev_T(T\pi'_2(t))) \\
        & = & \sum_{i\in I}  p_i\cdot r_i \ominus \sum_{i\in I}
        p_i\cdot s_i \\
        & \le & \sum_{i\in I} p_i\cdot (r_i\ominus s_i) \\
        & = & \ev_T\circ Td_\VV(t).
      \end{eqnarray*}
    \end{itemize}
    Finally, we spell out the inequalities:
    \begin{itemize}
    \item $T=\mathcal{P}$: Let $t\in T(TX\times TX)$, i.e.,
      $t\subseteq \mathcal{P}(X) \times \mathcal{P}(X)$ respectively
      $t = \{(X_i,Y_i)\mid i\in I, X_i,Y_i\subseteq X\}$. Then
      \begin{eqnarray*}
        u(d)(\bigcup_{i\in I} X_i,\bigcup_{i\in I} Y_i) & = &
        u(d)(\mu_X(T\pi_1(t)),\mu_X(T\pi_2(t))) \\
        & \le & \ev_T\circ Td(t) \\
        & = & \sup_{i\in I} d(X_i,Y_i)
      \end{eqnarray*}
    \item $T=\mathcal{D}$: Let $t\in T(TX\times TX)$, i.e.,
      $t\colon \mathcal{D}(X) \times \mathcal{D}(X)\to [0,1]$ is a
      probability distribution respectively
      $t = \sum_{i\in I} r_i\cdot (p_i,q_i)$, written as a formal sum
      where $r_i\in[0,1]$ with $\sum_{i\in I} r_i = 1$,
      $p_i,q_i\in\mathcal{D}(X)$. Then
      \begin{eqnarray*}
        u(d)(\sum_{i\in I} r_i\cdot p_i,\sum_{i\in I} r_i\cdot q_i)
        & = &
        u(d)(\mu_X(T\pi_1(t)),\mu_X(T\pi_2(t))) \\
        & \le & \ev_T\circ Td(t) \\
        & = & \sup_{i\in I} r_i\cdot d(p_i,q_i)
      \end{eqnarray*}
    \end{itemize}
  \end{proof}
  \end{toappendix}

% \subsection{Probabilistic Automata}
% \label{sec:case-study-1}

\todo{Spread probabilistic automata out into running example. B: done}

% \begin{example}
%   The \emph{machine functor} $GX = A\times X^B$ can be represented as
%   the polynomial functor $\constfun{A}\times \sprod_{b\in B} \Id$.
%   Applying the construction above results in the set of evaluation
%   functions
%   $\Lambda^G = \{\ev_a \mid a\in A\} \cup \{\ev_b \mid b\in B\}$,
%   where \todo{P: what do we want for $\ev_a$?}
%   \begin{equation*}
%     \ev_a (a', g) = \dots
%     \qquad\text{and}\qquad
%     \ev_b (a', g) = g(b).
%   \end{equation*}
% \end{example}

%The proof techniques elaborated in \Cref{sec:intro-up-to} apply
%and we explain how they can be exploited via an example.

%\subsection{Transition Systems with Exceptions}

We consider a case study involving the coproduct, in particular
the polynomial functor $F = [0,1] + (-)^A$ and the monad $T = \pfun$
with evaluation map $\sup$. In a coalgebra of type $c\colon X \to FTX$,
a state can either perform transitions or terminate with some output
value taken from the interval $[0, 1]$; in applications this value
could for example be considered as the severity of the error
encountered upon terminating a computation. Hence the (directed)
distance of two states $x,y$ can intuitively be interpreted as
measuring how much worse the errors reached from a state $y$ are
compared to the errors from $x$.

% To our knowledge, behavioural distances for such an example have not
% yet been considered.\todo{B: is this true? Remark Sebastian}

Note that a state can also terminate without an exception by
transitioning to the empty set. Due to the asymmetry in the
distributive law for the coproduct, a state $X_0\in\pfun(X)$ in the
determinized automaton $c^\#$ will throw an exception as long as one
of the elements $x\in X_0$ throws an exception
($c[X_0]\cap[0,1]\neq\emptyset$). In this case,
$c^\#(X_0)=\sup (c[X_0]\cap[0,1])$ and $X_0$ performs a transition if all
states in $X_0$ do so in the original coalgebra.

The behavioural distance on $TX$ obtained as the greatest fixpoint (in
the quantale order) of $\mathrm{beh}$ can be characterized as follows:
for a set of states $X_0\subseteq X$ and a word $w\in A^*$ let
$\emph{ec}(X_0,w)$ be the length of the least prefix which causes an
exception when starting in $X_0$ (undefined if there are no
exceptions) and let $E(X_0,w)\subseteq [0,1]$ be the set of exception
values reached by that prefix. We define a distance
$d^E_w\colon \mathcal{P}X\times \mathcal{P}X\to [0,1]$ as
$d^E_w(X_1,X_2) = 0$ if $\emph{ec}(X_2,w)$ is undefined,
$d^E_w(X_1,X_2) = 1$ if $\emph{ec}(X_1,w)$ is undefined and
$\emph{ec}(X_2,w)$ defined. In the case where both are defined we set
$d^E_w(X_1,X_2) = \sup E(X_2,w)\ominus \sup E(X_1,w)$ if
$\emph{ec}(X_1,w)=\emph{ec}(X_2,w)$, $d^E_w(X_1,X_2) = 1$ if
$\emph{ec}(X_1,w)>\emph{ec}(X_2,w)$ and $d^E_w(X_1,X_2) = 0$ if
$\emph{ec}(X_1,w)<\emph{ec}(X_2,w)$. Then it can be shown that for
$X_1,X_2\subseteq X$:
\[ \nu \mathrm{beh}(X_1,X_2) = \sup\nolimits_{w\in A^*} d_w^E(X_1,X_2) \]

\todo{B: move this example to appendix?}As a concrete example we take
the label set $A = \{a, b\}$ and the coalgebra $c$ given below, which
is inspired by a similar example in \cite{bp:checking-nfa-equiv}:
\begin{center}
\begin{tikzpicture}[scale=0.9,x=1.6cm, y=1.3cm]
\foreach \x [count=\i] in {x,y,z}{
\node[draw, circle] (\x0) at (0,-\i) {$\x_0$};
\node[draw,circle] (\x1) at (1,-\i)  {$\x_1$};
\node[draw,circle] (\x2) at (2,-\i)  {$\x_2$};
\node (\x3) at (3,-\i)  {$\dots$};
}

\node[draw,circle,label=0:$\nicefrac{1}{4}$] (x4) at (4,-1)  {$x_n$};
\node[draw,circle,label=0:$\nicefrac{1}{3}$] (y4) at (4,-2)  {$y_n$};
\node[draw,circle,label=0:$\nicefrac{1}{2}$] (z4) at (4,-3)  {$z_n$};

\draw[->] (x0) edge node[above]{$a$} (x1) (x1) edge node[above]{$a, b$} (x2) (x2) edge node[above]{$a, b$} (x3) (x3) edge node[above]{$a, b$} (x4);
\draw[->] (y0) edge node[above]{$b$}(y1) (y1) edge node[above]{$a, b$} (y2) (y2) edge node[above]{$a, b$} (y3) (y3) edge node[above]{$a, b$} (y4);
\draw[->] (z0) edge  node[above]{$a,b$} (z1) (z1) edge node[above]{$a, b$} (z2) (z2) edge node[above]{$a, b$} (z3) (z3) edge node[above]{$a, b$} (z4);

\draw[->] (x0) edge[loop left] node{$a, b$} (x0);
\draw[->] (y0) edge[loop left] node{$a, b$} (y0);
\draw[->] (z0) edge[loop left] node{$a, b$} (z0);
\end{tikzpicture}
\end{center}

Here, the outputs of the final states are $c(x_n) = \nicefrac{1}{4}$,
$c(y_n) = \nicefrac{1}{3}$ and $c(z_n) = \nicefrac{1}{2}$, as
indicated. It holds that
$\nu\mathrm{beh}(\{x_0, y_0\}, \{z_0\}) =
\nicefrac{1}{4}$. Intuitively this is true, since the largest distance
is achieved if we follow a path from $x_0$ where $a$ is the $n$-last
letter,\todo{Rephrase. B: done} yielding exception value
$\nicefrac{1}{4}$, while the same path results in the value
$\nicefrac{1}{2}$ from $z_0$, hence we obtain distance
$\nicefrac{1}{2}\ominus \nicefrac{1}{4} = \nicefrac{1}{4}$.

Note that the determinization of the transition system above is of
exponential size and the same holds for a representation of a
post-fixpoint for witnessing an upper bound for behavioural distance.
We construct a $\VV$-valued relation $d$ of linear size witnessing
that $\nu\mathrm{beh}(\{x_0, y_0\}, \{z_0\}) \le \nicefrac{1}{4}$ via
up-to techniques. 
To this end let $d$ be defined by
\[
d(\{x_0, y_0\}, \{z_0\}) = \nicefrac{1}{4} \qquad 
d(\{x_i\}, \{z_i\}) = \nicefrac{1}{4} \qquad
d(\{y_i\}, \{z_i\}) = \nicefrac{1}{6}
\]
and distance~$1$ for all other arguments.

It suffices to show that $d$ is a pre-fixed point of $\mathrm{beh}$
up-to $u$ (wrt. $\le$). We can use the property that
$u(d)(X_1 \cup X_2, Y_1 \cup Y_2) \le \max\{d(X_1, Y_1), d(X_2,
Y_2)\}$ (\full{cf. \Cref{cor:metric-congruence} in the
  appendix}\short{see
  \cite{dgkknrw:behavioural-metrics-compositionality-arxiv} for more
  details}). Now the claim follows from unfolding the fixpoint equation
  by considering $a$- and $b$-transitions:
\begin{align*}
  \mathrm{beh}(u(d))(\{x_0, y_0\}, \{z_0\}) &= \max\{u(d)(\{x_0, x_1, y_0\}, \{z_0, z_1\}), u(d)(\{x_0, y_0, y_1\}, \{z_0, z_1\})\}\\
  &\le \max\{d(\{x_0, y_0\}, \{z_0\}), d(\{x_1\}, \{z_1\}), d(\{y_1\}, \{z_1\})\}
  = \nicefrac{1}{4}
\end{align*}
The arguments for $d(\{x_i\}, \{z_i\})$, $d(\{y_i\}, \{z_i\})$ are
similar, concluding the proof.

\section{Conclusion}
\label{sec:conclusion}

\subparagraph*{Related work.} While the notion of Kantorovich distance
on probability distributions is much older, Kantorovich liftings in a
categorical framework have first been introduced in
\cite{bbkk:coalgebraic-behavioral-metrics} and have since been
generalized, for instance to codensity liftings
\cite{kwrk:composing-codensity-bisimulations} or to lifting fuzzy lax
extensions \cite{ws:metrics-fuzzy-lax}.

A coalgebraic theory of up-to techniques was presented in
\cite{bppr:general-coinduction-up-to} and has been instantiated to the
setting of coalgebraic behavioural metrics in
\cite{bkp:up-to-behavioural-metrics-fibrations-journal}. The latter
paper concentrated on Wasserstein liftings, which leads to a
significantly different underlying theory. Furthermore, Wasserstein
liftings are somewhat restricted, since they rely only on a single
evaluation map and on couplings (which sometimes do not exist, making
it difficult to define more fine-grained metrics). We are not aware of
a way to handle the two case studies directly in the Wasserstein
approach.

Setting up the fibred adjunction (\Cref{sec:adjunction}) and the
definition of the Kantorovich lifting (\Cref{sec:def-kantorovich})
has some overlap to \cite{bgkmfsw:logics-coalgebra-adjunction} and
the recent \cite{kwrk:composing-codensity-bisimulations}. Our
focus is primarily on showing fibredness (naturality) via a quantalic
version of the extension theorem.

The aim of \cite{kwrk:composing-codensity-bisimulations} is on
combining behavioural conformance via algebraic operations, which is
different than our notion of compositionality via functor
liftings. There is some similarity in the aims of both papers, namely
the lifting of distributive laws and the motivation to study up-to
techniques. From our point of view, the obtained results are largely
orthogonal: while the focus of
\cite{kwrk:composing-codensity-bisimulations} is on providing $n$-ary
operations for composing conformances and games and it provides a more
general high-level account, we focus concretely on compositionality of
functor liftings (studying counterexamples and treating the special
case of polynomial functors), giving concrete recipes for lifting
distributive laws and applying the results to proving upper bounds via
up-to techniques.

Our up-to techniques are a form of up-to convex contextual closure:
here they arise as specific instances of a general construction, but
they have already been investigated in the Wasserstein setting
\cite{bkp:up-to-behavioural-metrics-fibrations-journal} and earlier in
\cite{cgpx:generalized-bisim-metrics}. Similar constructions are
studied in \cite{mpp:quantitative-algebraic-reasoning}.

%\medskip

\subparagraph*{Future work.} Our aim is to better understand the
metric congruence results employed in the case studies, comparing them
with the similar proof rules in
\cite{mpp:quantitative-algebraic-reasoning}. Compositionality of
functor liftings fails in important cases
(cf. \Cref{expl:non-compositional}), which could be fixed by using
different sets of evaluation maps such as the Moss liftings in
\cite{ws:metrics-fuzzy-lax}. We also plan to study case studies
involving the convex powerset functor
\cite{bss:power-convex-algebras}. Finally, we want to develop witness
generation methods, by constructing suitable post-fixpoints up-to
on-the-fly, similar to \cite{bblm:on-the-fly-exact-journal}.

\bibliographystyle{plainurl}
\bibliography{tarball/references-bib2doi}

\end{document}